\newtheorem{theorem}{Theorem}[section]
\newtheorem{proposition}[theorem]{Proposition}
\newtheorem{problem}[theorem]{Problem}
\newtheorem{assumption}[theorem]{Assumption}
\newtheorem{example}[theorem]{Example}
\newtheorem{lemma}[theorem]{Lemma}
\newtheorem{remark}[theorem]{Remark}
\newtheorem{corollary}[theorem]{Corollary}
\newtheorem{algorithm}[theorem]{Algorithm}
\title{\LARGE \bf
Stabilization of Systems with Asynchronous 
Sensors and Controllers\thanks{}
}
\author{Masashi Wakaiki, Kunihisa Okano, and Jo\~ao~P.~Hespanha
\thanks{
This paper was partially
presented at the 2015 American Control Conference,
July 1-3, 2015, the USA.
This material is based upon work supported by 
the National Science
Foundation under Grant No. CNS-1329650.
M. Wakaiki acknowledges Murata Overseas Scholarship Foundation
for the support of this work.
K. Okano is supported by 
JSPS Postdoctoral Fellowships for Research Abroad.
Corresponding author M. Wakaiki. Tel. +1 805 893 7785. 
Fax +1 805 893 3262.
}
\thanks{
M. Wakaiki is with the Department of Electrical and Electronic Engineering, Chiba University,
Chiba, 263-8522, Japan (e-mail:{\tt \small wakaiki@chiba-u.jp}).
K. Okano is with the Graduate School of Natural Science and Technology,
Okayama University, Okayama, 700-8530, Japan 
(e-mail:{\tt \small kokano@okayama-u.ac.jp}).
J.~P.~Hespanha is with the Center for Control,
Dynamical-systems and Computation (CCDC), University of California,
Santa Barbara, CA 93106-9560 USA
(e-mail:{\tt \small hespanha@ece.ucsb.edu}).
}%
}
\begin{document}

\maketitle
\thispagestyle{empty}
\pagestyle{empty}

%%%%%%%%%%%%%%%%%%%%%%%%%%%%%%%%%%%%%%%%%%%%%%%%%%%%%%%%%%%%%%%%%%%%%%%%%%%%%%%%
\begin{abstract}
We study the stabilization of networked control systems with
asynchronous sensors and controllers. 
Offsets between the sensor and controller clocks
are unknown and
modeled as parametric uncertainty.
First we consider multi-input linear systems and
provide a sufficient condition 
for the existence of linear time-invariant controllers that are 
capable of stabilizing the closed-loop system for 
every clock offset in a given range of admissible values. 
For first-order systems, we next
obtain the maximum length of the offset range
for which the system can be stabilized 
by a single controller. 
Finally,
this bound is compared with the offset bounds that would be allowed if
we restricted our attention to static output feedback controllers.
\end{abstract}

%%%%%%%%%%%%%%%%%%%%%%%%%%%%%%%%%%%%%%%%%%%%%%%%%%%%%%%%%%%%%%%%%%%%%%%%%%%%%%%%
\section{Introduction}
In networked and embedded control systems,
the outputs of plants are often sampled in a nonperiodic fashion and sent to
controllers with time-varying delays.
To address robust control with such imperfections,
various techniques have been developed, for example, the input-delay approach \cite{Fridman2004, Mirkin2007},
the gridding approach \cite{Fujioka2009, Oishi2010, Donkers2011},
and the impulsive systems approach based on
Lyapunov functionals \cite{Naghshtabrizi2010}, on
looped functionals \cite{Briat2012}, and on 
clock-dependent Lyapunov functions \cite{Briat2013};
see also the surveys \cite{Hespanha2007, Hetel2017}.
In contrast to the references mentioned above, 
here we assume that time-stamps are used to provide the
controller with information about the sampling times and the communication delays
incurred by each measurement.
In this approach,
sensors send measurements to controllers
together with time-stamps,
and the controllers
exploit this information to mitigate the effect of variable delays and sampling periods
\cite{Graham2004, Nakamura2008, Garcia2014}.
However, when the local clocks at
the sensors and at the controllers are not synchronized, 
the time-stamps and 
the true sampling instants 
do not match.
Protocols to establish synchronization have been actively
studied as surveyed in \cite{Rhee2009}, 
and synchronization by the global positioning system
(GPS) or radio clocks has been utilized in some systems.
Nevertheless, synchronizing clocks over networks 
has fundamental limits~\cite{Freris2011}, and
a recent study \cite{Jiang2013} 
has shown that synchronization
based on GPS signals is vulnerable against attacks.
%We therefore address control with uncertainty in time.

In this paper, we study the stabilization problem of
systems with asynchronous sensing and control.
We assume that
the controller can use the time-stamps but does not know
the offset between
the sensor and controller clocks, but
we do assume that this offset is essentially constant over the time
scales of interest.
Our objective is to find linear time-invariant (LTI) controllers 
that achieve closed-loop stability
for every clock offset in a given range.

%the time synchronization technique in \cite{Hao2014}
%can synchronize clock frequencies　with great accuracy
%by exploiting the existing Wi-Fi infrastructure,
%and that
%clock frequencies can be synchronized perfectly if unknown
%transmission delays are constant \cite{Freris2011}.
% We therefore consider the constant offset case.

We formulate the stabilization of systems with clock offsets as the problem
of stabilizing systems with parametric uncertainty,
which can be regarded as the
simultaneous stabilization of a family of plants, as 
studied in \cite[Sec. 5.4]{vidyasagar1985} and
\cite{Vidyasagar1982}.
However, we had to overcome a few technical difficulties that distinguish the problem considered here from previously published results:

{\it Infinitely many plants:}
We consider a family of plant models that is indexed by 
a continuous-valued parameter.
Such a family
includes infinitely many plants, but
% Moreover, the parameter appears in a matrix-exponential manner.
%	On the other hand, the simultaneous stabilization of three plants is
%	rationally undeciable~\cite{Blondel1993}.
the approaches for simultaneous stabilization 
e.g., in \cite{Shi2009}
exploit the property that the number of plant models is finite.

{\it Nonlinearity of the uncertain parameter:}
In this work, 
the uncertain parameter appears in a non-linear form.
Therefore, it is not suitable to use the techniques 
based on linear matrix inequalities (LMIs) in 
\cite{Oliveira1999}
for the robust stabilization
of systems with polytopic uncertainties.
Although the robust stability analysis based on continuous paths of
systems with respect to the $\nu$-gap metric
was developed in \cite{Cantoni2012},
controller designs based on this approach have not been fully investigated.

{\it Common unstable poles and zeros:}
Earlier studies on simultaneous stabilization consider
a restricted class of plants.
For example, 
the sufficient condition 
in \cite{Blondel1993Suf}
is obtained for a family of plants with
no common unstable zeros or poles. 
The set of plants in \cite{Maeda1984} has
common unstable zeros (or poles) but
all the plants are stable (or minimum-phase).
These assumptions are not satisfied for the systems 
in the present paper.
%	The systems we consider have common unstable poles and zeros, 
%	and furthermore some
%	of them have other unstable zeros. 

We make the following technical contributions for 
multi-input systems and first-order systems:
First we consider multi-input systems and 
obtain a sufficient condition for stabilization with asynchronous sensing and control.
We construct a stabilizing controller from
the solution of an appropriately defined  $\mathcal{H}^{\infty}$ control problem.
The above mentioned difficulties found in 
the simultaneous stabilization problem we consider 
is circumvented by exploiting geometric properties on $\mathcal{H}^{\infty}$.
For first-order systems,
we obtain
an explicit formula for the exact bound 
on the clock offset that can be allowed for stability.
This result is based on the stabilization of
interval systems \cite{Ghosh1988, Olbrot1994}, 
to which our problem can be reduced for first-order plants.
We start by formulating the problem in the context of state feedback without
disturbances and noise, but we show in Section 3.2 that
the above results also apply for output feedback with disturbances and 
noise.

%Systems with time-delays and non-periodic sampling  have
%been extensively investigated. 
%One particular research line aims at studying
%stability analysis and feedback-gain design 
%via a Lyapunov-Krasovskii 
%functionals approach \cite{Liu2012, Seuret2014CDC}
%and
%a convex over-approximations approach
%\cite{Donkers2011, Loon2013}.
%Recently, output feedback stabilization through 
%continuous-discrete observers has also developed in \cite{Mazenc2016}.
%The difference between these previous works and this study is
%that we compensate transmission delays and non-periodic sampling
%by time-stamps and that we take into account 
%clock offsets between sensors and
%controllers.

The authors in the previous study \cite{Okano2015}
have considered systems with time-varying clock offsets
and have
proposed a stabilization method with causal controllers, 
based on the analysis of data rate limitations in quantized control.
The stability analysis and the $\mathcal{L}^2$-gain analysis of systems with variable 
clock offsets have been investigated in \cite{WakaikiCDC2015} and
\cite{Wakaiki2016ACC}, respectively. 
The major difference with respect to those studies is that
here we consider only constant offsets but
design stabilizing LTI controllers. 
This paper is based on the conference paper \cite{Wakaiki2015ACC}, but
here we extend the preliminary results for single-input systems to
the multi-input case.

The remainder of the paper is organized as follows.
Section 2 introduces the closed-loop system we consider 
and presents the problem formulation.
Section 3 is devoted to the discretization of the closed-loop system.
In Section~4, we obtain
a sufficient condition for the stabilizability
of general-order systems.
In Section 5,
we derive the exact bound
on the permissible clock offset for first-order systems. 
In Section 6, we discuss
stabilizability with static controllers and
the comparison of 
the offset bounds obtained for LTI controllers 
and static controllers.

{\it Notation and definitions:~}
We denote by $\mathbb{Z}_+$ the set of non-negative integers.
The symbols 
$\mathbb{D}$, $\bar{\mathbb{D}}$, and $\mathbb{T}$ denote
the open unit disc $\{ z \in \mathbb{C}:~ |z| < 1\}$,
the closed unit disc $\{ z \in \mathbb{C}:~ |z| \leq 1\}$, and
the unit circle $\{ z \in \mathbb{C}:~ |z| = 1\}$, respectively.
We denote by $\mathbb{D}^c$ the complement of the open unit disc
$\{ z \in \mathbb{C}:~ |z| \geq  1\}$.

A square matrix $F$ is said to be {\em Schur stable}
if all its eigenvalues lie in the unit disc $\mathbb{D}$.
We say that a discrete-time LTI system $\xi_{k+1} = F\xi_k + Gu_k,~y_k = H \xi_k$
is stabilizable (detectable) if 
there exists a matrix $K$ ($L$) such that $F-GK$ ($F-LH$) is Schur stable.
We also use the terminology $(F, G)$ is 
stabilizable (respectively, $(F, H)$ is detectable) to denote
this same concept.

We denote by $\mathcal{RH}^{\infty}$ the space of all bounded holomorphic
real-rational
functions in $\mathbb{D}$.
%$\mathcal{RH}^{\infty}$ denotes the subspace of $\mathcal{H}^{\infty}$ consisting
%of all real-rational functions. 
The field of fractions of 
$\mathcal{RH}^{\infty}$ is denoted by $\mathcal{RF}^{\infty}$.
For a commutative ring $R$, 
$\mathbf{M}(R)$ denotes the set of
matrices with entries in $R$, of any order.
For $M \in \mathbf{M}(\mathbb{C})$,
$\|M\|$ denotes the induced 2-norm.
For $G \in \mathbf{M}(\mathcal{RH}^{\infty})$,
the $\mathcal{RH}^{\infty}$-norm is defined as $\|G\|_{\infty} = 
\sup_{z\in\mathbb{D}} \|G(z)\|$.
For $G = \begin{bmatrix}
G_{11} &\quad  G_{12} \\
G_{21} &\quad  G_{22}
\end{bmatrix} \in \mathbf{M}(\mathcal{RF}^{\infty})$ and
$Q\in \mathbf{M}(\mathcal{RF}^{\infty})$,
we define a lower linear fractional transformation of $G$ and $Q$
as
$\mathcal{F}_{\ell}(G, Q) := G_{11} + G_{12}Q(I-G_{22}Q)^{-1}G_{21}$.

A pair $(N, D)$ in $\mathbf{M}(\mathcal{RH}^{\infty})$ is said to be 
{\em right coprime} if  the Bezout identity 
%\begin{equation}
%	\label{eq:left_bezout}
$	X N + Y D=I $
%\end{equation}
holds for some $X$, $Y \in \mathbf{M}(\mathcal{RH}^{\infty})$.
$P \in \mathbf{M}(\mathcal{RF}^{\infty})$ admits a {\em right coprime
	factorization} if there exist $D$, $N \in \mathbf{M}(\mathcal{RH}^{\infty})$ 
such that 
$P = ND^{-1}$ and the pair $(N,D)$ is right coprime.
Similarly, a pair $(\tilde D,\tilde N)$ 
in $\mathbf{M}(\mathcal{RH}^{\infty})$ 
is {\em left coprime} if the Bezout identity 
%\begin{equation}
%	\label{eq:left_bezout}
$	\tilde N \tilde X + \tilde D \tilde Y=I$
%\end{equation}
holds for some $\tilde X$, $\tilde Y \in \mathbf{M}(\mathcal{RH}^{\infty})$.
$P \in \mathbf{M}(\mathcal{RF}^{\infty})$ admits a {\em left coprime
	factorization} if there exist $\tilde D$, $\tilde N \in
\mathbf{M}(\mathcal{RH}^{\infty})$ 
such that $P = \tilde D^{-1} \tilde N$ and the pair 
$(\tilde D,\tilde N)$ is left coprime.
If $P$ is a scalar-valued function, then we use the expressions {\em coprime} and
{\em coprime factorization}.

\section{Problem Statement}
Consider the following LTI plant:
\begin{equation}
\label{eq:original_plant}
\Sigma_P:~
\dot x(t) = Ax(t) + Bu(t),
\end{equation}
where $x(t) \in \mathbb{R}^n$ and
$u(t) \in \mathbb{R}^m$ are the state and the input of the plant, respectively.
As shown in 
Fig.~\ref{fig:CLS},
this plant is connected 
through a sampler and a zero-order hold (ZOH)
to a time-stamp aware estimator and a controller,
which will be described soon.
\begin{figure}[tb]
	\centering
	\includegraphics[width = 8cm]{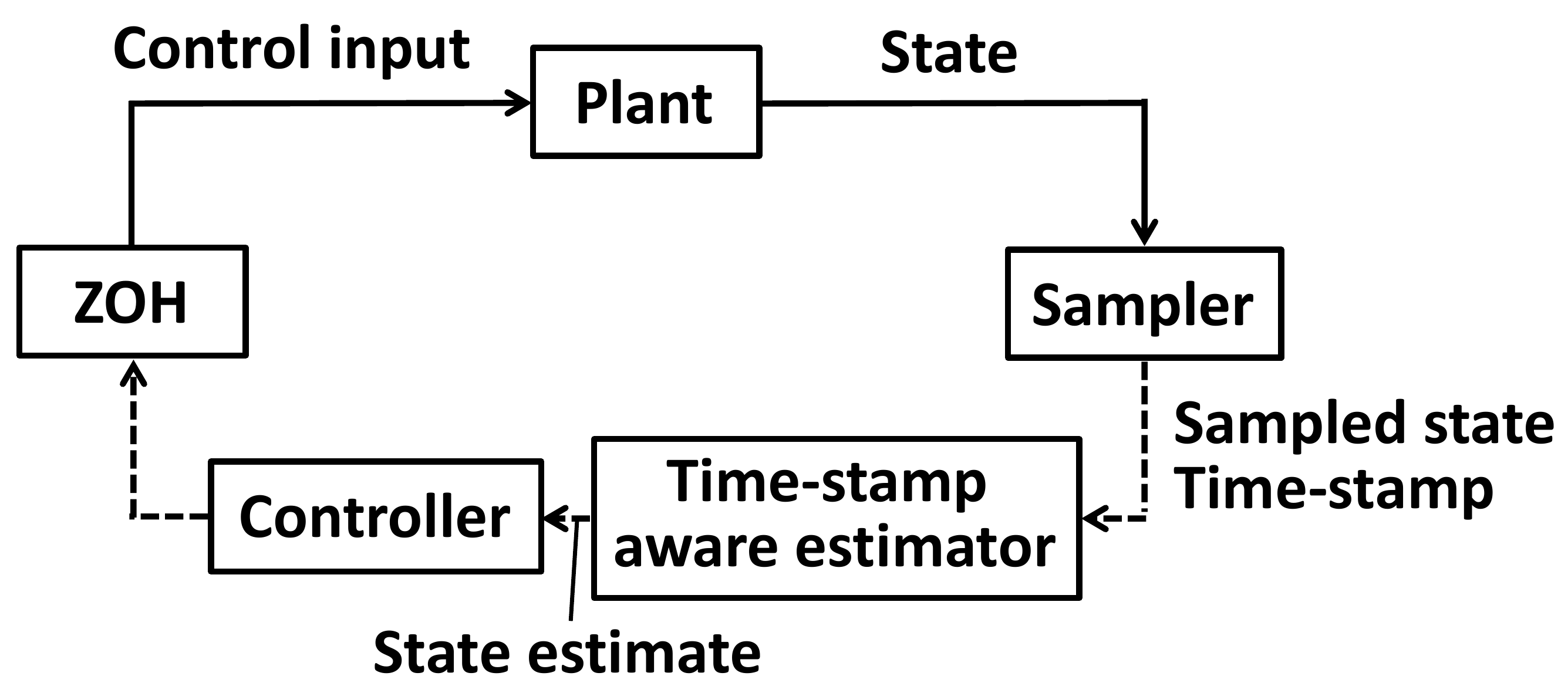}
	\caption{Closed-loop system with a time-stamp aware estimator.}
	\label{fig:CLS}
\end{figure}

Let $s_1,s_2,\dots$ be sampling instants
from the perspective of the controller clock.
A sensor measures the state $x(s_k)$ and sends it to a controller
together with a time-stamp.
However, since the sensor and the controller may not be synchronized,
the time-stamp determined by the sensor 
typically includes an 
unknown offset with respect to the controller clock. 
In this paper, we assume that the clock offset is constant. 
Although
clock properties are affected by environment such as temperature
and humidity,
the change of such properties is slow 
for the time scales of interest.
Furthermore, the difference of 
clock frequencies can be ignored. 
This is justified by noting that 
time synchronization techniques, like the one proposed 
in \cite{He2014},
can achieve asymptotic convergence of the clock frequencies (in the mean-square sense),
even in the presence of random network delays.
We thus assume that
the time-stamp $\hat s_k$ reported by the sensor is given by
\begin{equation}
\label{eq:offset_eq}
\hat s_k = s_k + \Delta \qquad (k \in \mathbb{N})
\end{equation}
for some unknown constant $\Delta \in \mathbb{R}$.

Let $h > 0$ be the update period of the ZOH.
The control signal $u(t)$ 
is assumed to be piecewise constant and updated periodically at times $t_k=k h$ ($k\in \mathbb{N}$) with values $u_k$ computed by the controller:
%\[
%u(t)=u_k \qquad  (t_k \leq t < t_{k+1}).
%\]
$u(t) = u_k$ for $t \in [t_k,t_{k+1})$.
We place a basic assumption for stabilization of sampled-data systems.
\begin{assumption}
	\label{ass:stabilizability}
	{\em(}Stabilizability and non-pathological control update{\em)}
	{\em
		%Fix $h > 0$. 
		The plant $(A,B)$ is stabilizable and the update period $h$ is non-pathological,
		that is, $(\lambda_p - \lambda_q) h \not= 2\pi j \ell$ ($\ell = \pm 1, \pm 2,\dots$)
		for each pair $(\lambda_p,\lambda_q)$ of eigenvalues of $A$.
	}
\end{assumption}

While the ZOH updates the control signal $u(t)$ periodically, 
the true sampling times $s_k$ and the reported sampling times 
$\hat s_k$ may not be periodic. 
However, we do assume that both $s_k$ and $\hat s_k$ do not fall 
behind $t_k$ by more than the ZOH update period $h$.
This assumption is formally stated as follows.
\begin{assumption}
	\label{ass:sample}
	{\em(}Bounded clock offset{\em)}
	{\em
		%Fix $h > 0$. 
		For every $k \in \mathbb{Z}_+$,
		%$t_{k+1} - t_{k} = h$ and
		$s_k, \hat s_k \in[ t_{k}, t_{k+1})$. 
	}
\end{assumption}
This assumption implies that
the clock offset $\Delta$ is smaller than the control update period $h$,
which holds 
in most mechatronics systems. In fact,
control update periods 
for mechatronics systems
generally take values from 100 $\mu$s to 10 ms,
while
recent clock
synchronization algorithms such as the IEEE 1588 Precision Time Protocol (PTP)
\cite{PTP}
make clock offsets smaller than a few tens of microseconds.

\begin{figure}[tb]
	\centering
	\includegraphics[width = 8.7cm,clip]{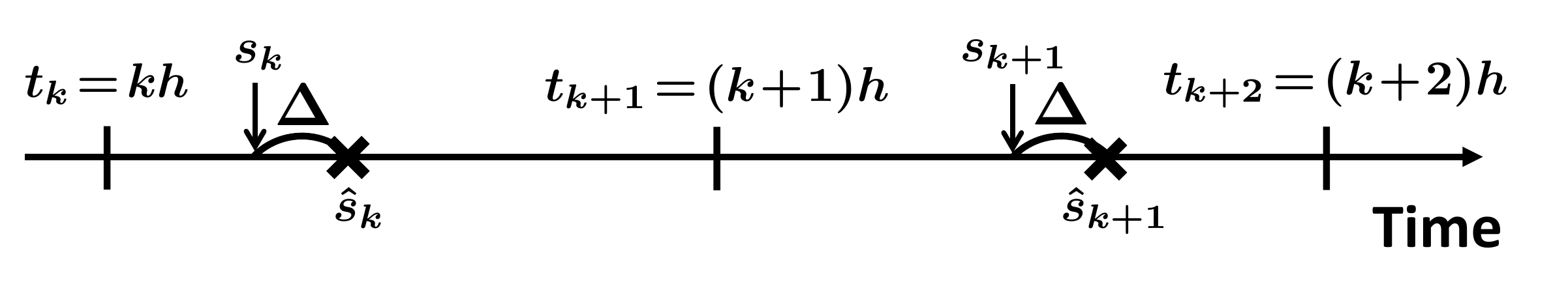}
	\caption{Sampling instants $s_k$, reported time-stamps $\hat s_k$, 
		and updating instants $t_k$ of the zero-order hold.}
	\label{fig:clockmismatch}
\end{figure}

Fig.~\ref{fig:clockmismatch} shows the timing diagram
of the sampling instants $s_k$,
the reported time-stamps $\hat s_k$, 
and updating instants $t_k$ of the control inputs.

The controller side is comprised of a time-stamp aware estimator and a
controller
as in the model-based or emulation-based control of 
networked control systems \cite{Garcia2014}.
The time-stamp aware estimator generates 
the state estimate $\hat{x}(t_{k+1}) \in \mathbb{R}^n$
from the data $(x(s_k), \hat s_k)$ according to the following dynamics:
\begin{equation}
\label{eq:estimator}
\Sigma_E:~
\begin{cases}
\dot{\hat x}(t) = A\hat{x}(t) + Bu(t)
&\qquad(t_k < t \leq t_{k+1})	 \\
\hat{x}(\hat s_k) = x(s_k),
&\qquad(k \in \mathbb{Z}_+).
\end{cases}
\end{equation} 
Note that if the time-stamp is correct, i.e., 
$s_k=\hat s_k$, then this estimator consistently produces 
$\hat x(t) = x(t)$ for all $t$, perfectly compensating
transmission delays.
Time-stamp aware estimators have been used to compensate for
network-induced imperfections, e.g., in \cite{Graham2004,Nakamura2008,Garcia2014}.

The controller is a discrete-time LTI system and 
generates the control input $u_k$ based on the state estimate $\hat x_k := \hat x(t_k)$:
\begin{equation}
\label{eq:controller}
\Sigma_C:~
\begin{cases}
\zeta_{k+1} = A_c\zeta_{k} + B_c\hat x_k \\
u_k = C_c \zeta_{k} + D_c \hat x_k,
\end{cases}
\end{equation} 
where $\zeta_{k} \in \mathbb{R}^{n_c}$ is the state of the controller.

The objective of the present paper is to find
a discrete-time LTI controller $\Sigma_C$ as in \eqref{eq:controller} 
that achieves closed-loop stability for every
clock offset in a given range of admissible values.
Specifically, we want to solve the following problem:
\begin{problem}
	\label{problem:clockoffset}
	{\it
		Given an offset interval $[\underline \Delta, \overline \Delta]$,
		determine if there exists a controller $\Sigma_C$ as in \eqref{eq:controller} such that
		$x(t)$, $\hat x(t) \to 0$ as $t \to \infty$ and
		$\zeta_{k} \to 0$ as $k \to \infty$ 
		for every $\Delta \in [\underline \Delta, \overline \Delta]$
		and for every initial states $x(0)$ and $\zeta_0$.
		Furthermore, if one exists, find such a controller $\Sigma_C$.
	}
\end{problem}

\section{Discretization of the Closed-loop System}
To solve Problem \ref{problem:clockoffset},
we discretize the system comprised of 
the plant $\Sigma_P$, the estimator $\Sigma_E$, the ZOH, and the sampler.
In this section, we obtain a realization for the discretized system and 
describe its basic properties related to stability, stabilizability, and detectability.
Moreover, we extend the discretized system to scenarios with
disturbances/noise and output feedback.
\subsection{Discretized system and its basic properties}
The following lemma provides a realization for the discretized system:
\begin{lemma}
	\label{lem:discretization}
	Define 
	\begin{equation*}
	\xi_k :=  
	\begin{bmatrix}
	x(t_k) - \hat x(t_k) \\
	\hat x(t_k)
	\end{bmatrix}.
	\end{equation*}
	The dynamics of 
	the discretized system $\Sigma_d$ 
	comprised of the plant $\Sigma_P$, the estimator $\Sigma_E$, 
	the ZOH, and the sampler
	can be described by the following equations:
	\begin{equation}
	\label{eq:extended_sys}
	\Sigma_d:~
	\xi_{k+1} = F_{\Delta} \xi_k + G_{\Delta} u_k,\qquad \eta_k = H_{\Delta}\xi_k,
	\end{equation}
	where $\Lambda := e^{Ah}$,
	$\Theta := e^{-A\Delta} - I$, and
	\begin{align}
	F_{\Delta} &\hspace{-1pt}:=\hspace{-1pt}
	\begin{bmatrix}
	-\Lambda \Theta &\quad  -\Lambda\Theta \\
	\Lambda (I + \Theta)  &\quad  \Lambda (I + \Theta) 
	\end{bmatrix},\quad
	H_{\Delta}  \hspace{-1pt}:=\hspace{-1pt}
	\begin{bmatrix}
	0 &\quad I
	\end{bmatrix}
	\notag \\
	G_{\Delta} &\hspace{-1pt}:=\hspace{-1pt}
	\begin{bmatrix}
	\Lambda \hspace{-1pt}
	\left(
	\int^h_0 e^{-A\tau} d\tau
	\hspace{-1pt}-\hspace{-1pt}
	(I \hspace{-1pt}+\hspace{-1pt} \Theta) \hspace{-1pt}
	\int^{h-\Delta}_0 e^{-A\tau} d\tau
	\right) \hspace{-1pt}
	B  \\
	\Lambda (I + \Theta)   \int^{h-\Delta}_0 e^{-A\tau} d\tau B 
	\end{bmatrix}. \label{eq:FGH_def} 
	\end{align}
	%We show the derivation of \eqref{eq:FGH_def} in Appendix A.
\end{lemma}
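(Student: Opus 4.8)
The plan is to integrate the continuous-time dynamics of the plant $\Sigma_P$ and of the estimator $\Sigma_E$ over one control period $[t_k,t_{k+1}]$ and then eliminate the unknown true sampling instant $s_k$, retaining only the offset $\Delta$. First I would discretize the plant by itself: since $u(\cdot)$ is held at $u_k$ on $[t_k,t_{k+1})$, the usual zero-order-hold formula gives $x(t_{k+1}) = \Lambda x(t_k) + \left(\int_0^h e^{A\tau}\,d\tau\right)Bu_k$ with $\Lambda = e^{Ah}$, an identity that involves neither $s_k$ nor $\Delta$.

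Next I would propagate the estimate. Put $\rho_k := s_k - t_k$, so that $\rho_k \in [0,h)$ and, by \eqref{eq:offset_eq}, $\hat s_k - t_k = \rho_k + \Delta$. By Assumption~\ref{ass:sample} we have $\hat s_k \in [t_k,t_{k+1})$, hence $t_{k+1} - \hat s_k = h - \rho_k - \Delta$, and on $(\hat s_k,t_{k+1}]$ the estimator \eqref{eq:estimator} obeys $\dot{\hat x} = A\hat x + Bu_k$ starting from the reset value $\hat x(\hat s_k) = x(s_k)$; integrating gives
\begin{equation*}
\hat x(t_{k+1}) = e^{A(h-\rho_k-\Delta)}x(s_k) + \left(\int_0^{h-\rho_k-\Delta} e^{A\beta}\,d\beta\right)Bu_k .
\end{equation*}
Since $s_k \in [t_k,t_{k+1})$ as well, I would substitute $x(s_k) = e^{A\rho_k}x(t_k) + \left(\int_0^{\rho_k} e^{A\tau}\,d\tau\right)Bu_k$. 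In the resulting state term $e^{A(h-\rho_k-\Delta)}e^{A\rho_k} = e^{A(h-\Delta)}$, and in the input term the two integrals over the adjacent subintervals $[0,\,h-\rho_k-\Delta]$ and $[h-\rho_k-\Delta,\,h-\Delta]$ combine into a single integral over $[0,\,h-\Delta]$; thus $\rho_k$ cancels and
\begin{equation*}
\hat x(t_{k+1}) = e^{A(h-\Delta)}x(t_k) + \left(\int_0^{h-\Delta}e^{A\tau}\,d\tau\right)Bu_k .
\end{equation*}
This cancellation of $\rho_k$ — precisely what makes the discretized system $\Sigma_d$ well defined even though the individual true sampling times $s_k$ are unknown — is the one step that needs some care; once it is seen, everything else is routine bookkeeping with matrix exponentials.

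It then remains to assemble $\xi_{k+1}$ and match the coefficients. Writing $x(t_k) = \bigl(x(t_k)-\hat x(t_k)\bigr)+\hat x(t_k)$, the two formulas above express $x(t_{k+1})-\hat x(t_{k+1})$ and $\hat x(t_{k+1})$ as linear combinations of $\xi_k$ and $u_k$. With $\Theta = e^{-A\Delta}-I$ one has $\Lambda - e^{A(h-\Delta)} = -\Lambda\Theta$ and $e^{A(h-\Delta)} = \Lambda(I+\Theta)$, and, since every matrix appearing is a function of $A$ so that the factors commute, $\Lambda\int_0^h e^{-A\tau}\,d\tau = \int_0^h e^{A\tau}\,d\tau$ and $\Lambda(I+\Theta)\int_0^{h-\Delta}e^{-A\tau}\,d\tau = \int_0^{h-\Delta}e^{A\tau}\,d\tau$; these identities recast the coefficients exactly as the $F_\Delta$ and $G_\Delta$ of \eqref{eq:FGH_def}. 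The output equation is immediate, since the second block of $\xi_k$ is $\hat x(t_k)=\hat x_k$, so $\eta_k = H_\Delta\xi_k$ with $H_\Delta = \begin{bmatrix}0 & I\end{bmatrix}$. As a sanity check, $\Delta = 0$ forces $\Theta = 0$, hence $x(t_{k+1}) = \hat x(t_{k+1})$ — the perfect-estimation property noted after \eqref{eq:estimator}.
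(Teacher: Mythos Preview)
Your proof is correct and follows essentially the same route as the paper's: discretize the plant, propagate the estimator through the reset $\hat x(\hat s_k)=x(s_k)$, observe that the true sampling instant $s_k$ (your $\rho_k$) cancels so that only $\Delta$ survives, and then assemble $\xi_{k+1}$ using $e^{A(h-\Delta)}=\Lambda(I+\Theta)$. The only cosmetic difference is that the paper keeps the integrals in the form $\Lambda\int_0^{h} e^{-A\tau}\,d\tau$ and $\Lambda(I+\Theta)\int_0^{h-\Delta} e^{-A\tau}\,d\tau$ throughout, whereas you first write them as $\int_0^{h} e^{A\tau}\,d\tau$ and $\int_0^{h-\Delta} e^{A\tau}\,d\tau$ and convert back at the end; the arguments are otherwise identical.
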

\begin{proof}
	Using $\Lambda = e^{Ah}$,
	we have from the state equation
	\eqref{eq:original_plant} that
	\begin{align}
	\label{eq:discretiztion_plant}
	x(t_{k+1}) 
	%&= e^{Ah}x(t_k) + 
	%\int^{h}_0 e^{A(h-\tau)} B d\tau \cdot u_k \\
	&=
	\Lambda x(t_k) + \Lambda 
	\int^{h}_{0} e^{-A\tau}  d\tau \cdot Bu_k.
	\end{align}
	
	We compute $\hat x(t_{k+1})$ in terms of $x(t_{k})$ and $u_k$.
	It follows from 
	the dynamics of the estimator $\Sigma_E$ in \eqref{eq:estimator} that
	\begin{equation}
	\label{eq:hat_x_t}
	\hat x(t_{k+1}) = 
	e^{A(t_{k+1} - \hat s_k)} \hat x(\hat s_{k})
	+ 
	\int^{t_{k+1}}_{\hat s_k} e^{A(t_{k+1} - \tau)} B d\tau \cdot u_k
	\end{equation}
	and
	\begin{equation}
	\label{eq:hat_x_hat_s}
	\hat x(\hat s_{k}) = x(s_k) = 
	e^{A(s_k-t_k)} x(t_k) + 
	\int^{s_{k}}_{t_k} e^{A(s_k - \tau)} B d\tau \cdot u_k.	
	\end{equation}
	%Substituting \eqref{eq:hat_x_hat_s} into \eqref{eq:hat_x_t}, 
	%we have
	%\begin{align}
	%\label{eq:hat_x_t_combine}
	%\hat x(t_{k+1}) = 
	%e^{A(t_{k+1} - \hat s_k)} \cdot e^{A(s_k-t_k)} x(t_k)
	%+ 
	%\int^{t_{k+1}}_{\hat s_k} e^{A(t_{k+1} - \tau)} B d\tau \cdot u_k
	%+
	%e^{A(t_{k+1} - \hat s_k)}
	%\int^{s_{k}}_{t_k} e^{A(s_k - \tau)} B d\tau \cdot u_k.
	%\end{align}
	Since $t_{k+1} - t_{k} = h$ and $\hat s_{k} = s_k + \Delta$, it follows that
	\begin{equation}
	\label{eq:exp_identity}
	e^{A(t_{k+1} - \hat s_k)} \cdot e^{A(s_k-t_k)}
	= e^{A(h-\Delta)},
	\end{equation}
	and also that 
	\begin{align}
	\label{eq:int_indentity}
	e^{A(t_{k+1} - \hat s_k)}
	\int^{s_{k}}_{t_k} e^{A(s_k - \tau)} d\tau
	=
	\int^{\hat s_{k}}_{t_k+\Delta} e^{A(t_{k+1}  - \tau)} d\tau.
	\end{align}
	Using $\Lambda = e^{Ah}$ and $\Theta = e^{-A\Delta} - I$,
	we conclude 
	%from \eqref{eq:hat_x_t_combine}--\eqref{eq:int_indentity} that
	from \eqref{eq:hat_x_t}--\eqref{eq:int_indentity} that
	\begin{align}
	\hat x(t_{k+1}) &= 
	\Lambda (I + \Theta) x(t_k) + 
	\Lambda (I + \Theta) \int^{h-\Delta}_{0} e^{-A\tau} B d\tau \cdot u_k.
	\label{eq:hat_x_t_final}
	\end{align}
	
	From \eqref{eq:discretiztion_plant} and \eqref{eq:hat_x_t_final},
	we obtain the $F_{\Delta}$ and $G_{\Delta}$ in \eqref{eq:FGH_def}.
	Moreover, 
	we have $H_{\Delta} = [0 ~~I]$
	by the definition of the extended state $\xi_k$. 
	\hspace*{\fill} $\Box$
\end{proof}

Next we show that if the extended state $\xi_k$ and the controller state $\zeta_{k}$
converge to the origin, then
the intersample values of $x$ and $\hat x$ also converge to the origin.
\begin{proposition}
	\label{prop:stability_equivalence}
	For the discreteized system $\Sigma_d$ in Lemma \ref{lem:discretization}, we have
	that $\xi_k, \zeta_{k} \to 0$ as $k \to \infty$ if and only if
	$x(t),\hat x(t) \to 0$ as $t \to \infty$ and $\zeta_{k} \to 0$ as $k \to \infty$.
	%	\begin{equation*}
	%	\text{$\xi_k, x_{c,k} \to 0$ as $k \to \infty$
	%	}
	%	\quad \Leftrightarrow \quad
	%	\text{
	%		$x(t),\hat x(t) \to 0$ as $t \to \infty$ and $x_{c,k} \to 0$ as $k \to \infty$.
	%		}
	%	\end{equation*}
\end{proposition}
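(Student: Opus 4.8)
The plan is to prove the two implications separately. The forward one ($x,\hat x,\zeta\to 0$ $\Rightarrow$ $\xi,\zeta\to 0$) is essentially immediate, so the real work is the converse, which I would reduce to a single uniform intersample bound.

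For the forward direction I would simply note that $t_k = kh\to\infty$, so if $x(t),\hat x(t)\to 0$ then the sampled sequences $x(t_k),\hat x(t_k)\to 0$, and since $\xi_k=[\,x(t_k)-\hat x(t_k);~\hat x(t_k)\,]$ is a fixed linear image of $(x(t_k),\hat x(t_k))$, we get $\xi_k\to 0$; the condition $\zeta_k\to 0$ is already assumed.

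For the converse, assume $\xi_k,\zeta_k\to 0$; only $x(t),\hat x(t)\to 0$ needs proof. I would fix $k$ and $t\in[t_k,t_{k+1})$ and use variation of constants. For the plant, $x(t)=e^{A(t-t_k)}x(t_k)+\int_{t_k}^{t}e^{A(t-\tau)}B\,d\tau\,u_k$. For the estimator I would split $[t_k,t_{k+1})$ at the reset instant $\hat s_k$: on $[t_k,\hat s_k)$ we have $\hat x(t)=e^{A(t-t_k)}\hat x_k+\int_{t_k}^{t}e^{A(t-\tau)}B\,d\tau\,u_k$, and on $[\hat s_k,t_{k+1})$, using $\hat x(\hat s_k)=x(s_k)$ and the formula for $x(s_k)$ from \eqref{eq:hat_x_hat_s}, $\hat x(t)$ becomes a linear combination of $x(t_k)$ and $u_k$ with coefficient matrices of the form $e^{A\sigma}$ or $\int_0^{\sigma}e^{A\tau}d\tau\,B$, where Assumption~\ref{ass:sample} ($s_k,\hat s_k\in[t_k,t_{k+1})$) forces every $\sigma$ into an interval of length at most $h$. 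Hence there is a constant $c>0$ — depending only on $A$, $B$, $h$, not on $k$ or on the (possibly aperiodic) sampling times — with
\[
\sup_{t\in[t_k,t_{k+1})}\big(\|x(t)\|+\|\hat x(t)\|\big)\le c\big(\|x(t_k)\|+\|\hat x_k\|+\|u_k\|\big).
\]
Since $x(t_k)$ and $\hat x_k=H_\Delta\xi_k$ are linear functions of $\xi_k$, and $u_k=C_c\zeta_k+D_c\hat x_k$ is a linear function of $(\xi_k,\zeta_k)$, the right-hand side is at most $c'(\|\xi_k\|+\|\zeta_k\|)$ for some $c'>0$. Given $\varepsilon>0$, I would pick $K$ with $\|\xi_k\|+\|\zeta_k\|<\varepsilon/c'$ for all $k\ge K$, so that $\|x(t)\|+\|\hat x(t)\|<\varepsilon$ for all $t\ge t_K$; thus $x(t),\hat x(t)\to 0$. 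The initial segment $[0,\hat s_0)$, on which $\hat x$ still depends on the estimator's initial value, is finite and irrelevant to the limit.

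The hard part will be that displayed intersample estimate for $\hat x$: because the estimator reset $\hat x(\hat s_k)=x(s_k)$ happens at $\hat s_k$, not at the ZOH instant $t_k$, the interval must be split at $\hat s_k$, the post-reset piece re-expressed through $x(s_k)$ and hence through $x(t_k)$ and $u_k$, and the bounding constant must be checked to be uniform in $k$, in the offset $\Delta$ over its admissible range, and in the sampling sequence — and it is precisely Assumption~\ref{ass:sample}, keeping all relevant time increments within $[0,h]$, that makes this uniformity possible.
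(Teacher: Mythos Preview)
Your proposal is correct and follows essentially the same route as the paper: the forward direction is immediate from the definition of $\xi_k$, and for the converse you bound the intersample behavior of $x$ and $\hat x$ by $\|x(t_k)\|$, $\|\hat x_k\|$, and $\|u_k\|$ via variation of constants, exactly as the paper does. The only difference is that the paper gives the bound for $x$ explicitly and dismisses the estimator with ``similarly,'' whereas you spell out the split at the reset instant $\hat s_k$ and the uniformity in $k$ coming from Assumption~\ref{ass:sample}; this is precisely what that ``similarly'' hides, so your version is just a more careful rendering of the same argument.
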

\begin{proof}
	The statement that 
	$x(t),\hat x(t) \to 0$ as $t \to \infty$ and $\zeta_{k} \to 0$ as $k \to \infty$
	imply 
	$\xi_k, \zeta_{k} \to 0$ as $k \to \infty$, follows directly
	from the definition of $\xi_k$.
	
	To prove the converse statement,
	assume that $\xi_k,\zeta_{k} \to 0$ as $k \to \infty$.
	Then $\hat x(t_k) =H_{\Delta}\xi_k \to 0$ and
	$x(t_k)$, $u_k \to 0$ as $k \to \infty$.
	Since
	%	\begin{equation*}
	%	x(t_k+\tau) = 
	%	e^{A\tau} x(t_k) +
	%	\int^{\tau}_0 e^{A(\tau-t)} B dt \cdot u_k,
	%	\end{equation*}
	%	for all $\tau \in [0,h)$,
	\begin{equation*}
	\|x(t_k+\tau)\| \leq
	e^{\|A\| h} \|x(t_k)\| +
	\int^{h}_0 e^{\|A\| h} \|B\| dt \cdot \|u_k\|
	\end{equation*}
	for all $k \in \mathbb{Z}_+$ and all $\tau \in [0,h)$,
	we derive $x(t) \to 0$ ($t \to \infty$).
	Similarly, we see from the dynamics of the estimator $\Sigma_E$ that
	%	if $0 \leq \tau < \hat s_k$, then
	%	\begin{equation*}
	%	\|\hat x(t_k+\tau)\|
	%	\leq
	%	e^{\|A\| h} \|\hat x(t_k)\| +
	%	\int^{h}_0 e^{\|A\| h} \|B\| dt \cdot \|u_k\|,
	%	\end{equation*}
	%	and if $\hat s_k \leq \tau < h$, then
	%	\begin{equation*}
	%	\|\hat x(t_k+\tau)\|
	%	\leq
	%	e^{\|A\| h} \|x(s_k)\| +
	%	\int^{h}_0 e^{\|A\| h} \|B\| dt \cdot \|u_k\|.
	%	\end{equation*}
	$\hat x(t) \to 0$ as $t \to \infty$.
	This completes the proof.
	\hspace*{\fill} $\Box$
\end{proof}
This proposition allows us to conclude
Problem \ref{problem:clockoffset} can be solved by
finding LTI controllers $\Sigma_C$ achieving $\xi_k$, $\zeta_{k} \to 0$ ($k \to \infty$)
for every $\Delta \in [\underline \Delta, \overline \Delta]$
and for every initial states $\xi_0$ and $\zeta_{0}$.

The following result allows us to conclude that the
discretized system $\Sigma_d$
is detectable and stabilizable for all $\Delta$ and almost all $h$
if the plant $(A,B)$ is stabilizable.
\begin{proposition}
	\label{prop:detectability_stabilizability}
	The discretized system $\Sigma_d$ in 
	\eqref{eq:extended_sys} is detectable for all $\Delta$ and $h$.
	Moreover, $\Sigma_d$ is stabilizable for all $\Delta$ if 
	Assumption \ref{ass:stabilizability} holds.
\end{proposition}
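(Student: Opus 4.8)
The plan is to establish both assertions through the Popov--Belevitch--Hautus (PBH) rank tests, exploiting two structural features of the realization \eqref{eq:FGH_def}. First, since $\Theta = e^{-A\Delta}-I$, we have $I+\Theta = e^{-A\Delta}$ and hence $\Lambda(I+\Theta) = e^{A(h-\Delta)}$, which is invertible for every $\Delta$ and $h$. Second, the two block columns of $F_\Delta$ coincide, so that $F_\Delta = \begin{bmatrix} -\Lambda\Theta \\ \Lambda(I+\Theta) \end{bmatrix} \begin{bmatrix} I & I \end{bmatrix}$ has rank $n$.

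For the detectability claim I would in fact prove the stronger property that $(F_\Delta, H_\Delta)$ is observable. Suppose $\begin{bmatrix} v_1 \\ v_2 \end{bmatrix}$ lies in the kernel of $\begin{bmatrix} \lambda I - F_\Delta \\ H_\Delta \end{bmatrix}$ for some $\lambda \in \mathbb{C}$. Because $H_\Delta = \begin{bmatrix} 0 & I \end{bmatrix}$, the bottom block of rows gives $v_2 = 0$; the lower block row of $(\lambda I - F_\Delta)\begin{bmatrix} v_1 \\ 0 \end{bmatrix} = 0$ then reads $-\Lambda(I+\Theta)v_1 = 0$, and invertibility of $\Lambda(I+\Theta)$ forces $v_1 = 0$. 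So the kernel is trivial for every $\lambda$, whence $(F_\Delta, H_\Delta)$ is observable, and in particular detectable, for all $\Delta$ and $h$.

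For stabilizability I would apply the dual PBH test to $(F_\Delta, G_\Delta)$ at an arbitrary $\lambda$ with $|\lambda| \ge 1$. Let $\begin{bmatrix} w_1 \\ w_2 \end{bmatrix}$ satisfy $\begin{bmatrix} w_1^\top & w_2^\top \end{bmatrix}\begin{bmatrix} \lambda I - F_\Delta & G_\Delta \end{bmatrix} = 0$. Writing out the two block equations associated with the block columns of $\lambda I - F_\Delta$ and subtracting them, the terms involving $\Lambda\Theta$ and $\Lambda(I+\Theta)$ cancel and one is left with $\lambda(w_1 - w_2) = 0$; since $|\lambda| \ge 1$ implies $\lambda \neq 0$, we get $w_1 = w_2 =: w$. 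Substituting back, the surviving block equation collapses to $w^\top(\lambda I - \Lambda) = 0$; and a direct computation shows that the two row blocks of $G_\Delta$ add up to the standard ZOH-discretized input matrix $\Gamma := \Lambda \int_0^h e^{-A\tau}\,d\tau\, B = \int_0^h e^{A\tau}\,d\tau\, B$, so that the block of the null condition associated with $G_\Delta$ becomes $w^\top \Gamma = 0$. Hence $w^\top \begin{bmatrix} \lambda I - \Lambda & \Gamma \end{bmatrix} = 0$.

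It remains to show that $(\Lambda, \Gamma)$ is stabilizable under Assumption \ref{ass:stabilizability}, which forces $w = 0$ and closes the PBH test. This is the classical fact that non-pathological zero-order-hold sampling preserves stabilizability: in the Kalman controllability decomposition of $(A,B)$, the uncontrollable subsystem has a Hurwitz generator by stabilizability of $(A,B)$, which is mapped by $e^{(\cdot)h}$ to a Schur matrix, while the controllable subsystem stays controllable because $h$ is non-pathological for the eigenvalues of $A$; consequently $\begin{bmatrix} \lambda I - \Lambda & \Gamma \end{bmatrix}$ has full row rank for all $|\lambda| \ge 1$. I expect this last step --- invoking, and if one wishes spelling out, the preservation of stabilizability under non-pathological sampling --- to be the only point that is not entirely routine; everything else is elementary linear algebra driven by the invertibility of $e^{A(h-\Delta)}$ and the rank-$n$ redundancy in $F_\Delta$ and $G_\Delta$.
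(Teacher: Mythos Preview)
Your argument is correct; the PBH computations go through exactly as you describe, and the sum of the two block rows of $G_\Delta$ is indeed $\Lambda\int_0^h e^{-A\tau}d\tau\,B$. However, the paper proceeds differently. Rather than applying PBH directly to $(F_\Delta,G_\Delta,H_\Delta)$, it first performs a similarity transformation by
\[
T=\begin{bmatrix}-\Theta & -I\\ I+\Theta & I\end{bmatrix},
\]
which block-diagonalizes $F_\Delta$ to $\bar F_\Delta=\mathrm{diag}(\Lambda,0)$. Detectability is then shown not via PBH but by exhibiting an explicit observer gain $L_\Delta$ for which $\bar F_\Delta-L_\Delta\bar H_\Delta$ is nilpotent; stabilizability follows from PBH on the transformed pair, where the zero block of $\bar F_\Delta$ makes the reduction to $[\,zI-\Lambda\;\;\Lambda J_1 B\,]$ immediate. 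Your route is arguably cleaner and even yields observability rather than mere detectability, but the paper's transformation and the gain $L_\Delta$ are not throwaway: they are reused verbatim in Proposition~\ref{prop:DN_cond} to compute the left coprime factorization $\tilde D_\Delta,\tilde N_\Delta$ on which the whole $\mathcal H^\infty$ design rests. So the paper's longer detour here buys machinery needed later, whereas your argument is self-contained but would leave that computation to be redone.
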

\begin{proof}
	Let us first obtain another realization 
	$(\bar{F}_{\Delta}, \bar{G}_{\Delta}, \bar{H}_{\Delta})$ of 
	the discretized system $\Sigma_d$ in \eqref{eq:extended_sys}.
	We can transform $F_{\Delta}$ into
	\begin{align*}
	\bar{F}_{\Delta} &:=
	T^{-1}F_{\Delta}T =
	\begin{bmatrix}
	\Lambda &\quad  0 \\
	0 &\quad  0
	\end{bmatrix},
	\text{~~where~~}
	T :=
	\begin{bmatrix}
	-\Theta &\quad  -I \\
	I + \Theta &\quad  I
	\end{bmatrix}. 
	\end{align*}
	%It follows that the eigenvalues of $\bar F_{\Delta}$ are 
	%those of $\Lambda$ in addition to $n$ eigenvalues equal to zero,
	%and hence the poles of the system 
	%\eqref{eq:extended_sys}
	%and \eqref{eq:FGH_def} are
	%independent on $\Delta$.
	Furthermore,
	if we define
	\begin{equation*}
	J_1 :=
	\int^{h}_0 e^{-A\tau} d\tau,\qquad
	J_2 :=
	\int^{h-\Delta}_0 e^{-A\tau} d\tau,
	\end{equation*}
	then we obtain
	\begin{align*}
	%\label{eq:G_Delta}
	\bar G_{\Delta} := 
	T^{-1}G_{\Delta} 
	=
	\begin{bmatrix}
	\Lambda J_1B \\
	-\Lambda(J_1 - (I+\Theta)J_2)B- \Theta\Lambda J_1 B
	\end{bmatrix}
	\end{align*}
	and
	%\begin{equation*}
	$
	\bar{H}_{\Delta} := H_{\Delta}T =
	[
	I + \Theta \quad  I
	].
	%\begin{bmatrix}
	%I + \Theta  & I
	%\end{bmatrix}.
	$
	%\end{equation*}
	We have thus another realization 
	$(\bar{F}_{\Delta}, \bar{G}_{\Delta}, \bar{H}_{\Delta})$
	for $\Sigma_d$.
	
	Next we check detectability and stabilizability
	by using the realization 
	$(\bar{F}_{\Delta}, \bar{G}_{\Delta}, \bar{H}_{\Delta})$. 
	%by
	%using the well-known rank conditions (see, e.g., \cite[Sec.~3.2]{zhou1996}).
	Define
	\begin{equation}
	\label{eqw:L_def}
	L_{\Delta} : = 
	\begin{bmatrix}
	\Lambda (I+\Theta)^{-1} \\ 0
	\end{bmatrix}.
	\end{equation}
	Then we have that
	\begin{equation}
	\label{eq:FLH}
	\bar F_{\Delta}-L_{\Delta} \bar H_{\Delta}
	=
	\begin{bmatrix}
	0 & -\Lambda (I+\Theta)^{-1} \\
	0 & 0
	\end{bmatrix},
	\end{equation}
	and clearly
	$\bar F_{\Delta}-L_{\Delta} \bar H_{\Delta}$ is Schur stable.
	%	Since $I+\Theta = e^{-A\Delta}$ is invertible,
	%	it follows that
	%	\begin{equation*}
	%	\begin{bmatrix}
	%	zI - \bar{F}_{\Delta} \\ \bar{H}_{\Delta}
	%	\end{bmatrix}
	%	=
	%	\begin{bmatrix}
	%	zI - \Lambda &\quad 0 \\
	%	0 &\quad zI 	\\
	%	I+\Theta &\quad  I
	%	\end{bmatrix}	
	%	\end{equation*}
	%	is full column rank
	%	for all $z \in \mathbb{D}^c$. 
	Therefore,
	the discreteized system $\Sigma_d$ is detectable 
	for all $h$ and $\Delta$.
	
	To show stabilizability, we use
	the well-known rank conditions (see, e.g., \cite[Sec.~3.2]{zhou1996}).
	We have that $[zI - \bar{F}_{\Delta}~~  \bar{G}_{\Delta}]$
	%	, which is 
	%	\begin{equation*}
	%%	\begin{bmatrix}
	%%	zI - \bar{F}_{\Delta} &\quad  \bar{G}_{\Delta}
	%%	\end{bmatrix}
	%%	=
	%	\begin{bmatrix}
	%	zI - \Lambda &\quad 0 &\quad \Lambda J_1B\\
	%	0 &\quad zI 	&\quad -\Lambda(J_1 - (I+\Theta)J_2)B- \Theta\Lambda J_1 B
	%	\end{bmatrix},
	%	\end{equation*}
	is full row rank for all $z \in \mathbb{D}^c$ if and only if
	\begin{equation*}
	\begin{bmatrix}
	zI - \Lambda  &\quad \Lambda J_1B
	\end{bmatrix}		
	=
	\begin{bmatrix}
	zI - e^{Ah}  &\quad \int^{h}_0 e^{A\tau} B d\tau 
	\end{bmatrix}		
	\end{equation*}
	is full row rank for all $z \in \mathbb{D}^c$.
	Hence, the discretized system $\Sigma_d$ is stabilizable for all $\Delta$ if 
	Assumption \ref{ass:stabilizability} holds.
\end{proof}

\subsection{Extension to the output feedback case 
	with disturbances and noise}
Instead of $\Sigma_P$ in \eqref{eq:original_plant}, consider 
a plant $\Sigma_P'$ with disturbances, noise, and output feedback:
\begin{align*}
\Sigma_P':~
\begin{cases}
\dot x(t) = A x(t) + Bu(t) + d(t) \\
y(t) = Cx(t) + n(t),
\end{cases}
\end{align*}
where $d(t) \in \mathbb{R}^{n}$ and $n(t)$, $y(t) \in \mathbb{R}^p$
are the disturbance, measurement noise, and output of the plant, respectively.
As in \cite[Chap.~3]{Garcia2014}, \cite{Xu2005CDC}, 
and the references therein, we assume that
a smart sensor is co-located with the
plant and that the sensor has the following observer to 
generate the state estimate, which is sampled and sent to the controller side:
\begin{equation*}
\Sigma_O:~
\dot{\bar x}(t) = A {\bar x}(t) + Bu(t) + L(y(t)-C\bar x(t)),
\end{equation*}
where $\bar x(t) \in \mathbb{R}^n$ is the state estimate and $L$ is an observer gain such that
$A-LC$ is Hurwitz.
The sampler sends the state estimate $\bar x$, 
and
the resulting dynamics of the time-stamp aware estimator $\Sigma_E'$ is provided by
\begin{equation*}
\Sigma_E':~
\begin{cases}
\dot{\hat x}(t) = A\hat{x}(t) + Bu(t)
&\qquad(t_k \leq t < t_{k+1})	 \\
\hat{x}(\hat s_k) = \bar x(s_k) + w_k,
&\qquad(k \in \mathbb{Z}_+),
\end{cases}
\end{equation*} 
where $w_k \in \mathbb{R}^n$ is the quantization noise. 
A calculation similar to the one performed in the proof of
in Lemma \ref{lem:discretization} can be used to show
that the dynamics of the discretized system $\Sigma_d'$ is given by
\begin{equation}
\label{eq:discre_sys_disturbances}
\Sigma_d':~
\xi_{k+1} = F_{\Delta} \xi_k + G_{\Delta} u_k + 
d_k,\qquad \eta_k = H_{\Delta}\xi_k,
\end{equation}
where $d_k := [
d_{1,k}^{\top}~~ d_{2,k}^{\top}
]^{\top}$ and
\begin{align*}
d_{1,k} &:= \int^{t_{k+1}}_{t_k} e^{A(t_{k+1}-\tau)} d(\tau) d\tau - d_{2,k}\\
d_{2,k} &:= 
-e^{A(t_{k+1}-\hat s_k)}
\Bigg(
e^{(A-LC)(s_k-t_k)}e_k - w_k  \\
&\hspace{30pt}+ \int^{s_k}_{t_k} 
\left(
e^{(A-LC)(s_k-\tau)} (d(\tau)-Ln(\tau)) 
- e^{A(s_k-\tau)} d(\tau) \right) d\tau
\Bigg) \\
e_k &:= x(t_k) - \bar x(t_k). 
\end{align*}
The only difference from
the original idealized system $\Sigma_d$ in \eqref{eq:extended_sys} is
that $\Sigma_d'$ has the disturbance $d_k$. 
Hence, 
for the output feedback case with
bounded disturbances and noise,
solutions of Problem~\ref{problem:clockoffset} achieve
the boundedness of the closed-loop state.
%as in Proposition \ref{prop:stability_equivalence}:
\begin{proposition}
	Assume that $\xi_k, \zeta_{k} \to 0$ as $k \to \infty$
	for the idealized system $\Sigma_d$ in Lemma \ref{lem:discretization}
	(in the context of state feedback without distubances and measurement noise).
	If $d(t)$, $n(t)$, and $w_k$ are bounded for all $t \geq 0$ and 
	all $k \in \mathbb{Z}_+$,
	%	and if the initial estimation error $e_0 = x(t_0) - \bar x(t_0)$ is bounded,
	then the states
	$x(t)$, $\bar x(t)$, $\hat x(t)$, and $\zeta_{k} $ are also bounded for all $t \geq 0$ and all
	$k \in \mathbb{Z}_+$.
	Moreover, if $d(t) = n(t) = w_k = 0$ for all $t \geq 0$ and 
	all $k \in \mathbb{Z}_+$, then
	$x(t)$, $\bar x(t)$, $\hat x(t)$, and $\zeta_{k} $ converge to the origin.
\end{proposition}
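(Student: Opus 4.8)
The plan is to observe that the hypothesis makes the closed loop formed by the idealized discretized system $\Sigma_d$ and the controller $\Sigma_C$ a Schur‑stable discrete‑time LTI system, and then to treat the extra term $d_k$ appearing in $\Sigma_d'$ as a bounded (respectively vanishing) exogenous input to that same system. Concretely, since $\eta_k=H_\Delta\xi_k=\hat x(t_k)$ also for $\Sigma_d'$ in \eqref{eq:discre_sys_disturbances}, substituting $u_k=C_c\zeta_k+D_cH_\Delta\xi_k$ yields
\begin{equation*}
\begin{bmatrix}\xi_{k+1}\\ \zeta_{k+1}\end{bmatrix}
= \mathcal{A}_{\mathrm{cl}}\begin{bmatrix}\xi_{k}\\ \zeta_{k}\end{bmatrix} + \begin{bmatrix}d_k\\ 0\end{bmatrix},
\qquad
\mathcal{A}_{\mathrm{cl}} := \begin{bmatrix}F_\Delta + G_\Delta D_c H_\Delta & G_\Delta C_c\\ B_c H_\Delta & A_c\end{bmatrix},
\end{equation*}
and the very same recursion with $d_k\equiv 0$ is the idealized closed loop $\Sigma_d$–$\Sigma_C$. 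By hypothesis that recursion drives $(\xi_k,\zeta_k)$ to $0$ from every initial state (as in Problem~\ref{problem:clockoffset}), so $\mathcal{A}_{\mathrm{cl}}$ is Schur stable and there exist $M\ge 1$, $\rho\in(0,1)$ with $\|\mathcal{A}_{\mathrm{cl}}^k\|\le M\rho^k$ for all $k$.

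Next I would bound the forcing term $d_k=[d_{1,k}^\top\ d_{2,k}^\top]^\top$. Inspecting its definition, the only closed‑loop quantity it involves is the observer error $e_k=x(t_k)-\bar x(t_k)$; everything else is built from the exogenous signals $d$, $n$, $w_k$. Since $e(t)$ satisfies the autonomous error dynamics $\dot e=(A-LC)e+d-Ln$ with $A-LC$ Hurwitz, boundedness of $d$ and $n$ gives boundedness of $e(t)$, hence of $e_k$ and of $d_k$; moreover $d\equiv n\equiv 0$ forces $e(t)\to 0$, so $e_k\to 0$ and $d_k\to 0$. Feeding this into the variation‑of‑constants estimate
\begin{equation*}
\left\|\begin{bmatrix}\xi_{k}\\\zeta_{k}\end{bmatrix}\right\|
\le M\rho^{k}\left\|\begin{bmatrix}\xi_{0}\\\zeta_{0}\end{bmatrix}\right\|
+ M\sum_{j=0}^{k-1}\rho^{\,k-1-j}\|d_j\|
\end{equation*}
shows that $\xi_k$ and $\zeta_k$ are bounded, and converge to $0$ when $d_k\to 0$; consequently $\hat x(t_k)=H_\Delta\xi_k$, $x(t_k)$ (a fixed linear combination of the entries of $\xi_k$), and $u_k=C_c\zeta_k+D_cH_\Delta\xi_k$ are bounded (respectively tend to $0$).

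Finally I would pass from the sampled values to the continuous‑time signals exactly as in the proof of Proposition~\ref{prop:stability_equivalence}, but now retaining the disturbance: integrating $\Sigma_P'$ over $[t_k,t_k+\tau)$ with $\tau\in[0,h)$ bounds $\|x(t_k+\tau)\|$ by $e^{\|A\|h}$ times a combination of $\|x(t_k)\|$, $\|u_k\|$, and $\sup_{t}\|d(t)\|$, so $x(t)$ is bounded; then $\bar x(t)=x(t)-e(t)$ is bounded; and on each interval $(t_k,t_{k+1}]$ a similar Gr\"onwall estimate for $\Sigma_E'$, split at the reset instant $\hat s_k$ where $\hat x$ jumps to the bounded intersample value $\bar x(s_k)+w_k$, bounds $\hat x(t)$. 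Replacing "bounded" by "$\to 0$" throughout (using $e(t)\to 0$, $x(t_k),u_k\to 0$, $w_k=0$) gives the convergence statement. I expect the only mildly delicate point to be bookkeeping the jump of $\hat x$ at $\hat s_k$ within each sampling interval and checking that the intersample bound for $x$ correctly absorbs $d(t)$; the rest is a routine rerun of the arguments already established for $\Sigma_d$.
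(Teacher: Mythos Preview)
Your argument is correct and follows essentially the same route as the paper: show $d_k$ is bounded, feed it as an exogenous input into the Schur-stable closed loop to get boundedness of $(\xi_k,\zeta_k)$, and then recover the intersample bounds exactly as in Proposition~\ref{prop:stability_equivalence}. In fact you are more careful than the paper on one point---you explicitly isolate the observer error $e_k$ inside $d_k$ and use the Hurwitz property of $A-LC$ to bound it, whereas the paper simply asserts that $d_k$ is bounded.
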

\begin{proof}
	Since $d_{k}$
	is bounded for every $k \geq 0$ and every $s_k,\hat s_k \in [t_k,t_{k+1})$, 
	it follows that $\xi_k$ and $\zeta_{k} $ are also bounded
	for all $k \geq 0$.
	The rest of the proof follows the similar lines as that of 
	Proposition~\ref{prop:stability_equivalence}, 
	and hence it is omitted.
\end{proof}
%Assume that $\xi_k, \zeta_{k} \to 0$ as $k\to \infty$ in the case with
%no disturbances and
%state feedback.
%If $d(t)$ and $n(t)$ are bounded for all $t \geq 0$, then $d_{1,k}$ and $d_{2,k}$
%are bounded for all $k \geq 0$ and all $s_k,\hat s_k \in [t_k,t_{k+1})$, and hence
%%$\xi_k, x_{c,k}$ are also bounded for all $k \geq 0$, which 
%%implies that 
%$x(t)$, $\bar x(t)$, and $\hat x(t)$ are bounded for all $t \geq 0$
%
%In particular, if $d(t) = 0$ and $n(t) = 0$, then
%these values converge to the origin.

%LTI controlellers $\Sigma_C$ that satisfy $\xi_k, \zeta_{k} \to 0$ as $k \to \infty$
%in the context of state feedback without distubances and measurement noise
%achieve bounded-input bounded-output stability 

See also \cite{Wakaiki2016ACC} for the $\mathcal{L}^2$-gain analysis 
of systems with time-varying offsets.

\section{Controller Design via Simultaneous Stabilization}
\subsection{Preliminaries}
We first consider a general simultaneous stabilization problem 
not limited to the system introduced in Section 2.

The transfer function $P$ 
of the system $\xi_{k+1} = F\xi_k+Gu_k,~y_k = H\xi_k$
is usually defined by the Z-transform of the system's impulse response,
i.e.,
$H(zI-F)^{-1}G$,
but in this paper, we define the transfer function $P$ 
by $P(z) := H(1/z \cdot I-F)^{-1}G$ for 
consistency of the Hardy space theory;
see \cite[Sec. 2.2]{vidyasagar1985} for details.
Hence the transfer function of a causal system is not proper.
We say that $C \in \mathbf{M}(\mathcal{RF}^{\infty})$ 
{\it stabilizes} $P \in \mathbf{M}(\mathcal{RF}^{\infty})$
if $(I+PC)^{-1}$, $C(I+PC)^{-1}$, and $(I+PC)^{-1}P$ belong
to $\mathbf{M}(\mathcal{RH}^{\infty})$. 
We recall that when these three transfer functions belong to 
$\mathbf{M}(\mathcal{RH}^{\infty})$, 
they will have no poles in the closed unit disk.

Consider the family of plants $P_\theta 
\in \mathbf{M}(\mathcal{RF}^{\infty})$ 
parameterized by $\theta\in S$, 
where $S$ is a nonempty parameter set,
%Let $(F_{\theta}, G_{\theta}, H_{\theta})$ be a realization of $P_{\theta}$.
%For each $\theta \in S$, if $(F_{\theta}, G_{\theta}, H_{\theta})$
%are stabilizable and detecable, then
and assume that we have a doubly coprime factorization of 
$P_{\theta}$ over $\mathcal{RH}^{\infty}$ 
\begin{equation}
\label{eq:doubly_coprime}
\begin{bmatrix}
Y_{\theta}  &\quad  X_{\theta} \\
-\tilde N_{\theta} &\quad  \tilde D_{\theta}
\end{bmatrix}
\begin{bmatrix}
D_{\theta}  &\quad  -\tilde X_{\theta} \\
N_{\theta} &\quad  \tilde Y_{\theta}
\end{bmatrix} = I,
\end{equation}
where $P_{\theta}=N_{\theta}D_{\theta}^{-1}$ and 
$P_{\theta}=\tilde D_{\theta}^{-1}\tilde N_{\theta}$ are
a right coprime factorization and a left coprime factorization,
respectively.
We explicitly construct the matrices in \eqref{eq:doubly_coprime} using
a stabilizable and detectable realization of $P_{\theta}$;
see, e.g.,
\cite[Theorem 4.2.1]{vidyasagar1985}.

The following theorem provides a necessary and sufficient condition 
for simultaneous stabilization:
\begin{theorem}[\hspace{-0.001pt}\cite{Vidyasagar1982, vidyasagar1985}]
	\label{thm:SS_CM}
	{\it
		%Assume that $P_{\Delta_1} \not= P_{\Delta_2}$ for $\Delta_1 \not= \Delta_2$.
		Given a nonempty set $S$, consider the plant $P_{\theta}$ 
		having a doubly coprime factorization \eqref{eq:doubly_coprime} 
		for each $\theta \in S$.
		Fix $\theta_0 \in S$ and define
		\begin{equation}
		\label{eq:UVdef}
		\begin{bmatrix}
		U_{\theta} &
		V_{\theta}
		\end{bmatrix}
		:=
		\begin{bmatrix}
		\tilde D_{\theta} &\quad
		\tilde N_{\theta}
		\end{bmatrix}
		\begin{bmatrix}
		\tilde Y_{\theta_0} & \quad -N_{\theta_0}\\
		\tilde X_{\theta_0} & \quad D_{\theta_0}
		\end{bmatrix}
		\qquad
		(\theta \in S).
		\end{equation}
		Then $(V_{\theta}, U_{\theta})$ is right coprime for every
		$\theta \in S$. Moreover, 
		there exists a controller that stabilizes 
		$P_{\theta}$ for every $\theta \in S$ if and only if
		there exists $Q \in \mathbf{M}(\mathcal{RH}^{\infty})$ such that
		for all $\theta \in S$,
		\begin{equation}
		\label{eq:SS_Unimodular}
		(U_{\theta} + V_{\theta}Q)^{-1} \in \mathbf{M}(\mathcal{RH}^{\infty}).
		\end{equation}
		Such a stabilizing controller is given by
		\begin{equation}
		\label{eq:SScontroller}
		C :=  (\tilde X_{\theta_0} + 
		D_{\theta_0} Q)
		(\tilde Y_{\theta_0} - N_{\theta_0}Q )^{-1}.
		\end{equation}
	}
\end{theorem}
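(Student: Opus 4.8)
The plan is to derive this from two classical results of the factorization approach (see \cite[Ch.~4--5]{vidyasagar1985}): the Youla parametrization of all controllers stabilizing the single plant $P_{\theta_0}$, and the coprime-factor characterization of closed-loop stability. First I would use the doubly coprime factorization \eqref{eq:doubly_coprime} of $P_{\theta_0}$ to recall that a controller $C$ stabilizes $P_{\theta_0}$ if and only if $C$ has the form \eqref{eq:SScontroller} for some $Q\in\mathbf{M}(\mathcal{RH}^{\infty})$, and that for each such $Q$ the pair $(\tilde X_{\theta_0}+D_{\theta_0}Q,\ \tilde Y_{\theta_0}-N_{\theta_0}Q)$ is itself a right coprime factorization of $C$. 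The coprimeness here comes from writing $\begin{bmatrix}\tilde X_{\theta_0}+D_{\theta_0}Q\\ \tilde Y_{\theta_0}-N_{\theta_0}Q\end{bmatrix}=\begin{bmatrix}\tilde X_{\theta_0} & D_{\theta_0}\\ \tilde Y_{\theta_0} & -N_{\theta_0}\end{bmatrix}\begin{bmatrix}I\\ Q\end{bmatrix}$ and observing that $\begin{bmatrix}\tilde X_{\theta_0} & D_{\theta_0}\\ \tilde Y_{\theta_0} & -N_{\theta_0}\end{bmatrix}$ differs from the second matrix in \eqref{eq:doubly_coprime} only by block interchanges and sign changes, hence is unimodular over $\mathcal{RH}^{\infty}$; this same fact is reused at the end.

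Next I would invoke the coprime-factor stability test: for the left coprime factorization $P_\theta=\tilde D_\theta^{-1}\tilde N_\theta$ and any right coprime factorization $C=N_cD_c^{-1}$, the controller $C$ stabilizes $P_\theta$ if and only if $\tilde D_\theta D_c+\tilde N_\theta N_c$ is unimodular over $\mathcal{RH}^{\infty}$; this follows from the identity $(I+P_\theta C)^{-1}=D_c(\tilde D_\theta D_c+\tilde N_\theta N_c)^{-1}\tilde D_\theta$ (and the analogous ones for the remaining closed-loop maps) together with the Bezout identities. Substituting the parametrized factors $N_c=\tilde X_{\theta_0}+D_{\theta_0}Q$ and $D_c=\tilde Y_{\theta_0}-N_{\theta_0}Q$ and regrouping,
\[
\tilde D_\theta D_c+\tilde N_\theta N_c
=(\tilde D_\theta\tilde Y_{\theta_0}+\tilde N_\theta\tilde X_{\theta_0})
+(\tilde N_\theta D_{\theta_0}-\tilde D_\theta N_{\theta_0})Q
=U_\theta+V_\theta Q ,
\]
exactly the matrix in \eqref{eq:UVdef} (note that at $\theta=\theta_0$ this reduces to $I$, consistent with $C$ always stabilizing $P_{\theta_0}$). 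Hence a controller of the form \eqref{eq:SScontroller} stabilizes $P_\theta$ if and only if \eqref{eq:SS_Unimodular} holds for that $\theta$.

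The equivalence now follows. If some fixed $C$ stabilizes $P_\theta$ for every $\theta\in S$, then, since $\theta_0\in S$, it stabilizes $P_{\theta_0}$ and so has the form \eqref{eq:SScontroller} for some $Q\in\mathbf{M}(\mathcal{RH}^{\infty})$; the displayed computation then forces \eqref{eq:SS_Unimodular} for all $\theta\in S$. Conversely, given such a $Q$, the controller \eqref{eq:SScontroller} stabilizes $P_{\theta_0}$ by the parametrization and every $P_\theta$ by the test, which simultaneously proves the equivalence and the final assertion of the theorem. It remains to check right coprimeness of $(V_\theta,U_\theta)$: from \eqref{eq:doubly_coprime} we have the Bezout identity $\tilde D_\theta\tilde Y_\theta+\tilde N_\theta\tilde X_\theta=I$, so $\begin{bmatrix}\tilde D_\theta & \tilde N_\theta\end{bmatrix}$ is right invertible over $\mathcal{RH}^{\infty}$; since $\begin{bmatrix}U_\theta & V_\theta\end{bmatrix}=\begin{bmatrix}\tilde D_\theta & \tilde N_\theta\end{bmatrix}\begin{bmatrix}\tilde Y_{\theta_0} & -N_{\theta_0}\\ \tilde X_{\theta_0} & D_{\theta_0}\end{bmatrix}$ and the right factor is the unimodular matrix identified above (up to a block-row swap), $\begin{bmatrix}U_\theta & V_\theta\end{bmatrix}$ is right invertible too, i.e., $(V_\theta,U_\theta)$ is right coprime.

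I expect the main difficulty to be bookkeeping rather than anything conceptual: one must apply the two quoted theorems with precisely the coprime factors appearing in \eqref{eq:doubly_coprime} so that the algebra collapses to $U_\theta+V_\theta Q$ rather than to some congruent-but-different matrix, verify carefully the unimodularity of $\begin{bmatrix}\tilde Y_{\theta_0} & -N_{\theta_0}\\ \tilde X_{\theta_0} & D_{\theta_0}\end{bmatrix}$, and keep track of matrix dimensions in the generally non-square multi-input setting so that all products are well defined.
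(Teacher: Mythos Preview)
The paper does not supply its own proof of this theorem; it is quoted from \cite{Vidyasagar1982,vidyasagar1985} and followed only by two remarks noting that the cited results extend from finite to arbitrary families and that a right (rather than left) coprime representation of the controller is used here. Your argument is correct and is precisely the standard route in those references: Youla parametrize all stabilizers of $P_{\theta_0}$, apply the coprime-factor closed-loop test to each $P_\theta$, and observe that the resulting unimodularity condition collapses to \eqref{eq:SS_Unimodular}; the coprimeness of $(V_\theta,U_\theta)$ then follows from the unimodularity of $\begin{bmatrix}\tilde Y_{\theta_0} & -N_{\theta_0}\\ \tilde X_{\theta_0} & D_{\theta_0}\end{bmatrix}$ together with the Bezout identity for $(\tilde D_\theta,\tilde N_\theta)$, exactly as you outline. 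One small point worth tightening: what you actually establish is that $\begin{bmatrix}U_\theta & V_\theta\end{bmatrix}$ admits a right inverse over $\mathcal{RH}^\infty$, which in Vidyasagar's terminology is \emph{left} coprimeness of the pair $(U_\theta,V_\theta)$ (consistent with the auxiliary ``plant'' $U_\theta^{-1}V_\theta$); the paper's phrasing ``$(V_\theta,U_\theta)$ is right coprime'' should be read in that sense, so just make the identification explicit when you write it up.
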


\begin{remark}
		Although the simultaneous stabilization of a {\em finite} family of plants is considered in
		\cite[Sec. 5.4]{vidyasagar1985} and 
		\cite{Vidyasagar1982},
		generalization to an {\em arbitrary} family of plants is readily apparent,
		as mentioned in the last paragraph of Section 3 in \cite{Vidyasagar1982}.
\end{remark}
\begin{remark}
		A {\em left} coprime factorization of stabilizing controllers
		is used in 
		\cite[Sec. 5.4]{vidyasagar1985} and 
		\cite{Vidyasagar1982},
		whereas we represent controllers by a {\em right} coprime 
		factorization in \eqref{eq:SScontroller}.
		Therefore, Theorem \ref{thm:SS_CM} is slightly different
		from its counterpart in 
		\cite[Sec. 5.4]{vidyasagar1985} and
		\cite{Vidyasagar1982}.
\end{remark}

\subsection{Robust Controller Design}
It is generally not easy to verify 
in a computationally efficient fashion
that
a transfer function $Q$ satisfying 
\eqref{eq:SS_Unimodular} exists.
In the next theorem, we develop a simple sufficient condition for
\eqref{eq:SS_Unimodular} to hold, by exploiting geometric properties
on $\mathcal{H}^{\infty}$ inspired by results
on strong stabilization \cite{zeren2000}.
\begin{theorem}
	\label{thm:suff}
	{\it
		Given a nonempty set $S$, 
		assume that each plant $P_{\theta}$ ($\theta \in S$)
		has a doubly coprime factorization \eqref{eq:doubly_coprime} 
		such that 
		there exist $\theta_0 \in S$, 
		$W \in \mathbf{M}(\mathcal{RH}^{\infty})$,
		and $R(\theta) \in \mathbf{M}(\mathbb{R})$ satisfying 
		$\tilde D_{\theta} = \tilde D_{\theta_0}$ and
		\begin{equation}
		\label{eq:N_cond}
		\tilde N_{\theta}(z) - \tilde N_{\theta_0}(z) = R(\theta) W(z),
		\end{equation}
		for all $\theta \in S$.
		%$\gamma$ by
		%		\begin{equation}
		%		\label{eq:H_inf_minimization}
		%		\gamma > \inf_{Q\in \mathbf{M}(\mathcal{RH}^{\infty})}
		%		\| (X_{\theta_0} + Q \tilde D_{\theta_0})L\|_{\infty}.
		%		\end{equation}
		If there exists $Q\in \mathbf{M}(\mathcal{RH}^{\infty})$ 
		satisfying
		the following $\mathcal{H}^{\infty}$-norm condition:
		\begin{equation}
		\label{eq:Nehari_problem}
		\|W (\tilde X_{\theta_0} + D_{\theta_0} Q)\|_{\infty} < 
		\frac{1}{\sup_{\theta \in S}\|R(\theta)\|},
		\end{equation}
		then $Q$ satisfies \eqref{eq:SS_Unimodular}, and hence
		the controller $C$ in \eqref{eq:SScontroller} stabilizes 
		$P_{\theta}$ for every $\theta \in S$.
	}
\end{theorem}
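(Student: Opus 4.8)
The plan is to evaluate $U_\theta + V_\theta Q$ in closed form from the structural hypotheses, recognize it as a contraction perturbation of the identity matrix, and then invoke Theorem~\ref{thm:SS_CM}. From the definition \eqref{eq:UVdef} one has $U_\theta = \tilde D_\theta \tilde Y_{\theta_0} + \tilde N_\theta \tilde X_{\theta_0}$ and $V_\theta = -\tilde D_\theta N_{\theta_0} + \tilde N_\theta D_{\theta_0}$. Substituting the hypotheses $\tilde D_\theta = \tilde D_{\theta_0}$ and $\tilde N_\theta = \tilde N_{\theta_0} + R(\theta) W$ (from \eqref{eq:N_cond}) and grouping the $\theta_0$-terms, I would obtain
\[ U_\theta + V_\theta Q = \big(U_{\theta_0} + V_{\theta_0} Q\big) + R(\theta)\, W\,(\tilde X_{\theta_0} + D_{\theta_0} Q), \qquad \theta \in S. \]

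Next I would show the $\theta_0$-term collapses to the identity, independently of $Q$. Reading off the $(2,1)$ and $(2,2)$ blocks of the doubly coprime identity \eqref{eq:doubly_coprime} at $\theta = \theta_0$ gives $\tilde D_{\theta_0} N_{\theta_0} = \tilde N_{\theta_0} D_{\theta_0}$ and $\tilde D_{\theta_0}\tilde Y_{\theta_0} + \tilde N_{\theta_0}\tilde X_{\theta_0} = I$; the first yields $V_{\theta_0} = 0$ and the second $U_{\theta_0} = I$, so $U_{\theta_0} + V_{\theta_0} Q = I$ for every $Q \in \mathbf{M}(\mathcal{RH}^\infty)$. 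Hence, setting $M_\theta := R(\theta)\, W\,(\tilde X_{\theta_0} + D_{\theta_0} Q) \in \mathbf{M}(\mathcal{RH}^\infty)$, we get $U_\theta + V_\theta Q = I + M_\theta$ for all $\theta \in S$.

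The remaining step is a small-gain argument. Submultiplicativity of the induced norm gives, pointwise in $z\in\mathbb{D}$, $\|M_\theta(z)\| \le \|R(\theta)\|\,\|W(z)(\tilde X_{\theta_0}(z)+D_{\theta_0}(z)Q(z))\|$, hence $\|M_\theta\|_\infty \le \|R(\theta)\|\,\|W(\tilde X_{\theta_0}+D_{\theta_0}Q)\|_\infty \le \big(\sup_{\theta\in S}\|R(\theta)\|\big)\,\|W(\tilde X_{\theta_0}+D_{\theta_0}Q)\|_\infty < 1$ by \eqref{eq:Nehari_problem} (the case $\sup_{\theta\in S}\|R(\theta)\| = 0$ forces $M_\theta = 0$ and is trivial). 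Since $M_\theta$ is real-rational with no poles in $\bar{\mathbb{D}}$, the maximum modulus principle gives $\|M_\theta(z)\| \le \|M_\theta\|_\infty < 1$ for all $z\in\bar{\mathbb{D}}$, so $I + M_\theta(z)$ is invertible on $\bar{\mathbb{D}}$ (its Neumann series converges), $\det(I+M_\theta)$ is zero-free on $\bar{\mathbb{D}}$, and $(I+M_\theta)^{-1} = \frac{\mathrm{adj}(I+M_\theta)}{\det(I+M_\theta)}$ is a real-rational matrix function that is bounded and holomorphic on $\mathbb{D}$, i.e. $(U_\theta + V_\theta Q)^{-1} = (I+M_\theta)^{-1} \in \mathbf{M}(\mathcal{RH}^\infty)$ for every $\theta \in S$. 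This is precisely \eqref{eq:SS_Unimodular}, so Theorem~\ref{thm:SS_CM} shows that $C$ in \eqref{eq:SScontroller} stabilizes $P_\theta$ for every $\theta \in S$.

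The step needing the most care is the algebraic reduction: one must verify that, thanks to $\tilde D_\theta = \tilde D_{\theta_0}$, the cross terms reorganize exactly into $U_{\theta_0} + V_{\theta_0}Q$, and that this quantity collapses to $I$ by coprimeness at $\theta_0$. This is the mechanism that defeats the "infinitely many nonlinear plants" difficulty, since all $\theta$-dependence is concentrated in the factored perturbation $R(\theta)W$. The small-gain/invertibility part is routine, the only subtlety being the rationality remark needed to land in $\mathcal{RH}^\infty$ rather than merely $\mathcal{H}^\infty$.
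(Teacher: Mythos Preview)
Your proof is correct and follows essentially the same route as the paper: reduce $U_\theta + V_\theta Q$ to $I + (\tilde N_\theta - \tilde N_{\theta_0})(\tilde X_{\theta_0} + D_{\theta_0}Q)$ using $\tilde D_\theta = \tilde D_{\theta_0}$ and the Bezout identities at $\theta_0$, then invoke the small-gain fact that $(I+\Phi)^{-1}\in\mathbf{M}(\mathcal{RH}^\infty)$ whenever $\|\Phi\|_\infty<1$. The only differences are cosmetic---you organize the algebra as ``base term $U_{\theta_0}+V_{\theta_0}Q$ plus perturbation'' before collapsing the base to $I$, and you spell out the Neumann-series/adjugate justification for $(I+M_\theta)^{-1}\in\mathbf{M}(\mathcal{RH}^\infty)$, which the paper simply quotes as a known fact.
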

\begin{proof}
	We define $U_{\theta}$ and $V_{\theta}$ as in
	\eqref{eq:UVdef}. Since $\tilde D_\theta = \tilde D_{\theta_0}$, 
	it follows from \eqref{eq:UVdef} and
	the Bezout identity $\tilde D_{\theta_0}
	\tilde Y_{\theta_0}+\tilde N_{\theta_0}\tilde X_{\theta_0} = I$ 
	in \eqref{eq:doubly_coprime} that
	\begin{align*}
	U_{\theta} 
	&= \tilde D_{\theta} \tilde Y_{\theta_0} 
	+ \tilde N_{\theta} \tilde X_{\theta_0} 
	%	&= I + ((\tilde D_{\theta} - \tilde D_{\theta_0}) \tilde Y_{\theta_0}
	%	+ (\tilde N_{\theta} - \tilde N_{\theta_0}) \tilde X_{\theta_0})\\
	= I + (\tilde N_{\theta} - \tilde N_{\theta_0}) \tilde X_{\theta_0}.
	\end{align*}
	Moreover, since $\tilde D_{\theta_0}^{-1} \tilde N_{\theta_0} 
	= N_{\theta_0} D_{\theta_0}^{-1}$, we obtain
	\begin{align*}
	V_{\theta}
	&= -\tilde D_{\theta} N_{\theta_0} + \tilde N_{\theta} D_{\theta_0} 
	%	&= -\tilde D_{\theta}(N_{\theta_0} D_{\theta_0}^{-1}  
	%	- \tilde D_{\theta}^{-1} \tilde N_{\theta} ) D_{\theta_0} \\
	%	&= -\tilde D_{\theta_0}(\tilde D_{\theta_0}^{-1} \tilde N_{\theta_0}   
	%	- \tilde D_{\theta_0}^{-1} \tilde N_{\theta} ) D_{\theta_0} \\
	= (\tilde N_{\theta} - \tilde N_{\theta_0}) D_{\theta_0}.
	\end{align*}
	Hence 
	$
	U_{\theta} + V_{\theta}Q
	=
	I +  
	(\tilde N_{\theta} - \tilde N_{\theta_0})
	(\tilde X_{\theta_0} + D_{\theta_0}Q).
	$
	Since $(I+\Phi)^{-1} \in \bf{M}(\mathcal{RH}^{\infty})$ for all
	$\Phi \in \bf{M}(\mathcal{RH}^{\infty})$ satisfying $\|\Phi\|_{\infty} < 1$,
	it follows that
	if
	\begin{equation}
	\label{eq:Q_N_norm}
	\|	(\tilde N_{\theta} - \tilde N_{\theta_0})
	(\tilde X_{\theta_0} + D_{\theta_0}Q)\|_{\infty} < 1
	\qquad (\theta \in S),
	\end{equation}
	then \eqref{eq:SS_Unimodular} holds for all $\theta \in S$. 
	From the assumption \eqref{eq:N_cond}, 
	\begin{align*}
	&\|	(\tilde N_{\theta} - \tilde N_{\theta_0})
	(\tilde X_{\theta_0} + D_{\theta_0}Q) \|_{\infty} 
	\leq 
	\|R(\theta)\| \cdot
	\|	W
	(\tilde X_{\theta_0} + D_{\theta_0}Q) \|_{\infty}.
	\end{align*}
	Hence if $Q$ satisfies \eqref{eq:Nehari_problem} 
	for all $\theta \in S$,
	then 
	\eqref{eq:Q_N_norm} holds, and consequently
	$P_{\theta}$
	is simultaneously stabilizable by $C$ in \eqref{eq:SScontroller}
	from Theorem \ref{thm:SS_CM}.
	\hspace*{\fill} $\Box$ 
\end{proof}

The proposition below shows that our discretized system $\Sigma_d$ in \eqref{eq:extended_sys}
always satisfies
the assumptions on $\tilde D_{\theta}$ and $\tilde N_{\theta}$
that appear in Theorem \ref{thm:suff}. 
This result also provides the matrices $R$ and $W$ in \eqref{eq:N_cond}
without explicitly calculating a coprime factorization of $P_{\theta}$ 
for all $\theta \in S$.
\begin{proposition}
	\label{prop:DN_cond}
	%Suppose that $0 \in [\underline \Delta, \overline \Delta]$.
	Define the transfer function $P_{\Delta}(z) := 
	H_{\Delta} (1/z\cdot I - F_{\Delta})^{-1} G_{\Delta}$.
	For all $\Delta \in (-h,h)$,
	there exists a doubly coprime factorization \eqref{eq:doubly_coprime}
	such that $\tilde D_{\Delta}(z) = \tilde D_{0}(z) = I - ze^{Ah}$,
	%for all $\Delta \in [\underline \Delta, \overline \Delta]$
	and 
	\eqref{eq:N_cond} holds with
	\begin{align}
	R(\Delta) &:= \int^{\Delta}_0 e^{A(h-\tau)}B d\tau 
	\in \mathbb{R}^{n \times m} \label{eq:R_def}\\
	W(z) &:= z(z-1) \in \mathcal{RH}^{\infty}. \label{eq:W_def}
	\end{align}
\end{proposition}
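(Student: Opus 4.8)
The plan is to build the required factorization from the auxiliary realization $(\bar F_{\Delta},\bar G_{\Delta},\bar H_{\Delta})$ of $\Sigma_d$ and the observer gain $L_{\Delta}$ in \eqref{eqw:L_def}, both of which already appeared in the proof of Proposition~\ref{prop:detectability_stabilizability}. Since similarity transformations leave transfer functions unchanged, $P_{\Delta}(z)=\bar H_{\Delta}(1/z\cdot I-\bar F_{\Delta})^{-1}\bar G_{\Delta}$, and by Proposition~\ref{prop:detectability_stabilizability} (together with Assumption~\ref{ass:stabilizability}) the pair $(\bar F_{\Delta},\bar G_{\Delta})$ is stabilizable and $(\bar F_{\Delta},\bar H_{\Delta})$ is detectable. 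Hence the state-space construction of a doubly coprime factorization \eqref{eq:doubly_coprime} (\cite[Theorem~4.2.1]{vidyasagar1985}) applies; taking any stabilizing state-feedback gain for the right factors and the particular gain $L_{\Delta}$ for the left factors, one gets
\begin{equation*}
\tilde D_{\Delta}(z)=I-\bar H_{\Delta}(1/z\cdot I-\bar F_{\Delta}+L_{\Delta}\bar H_{\Delta})^{-1}L_{\Delta},\qquad
\tilde N_{\Delta}(z)=\bar H_{\Delta}(1/z\cdot I-\bar F_{\Delta}+L_{\Delta}\bar H_{\Delta})^{-1}\bar G_{\Delta},
\end{equation*}
and it remains to evaluate these two expressions and match them with the claim.

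For $\tilde D_{\Delta}$ the key point is that $\bar F_{\Delta}-L_{\Delta}\bar H_{\Delta}$ is nilpotent: by \eqref{eq:FLH} its square vanishes. Therefore $(1/z\cdot I-\bar F_{\Delta}+L_{\Delta}\bar H_{\Delta})^{-1}$ is a matrix polynomial in $z$ of degree two, and multiplying it out against $\bar H_{\Delta}=[\,I+\Theta\ \ I\,]$ and $L_{\Delta}$, using that $\Lambda=e^{Ah}$ and $I+\Theta=e^{-A\Delta}$ commute, a short computation gives $\tilde D_{\Delta}(z)=I-z\Lambda=I-ze^{Ah}$, which in particular does not depend on $\Delta$. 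The same degree-two inverse, now multiplied against $\bar G_{\Delta}$ whose block rows I write as $\bar G_{1}=\Lambda J_{1}B$ and $\bar G_{2}=(I+\Theta)\Lambda(J_{2}-J_{1})B$ (the simplified second row of $\bar G_{\Delta}$), yields $\tilde N_{\Delta}(z)=z(I+\Theta)\bar G_{1}+z(I-z\Lambda)\bar G_{2}$.

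Next I would verify \eqref{eq:N_cond}. Specializing to $\Delta=0$ we have $\Theta=0$ and $J_{2}=J_{1}$, hence $\bar G_{2}=0$ and $\tilde N_{0}(z)=z\bar G_{1}$, so
\begin{equation*}
\tilde N_{\Delta}(z)-\tilde N_{0}(z)=z\big(\Theta\bar G_{1}+\bar G_{2}\big)-z^{2}\Lambda\bar G_{2}.
\end{equation*}
It then suffices to establish the two constant-matrix identities $\Lambda\bar G_{2}=-R(\Delta)$ and $\Theta\bar G_{1}+\bar G_{2}=-R(\Delta)$ with $R(\Delta)$ as in \eqref{eq:R_def}; both follow, after substituting $\Lambda=e^{Ah}$, $I+\Theta=e^{-A\Delta}$, $J_{1}=\int_{0}^{h}e^{-A\tau}d\tau$, $J_{2}=\int_{0}^{h-\Delta}e^{-A\tau}d\tau$, from elementary changes of variables in the resulting integrals (in particular $R(\Delta)=\int_{h-\Delta}^{h}e^{A\sigma}B\,d\sigma$). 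With both identities in hand, $\tilde N_{\Delta}(z)-\tilde N_{0}(z)=-zR(\Delta)+z^{2}R(\Delta)=R(\Delta)\,z(z-1)=R(\Delta)W(z)$, i.e.\ \eqref{eq:N_cond} holds with $W$ as in \eqref{eq:W_def}.

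The hard part is essentially bookkeeping rather than insight: propagating the matrix exponentials and integration limits correctly through the two identities in the last paragraph and through the degree-two polynomial inverse in the middle paragraph. The one conceptual point worth stressing --- and the reason this lands in the setting of Theorem~\ref{thm:suff} --- is that, although $\bar F_{\Delta}$, $\bar G_{\Delta}$, $\bar H_{\Delta}$, and $L_{\Delta}$ all depend on $\Delta$ (and $L_{\Delta}$ is well defined because $I+\Theta=e^{-A\Delta}$ is always invertible), the left denominator $\tilde D_{\Delta}$ comes out entirely free of $\Delta$, while the numerator difference $\tilde N_{\Delta}-\tilde N_{0}$ factors as a constant matrix $R(\Delta)$ times the single scalar function $W(z)=z(z-1)$.
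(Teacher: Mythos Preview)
Your proof is correct and follows essentially the same path as the paper: both use the realization $(\bar F_{\Delta},\bar G_{\Delta},\bar H_{\Delta})$ and the observer gain $L_{\Delta}$ from the proof of Proposition~\ref{prop:detectability_stabilizability}, exploit the nilpotency in \eqref{eq:FLH} to get $\tilde D_{\Delta}(z)=I-z\Lambda$ independent of $\Delta$, and then compute $\tilde N_{\Delta}-\tilde N_{0}$. The only tactical difference is that the paper evaluates $\tilde N_{\Delta}-\tilde N_{0}=\tilde D_{0}(P_{\Delta}-P_{0})$ and works out $P_{\Delta}-P_{0}$ (carrying the factor $(I-z\Lambda)^{-1}$ until it cancels at the end), whereas you compute $\tilde N_{\Delta}$ directly from the state-space formula and reduce the calculation to the two constant-matrix identities $\Lambda\bar G_{2}=-R(\Delta)$ and $\Theta\bar G_{1}+\bar G_{2}=-R(\Delta)$; your route is slightly more economical since it never introduces $(I-z\Lambda)^{-1}$.
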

\begin{proof}
	Consider the realization $(\bar F_{\Delta}, \bar G_{\Delta}, \bar H_{\Delta})$ in
	the proof of 
	Proposition \ref{prop:detectability_stabilizability}.
	For every $\Delta \in 
	(-h, h)$,
	the matrix $L_{\Delta}$ in \eqref{eqw:L_def}
	achieves the Schur stability of
	$\bar F_{\Delta}-L_{\Delta} \bar H_{\Delta}$ as shown in \eqref{eq:FLH}.
	From the realization of $\tilde D_{\Delta}$, e.g., in
	\cite[Theorem 4.2.1]{vidyasagar1985},
	we can write
	$\tilde D_{\Delta}$ as
	\begin{equation*}
	\tilde D_{\Delta}(z) = I - \bar H_{\Delta}
	(1/z\cdot I - (\bar F_{\Delta}-L_{\Delta} \bar H_{\Delta}))^{-1}L_{\Delta}
	=
	I - z\Lambda.
	\end{equation*}
	%Furthermore, 
	%\begin{equation*}
	%I - \bar H_{\Delta}
	%(1/z \cdot I - (\bar F_{\Delta}-L_{\Delta} \bar H_{\Delta}))^{-1}L_{\Delta}
	%=
	%I - z\Lambda.
	%\end{equation*}
	Noticing that the far right-hand side of the equation above
	does not depend on $\Delta$,
	we have $\tilde D_{\Delta}(z) = \tilde D_{0}(z) = I - z\Lambda$.
	
	It follows that $\tilde N_{\Delta} - \tilde N_{0} = 
	\tilde D_0(P_{\Delta} - P_0)$. 
	From the realization $(\bar{F}_{\Delta}, \bar{G}_{\Delta}, \bar{H}_{\Delta})$,
	we see that
	\begin{align*}
	P_{\Delta}(z) &=
	z \bigl((I+\Theta) (I - z\Lambda)^{-1}\Lambda J_1 
	- \Lambda (J_1 - (I+\Theta)J_2)  - \Theta \Lambda J_1\bigr)B.
	\end{align*}
	Since $\Theta = 0$ and $J_2 = J_1$ for $\Delta = 0$,
	it follows that
	%\begin{align*}
	%P_0(z) 
	%&=
	%z(I-z\Lambda)^{-1} \Lambda J_1 B.
	%\end{align*}
	%Hence we obtain
	\begin{align}
	P_{\Delta}(z) - P_0(z) &=
	z\bigr( 
	\Theta (I-z\Lambda)^{-1}\Lambda J_1 
	-\Lambda (J_1 - (I+\Theta)J_2) - \Theta\Lambda J_1
	\bigr)
	B.
	\label{eq:PD-P0_first}
	\end{align}
	
	On the other hand,
	we have
	%\begin{align*}
	$
	J_1 - (I+\Theta)J_2 = 
	%\int^{h}_0 e^{-A\tau} d\tau - 
	%\int^{h-\Delta}_0 e^{-A(\tau+\Delta)} d\tau \\
	%&=
	%\int^{h}_0 e^{-A\tau} d\tau - 
	%\int^{h}_\Delta e^{-A\tau} d\tau \\
	%&=
	\int^{\Delta}_0 e^{-A\tau} d\tau =: \bar \Theta,	
	$
	%\end{align*}
	and
	\begin{align}
	\label{eq:int_exp}
	\bar \Theta A = 
	-\int^{\Delta}_0 \left(\frac{d}{d\tau} e^{-A\tau}\right) d\tau
	= -\Theta.
	\end{align}
	Since $A(I-z\Lambda) = A-zAe^{Ah} = A-ze^{Ah}A = (I-z\Lambda)A$,
	it follows that $A(I-z\Lambda)^{-1} = (I-z\Lambda)^{-1}A$.
	Therefore we derive from \eqref{eq:PD-P0_first}
	\begin{align*}
	P_{\Delta} - P_0 &=
	%z\Lambda \bar \Theta
	%\left(
	%A(I-z\Lambda)^{-1} J_1 - I - AJ_1
	%\right)B \\
	%&=
	%z\Lambda \bar \Theta
	%(I-z\Lambda)^{-1}
	%\left(
	%AJ_1 - (I-z\Lambda)(I - AJ_1)
	%\right)B \\
	%&=
	z\Lambda \bar \Theta
	(I-z\Lambda)^{-1}
	\left(
	-I + 
	z\Lambda(I-AJ_1)
	\right)B.
	\end{align*}
	Similarly to \eqref{eq:int_exp}, we have
	$I-AJ_1 = \Lambda^{-1}$,
	%\begin{align*}
	%I - AJ_1  =
	%I - 
	%A\int^h_0 e^{-A\tau} d\tau 
	%= e^{-Ah} = \Lambda^{-1},
	%\end{align*}
	and hence
	%\begin{align*}
	$P_{\Delta} - P_0 =
	z(z-1)\Lambda \bar \Theta (I-z\Lambda) ^{-1}B.
	$
	%\end{align*}
	Since $\lambda$, $\bar\Theta$, and $(I-z\Lambda)^{-1}$
	are commutative, we derive 
	\begin{align*}
	P_{\Delta} - P_0 =
	z(z-1)
	(I-z\Lambda) ^{-1}
	\bar \Theta\Lambda  B,
	\end{align*}
	and 
	%\begin{equation*}
	$
	\tilde N_{\Delta} - \tilde N_0
	=
	z(z-1)
	\bar \Theta \Lambda B.
	$
	%\end{equation*}
	Thus \eqref{eq:N_cond} holds with
	$R$ in \eqref{eq:R_def} and
	$W$ in \eqref{eq:W_def}
	\hspace*{\fill} $\Box$ 
\end{proof}

Define 
\begin{equation}
\label{eq:gamma}
\gamma := \frac{1}{{\displaystyle　\max_{\Delta \in [\underline \Delta, \overline \Delta]}} \left\|\int^{\Delta}_0
	e^{A(h-\tau)}B d\tau\right\|}.
\end{equation}
From Theorem \ref{thm:suff}, to obtain a controller $\Sigma_C$ as in  \eqref{eq:controller},
it is enough to solve the following suboptimal problem:
Find $Q \in \mathbf{M}(\mathcal{RH}^{\infty})$ satisfying
$\|W(\tilde X_0 - D_0 Q)\|_{\infty} < \gamma$.
This problem is equivalent to a standard suboptimal $\mathcal{H}^{\infty}$ control problem
\cite[Chaps.~16,~17]{zhou1996}:
Find $Q \in \mathbf{M}(\mathcal{RH}^{\infty})$ such that 
$\|\mathcal{F}_{\ell}(\Phi,Q)\|_{\infty} < \gamma$, where $\Phi$ is defined by
\begin{equation}
\label{eq:G_def}
\Phi:=
\begin{bmatrix}
W\tilde X_0  &\quad WD_0  \\
-I & 0
\end{bmatrix}.
\end{equation}
The results of this section can be summarized through the following controller design algorithm:
\begin{algorithm}
	\label{alg:controller_design}
	\begin{enumerate}
		\item
		Using the realization 
		\[
		\hspace{-22pt}(\bar F_0, \bar G_0, \bar H_0) = 
		\left(
		\begin{bmatrix}
		e^{Ah} &\quad 0 \\
		0 & \quad 0
		\end{bmatrix}\!,~
		\begin{bmatrix}
		\int^h_0e^{Ah}Bd\tau  \\
		0 
		\end{bmatrix}\!,~
		\begin{bmatrix}
		I &\quad I
		\end{bmatrix}
		\right),
		\]
		the matrix 
		%		\[
		%			L_0 = 
		%			\begin{bmatrix}
		%				e^{Ah} \\  0
		%			\end{bmatrix},
		%		\]
		%		\[
		$			L_0 = 
		[
		e^{A^{\top}h} ~~ 0
		]^{\top},
		$		
		%\]
		and an arbitrary matrix $K_0$ such that $\Phi := \bar F_0 - \bar G_0 K_0$ is Schur stable,
		set
		\begin{align*}
		D_{0}(z) &:=
		I - K_0(1/z\cdot I -\Phi )^{-1} \bar G_{0} \\
		N_{0}(z) &:=
		\bar H_0(1/z\cdot I - \Phi )^{-1} \bar G_{0} \\
		\tilde X_{0}(z) &:= K_{0}
		(1/z\cdot I - \Phi )^{-1}  L_{0} \\
		\tilde Y_{0}(z) &:= I + H_{0}
		(1/z\cdot I - \Phi )^{-1} K_{0} \\
		W(z) &:= z(z-1).
		\end{align*}
		
		\item 
		For a given offset interval $[\underline \Delta, \overline \Delta]$,
		set $\gamma$ as in \eqref{eq:gamma}, and
		solve the $\mathcal{H}^{\infty}$ control problem \cite[Chaps.~16,~17]{zhou1996}:
		Find $Q \in \mathbf{M}(\mathcal{RH}^{\infty})$ such that 
		$\|\mathcal{F}_{\ell}(\Phi,Q)\|_{\infty} < \gamma$,
		where $\Phi$ is defined by \eqref{eq:G_def}.

		\item 
		If the $\mathcal{H}^{\infty}$ control problem is not solvable,
		then the algorithm fails.
		Otherwise the transfer function $C$ of the 
		controller $\Sigma_C$
		is given by $C = (\tilde X_{0} + 
		D_{_0} Q)
		(\tilde Y_{_0} - N_{0}Q )^{-1}$.
	\end{enumerate}
\end{algorithm}

\begin{remark}
		We have from Proposition \ref{prop:DN_cond}
		that $P_{\Delta} = P_0 + WD_0^{-1}R(\Delta)$ for constant $\Delta$, where
		$P_\Delta$ is expressed as the nominal component $P_0$ 
		plus the uncertainty block $WD_0^{-1}R(\Delta)$.
		If we obtain a similar formula for the case of time-varying offsets as
		studied for systems with aperiodic sampling in \cite{Fujioka2009},
		we can deal with the stabilization problem of systems with time-varying offsets
		through a small gain theorem. Although
		the uncertainty part of the discretized system $\Sigma_d$ may be non-causal, 
		the small gain theorem for systems with non-causal uncertainty in \cite{Unal2008Automatica}
		can be used.
		This extension is a subject for future research.
\end{remark}

\begin{example}
Consider the unstable batch reactor studied 
in \cite{Rosenbrock1972}, where
the system matrices $A$ and $B$ in \eqref{eq:original_plant}
are given by
\begin{align*}
A &:= 
\begin{bmatrix}
1.38 &\quad -0.2077 &\quad 6.715 &\quad -5.676 \\
-0.5814 &\quad -4.29 &\quad 0 &\quad 0.675 \\
1.067 &\quad 4.273 &\quad -6.654 &\quad 5.893 \\
0.048 &\quad 4.273 &\quad 1.343 &\quad -2.104
\end{bmatrix},\qquad 
B :=
\begin{bmatrix}
0 &\quad 0 \\
5.679 &\quad 0 \\
1.136 &\quad-3.146 \\
1.136 &\quad 0 
\end{bmatrix}.
\end{align*}
This example has been developed over the years as a benchmark example for networked
control systems, and
its data were transformed by a change of basis
and time scale \cite{Rosenbrock1972}.

Here we compare the proposed method with the robust stabilization method in
\cite{doyle1981} and \cite[Chap.~7]{vidyasagar1985} based on
the following fact: {\em
	Consider a family of plants $P_{\Delta} \in \mathbf{M}(
	\mathcal{RF}^{\infty})$ with $\Delta \in [\underline \Delta, \overline \Delta]$.
	Assume 
	that 
	$P_{\Delta}$ has no poles on $\mathbb{T}$ and the same number of unstable poles for every $\Delta \in [\underline \Delta, \overline \Delta]$
	and
	that a function $r \in \mathcal{RH}^{\infty}$ satisfies
	\begin{equation}
	\label{eq:r_def}
	\|P_{\Delta}(e^{j\omega}) - P_{0}(e^{j\omega}) \| < |r(e^{j\omega})|
	\end{equation} 
	for all $\Delta \in [\underline \Delta, \overline \Delta]$
	and all $\omega \in [0,2\pi]$.
	If the controller $C \in \mathbf{M}(
	\mathcal{RF}^{\infty})$ 
	stabilizes $P_0$ and
	satisfies 
	\begin{equation}
	\label{eq:suff_cond_conv}
	\left\|
	r  C (I + P_0C)^{-1}
	\right\|_{\infty} \leq 1,
	\end{equation}
	then $C$ stabilizes $P_{\Delta}$ for all $\Delta \in [\underline \Delta, \overline \Delta]$.}
The order of such a controller is typically equal to
the order of the following transfer function:
\[\begin{bmatrix}
0 &\quad  rI \\ I &\quad  P_0
\end{bmatrix}.\]

We compute the length of the allowable offset interval $[\underline \Delta, \overline \Delta]$
obtained from the sufficient condition \eqref{eq:Nehari_problem} for each $h \in [0.2,~3.6]$,
which is shown as the solid line in Fig.~\ref{fig:MIMO_ex}.
On the other hand, the dashed line in the figure represents 
the length of the offset interval $[\underline \Delta, \overline \Delta]$
obtained from the robust control approach that leads to the condition
\eqref{eq:suff_cond_conv} with an appropriate function $r \in \mathcal{RH}^{\infty}$
satisfying \eqref{eq:r_def}.
For example, we use $r(z) = 0.1766(z-1)/(z-0.9389)$ for $h=1$ and $
[\underline \Delta, \overline \Delta] = 
[-0.02, 0.02]$, and this $r$ satisfies \eqref{eq:r_def} and
\[
\frac{1}{2\pi} \int_{0}^{2\pi} 
\left(
|r(e^{j\omega})| - \|P_{\Delta}(e^{j\omega}) - P_{0}(e^{j\omega}) \| 
\right)
\leq 8.5 \times 10^{-3}
\]
for all $\Delta \in  [\underline \Delta, \overline \Delta] = 
[-0.02, 0.02]$.
The solid line is obtained by finding the maximum and minimum of
$\Delta$ that satisfies
the condition
\begin{equation*}
\left\|\int^{\Delta}_0
e^{A(h-\tau)}B d\tau\right\| \leq
\frac{1}{\min_{Q\in \mathcal{RH}^{\infty}} \|\mathcal{F}_{\ell} (\Phi,Q)\|_{\infty}},
\end{equation*}
whereas to derive the dashed line, we first calculate $r$ satisfying
\eqref{eq:r_def} for a fixed $[\underline \Delta, \overline \Delta]$ and
then check the existence of a controller $C$ that stabilizes $P_0$ and achieves the
$\mathcal{H}^{\infty}$-norm condition \eqref{eq:suff_cond_conv}.
We see from Fig.~\ref{fig:MIMO_ex}
that the proposed sufficient condition \eqref{eq:Nehari_problem} 
is less conservative
than \eqref{eq:suff_cond_conv}.

Consider the case $h = 1$, and let  $C_1$ and $C_2$ be
controllers that are obtained from the sufficient conditions 
\eqref{eq:Nehari_problem}  and  \eqref{eq:suff_cond_conv} with
the maximum offset length, respectively.
The order of the controller $C_1$ 
is 7, but 
applying balanced model truncation \cite[Chap.~6]{zhou1996}
to the controller $C_{1}$,
we can obtain
an approximated controller $C_{\text{app}}$ with order 5,
%with the system matrices
%\begin{gather*}
%A_{c} = 
%\begin{bmatrix}
%  0.9059 &\quad    0.102 \\
%  0.1101 &\quad -0.9924
%\end{bmatrix},~~
%B_{c} = 
%\begin{bmatrix}
%-0.3479  &\quad -0.08695  &\quad  -0.2533  &\quad   0.1056 \\
%0.03817  &\quad  0.01062 &\quad   0.02801 &\quad  -0.01072
%\end{bmatrix}\\
%C_{c} = 
%\begin{bmatrix}
%0.1004  &\quad   0.01732 \\
%0.09857  &\quad -0.009464
%\end{bmatrix},~~
%D_{c} = 
%\begin{bmatrix}
%0.3973  &\quad  0.1069  &\quad  0.2908 &\quad  -0.1144 \\
% -0.5233  &\quad -0.1402 &\quad  -0.3829 &\quad   0.1513
%\end{bmatrix},
%\end{gather*}
which 
satisfies 
$\|C_{\text{app}} - C_{1}\|_{\infty} / \|C_1\|_{\infty} = 0.023$.
From
iterative calculations of the eigenvalues of the 
discretized closed-loop system for each $\Delta$,
we find that both $C_{1}$ and $C_{\text{app}}$
stabilize the discretized system $\Sigma_d$ in \eqref{eq:extended_sys} 
for all $\Delta \in (-1,1)$.
The controller $C_2$
has order $5$ and
allows the offsets $\Delta \in [-0.054,0.068]$ without compromising the closed-loop stability.
Approximated controllers with any order obtained by applying balanced 
model truncation to $C_2$
do not achieve the closed-loop stability even in the case $\Delta = 0$.
For comparison, a linear quadratic regulator whose state weighting matrix  and
input weighting matrix are identity matrices with appropriate dimension stabilizes
the discretized system $\Sigma_d$ in \eqref{eq:extended_sys} 
only for $\Delta \in [-0.029,0.062]$.
From this numerical result, we see that
the derived controller achieves better robust performance against clock offsets 
than
a linear quadratic regulator designed without regard to the clock offset and
also than the robust controller based on \eqref{eq:suff_cond_conv}.

\begin{figure}[tb]
	\centering
	\includegraphics[width = 8cm,clip]{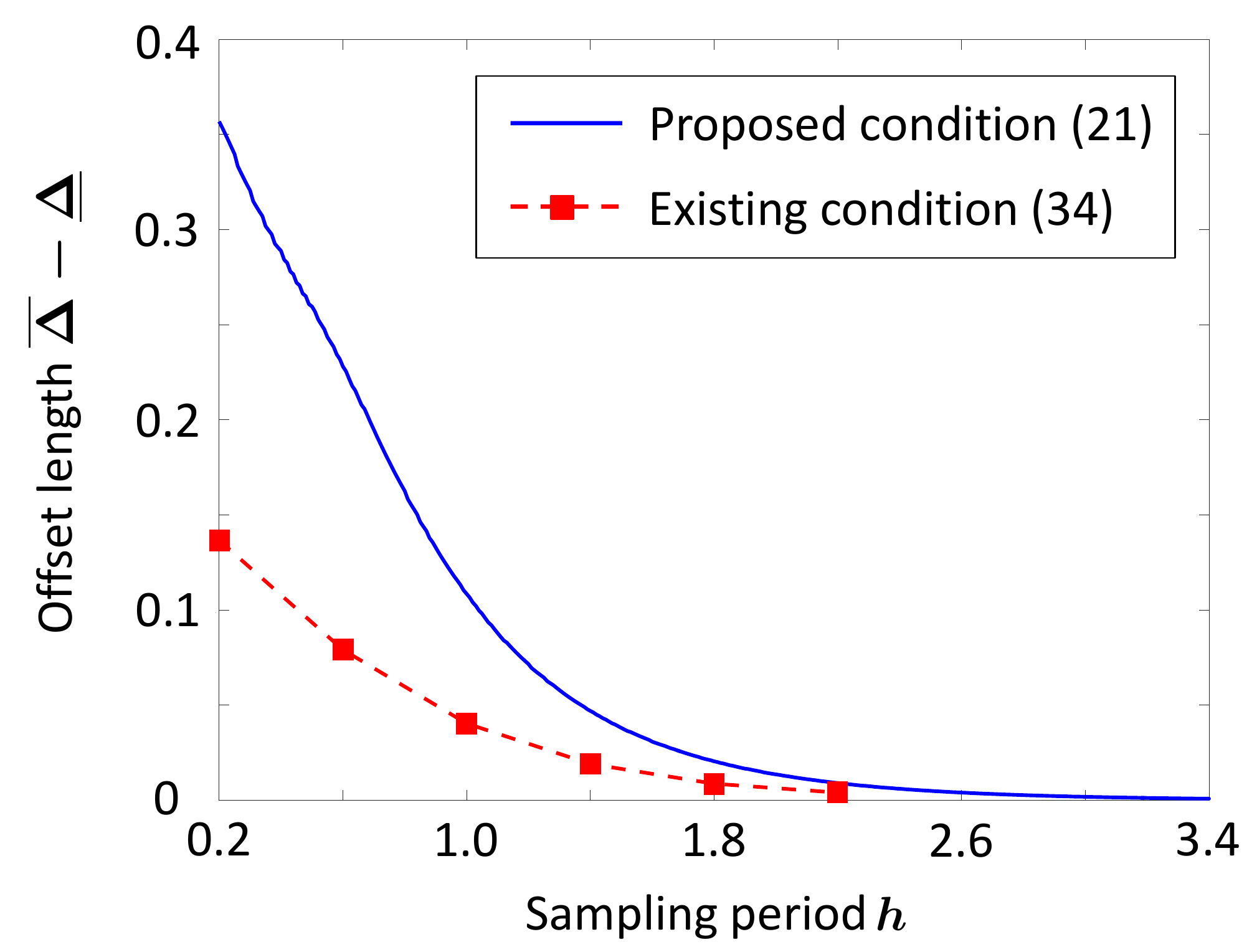}
	\caption{Allowable offset length $\overline \Delta - \underline \Delta$ versus sampling period $h$.}
	\label{fig:MIMO_ex}
\end{figure}	
\end{example}

\section{Exact Bound on Offsets for First-order Systems}
In this section, the bounds
on the clock offset
that were obtained for LTI controllers 
(from Theorems \ref{thm:scalar_LTIcontroller} 
and \ref{thm:suff}) are compared with
the exact bound that would be allowed 
if we restricted our attention to a time-invariant 
static output feedback controller, and with
an offset range that gives a sufficient condition 
for the existence of a time-varying 
2-periodic static output feedback controller.
We also derive a bound obtained using 
standard robust control tools, regarding the clock offset as an
additive uncertainty.

%The objective in this section is to reduce the problem of finding
%$Q \in \mathcal{RH}^{\infty}$ such that 
%\eqref{eq:SS_Unimodular} holds for all $\theta \in S$
%to a computationally verifiable one for scalar plants.
In this section, the plant class is restricted to scalar systems, 
and we reduce the necessary and sufficient condition for stabilizability
in Theorem \ref{thm:SS_CM} to a computationally verifiable one,
which gives an explicit formula for the exact bound on
the clock offset that LTI controllers can allow.

Consider an unstable scalar plant:
%\begin{equation*}
%\dot x = a x + bu\qquad (a > 0).
%\end{equation*}
$
\dot x = a x + bu
$ 
with $a > 0$.
If $a < 0$, the stabilization problem is trivial because a
zero control input $u(t) = 0$ leads to the stability of the closed-loop system.
So in the reminder o this section, 
we will focus our attention to the case $a > 0$.
The case $a=0$ will be addressed separately later.

Solving explicitly the integrals that appear in \eqref{eq:FGH_def},
the extended system \eqref{eq:extended_sys} is given by
\begin{align}
\xi_{k+1}
&=
\begin{bmatrix}
-\lambda \theta & -\lambda\theta \\
\lambda(1+\theta) & \lambda(1+\theta)
\end{bmatrix} \xi_k
+
\frac{b}{a}
\begin{bmatrix}
-\lambda \theta \\
\lambda(1+\theta) - 1
\end{bmatrix}
u_k \notag \\
y_k &= 
\begin{bmatrix}
0 & 1
\end{bmatrix}
\xi_k,
\label{eq:scalar_sys}
\end{align}
where
$
\lambda := e^{ah}
$ and
$
\theta := e^{-a\Delta} - 1.
$
In what follows, we take $b/a = 1$ for simplicity of notation,
because stabilizability does not depend on the value of this ratio.

The extended system \eqref{eq:scalar_sys}
is stabilizable and detectable except for $\theta = -1$, at which
point the system loses detectability. 
Since $\theta = e^{-a\Delta} - 1$, 
it follows that 
\begin{equation}
\label{eq:Delta_theta}
\Delta \in [\underline{\Delta}, \overline{\Delta}] ~~
\Leftrightarrow
~~
\theta \in [e^{-a\overline{\Delta}}-1, e^{-a\underline{\Delta}}-1] =: S.
\end{equation}
We have from Assumption \ref{ass:sample} 
that $-h < \Delta < h$, and hence
the set $S$ on $\theta$ is a subset of
\begin{equation}
\label{eq:Imax_def}
(e^{-ah} - 1,e^{ah} - 1) =: S_{\max}.
\end{equation}

As in Section 3,
taking the Z-transform of
\eqref{eq:scalar_sys} and then
mapping $z \mapsto 1/z$, we obtain the transfer function $P_{\theta}$:
\begin{equation}
\label{eq:scalar_TF}
P_{\theta} (z) 
%=
%\frac{z(-p \theta(z-1) + p-1)}{1-p z}
=
\frac{(\lambda -1)z}{1-\lambda z}
-
\theta
\frac{\lambda z(z-1)}{1-\lambda z}\quad
(\theta \in S \subset S_{\max}).
\end{equation}
The system \eqref{eq:scalar_TF}
belongs to a class of the so-called interval systems.
The stabilization of general interval systems
has been studied, e.g., in
\cite{Ghosh1988, Olbrot1994}.
Here we shall develop a new approach based on
Theorem \ref{thm:SS_CM}.

\subsection{Main result for scalar plants}
The following theorem gives the exact bound on the 
clock offset for scalar systems:
\begin{theorem}
	\label{thm:scalar_LTIcontroller}
	{\it
		%Let $\underline \theta < 0 < \overline \theta$ and 
		Define $\lambda := e^{ah}$ and $\theta := e^{-a\Delta} - 1$.
		Let $\underline \theta < 0 < \overline \theta$ and 
		consider the set $S$ in \eqref{eq:Delta_theta} of the form
		$S = [\underline \theta, \overline \theta] \subset S_{\max}$.
		There exists a controller that stabilizes 
		$P_{\theta}$ in \eqref{eq:scalar_TF} 
		for all $\theta \in S$, that is,
		there exists $Q \in \mathcal{RH}^{\infty}$ satisfying
		\eqref{eq:SS_Unimodular} for all $\theta \in S$
		if and only if
		\begin{equation}
		\label{eq:M1M2cond}
		(\lambda-1)^2\overline \theta - (\lambda+1)^2\underline \theta < 4\lambda.
		\end{equation}
		In particular, if $-\underline \theta=\overline \theta$, then \eqref{eq:M1M2cond}
		is equivalent to
		\begin{equation}
		\label{eq:LTI_bound_M}
		\overline \theta < \frac{2\lambda}{\lambda^2+1}.
		\end{equation}
		
		Furthermore, 
		define a conformal mapping $\phi$ from
		$\mathbb{G} := \mathbb{C} \setminus
		\{(-\infty, 1/\underline \theta] \cup [1/\overline \theta,\infty) \}$
		to 
		$\mathbb{D}$ by
		\begin{align}
		\label{eq:conformal_map}
		\phi:~ \mathbb{G} \to \mathbb{D}: ~s \mapsto 
		\frac{ 1 -
			\sqrt{(1-\overline \theta s) / (1-\underline \theta s)}
		}
		{ 1 +
			\sqrt{
				(1-\overline \theta s) / (1-\underline \theta s)}}.
		\end{align}
		If \eqref{eq:M1M2cond} holds, then
		a finite-dimensional stabilizing controller $C$ is given by 
		\begin{equation*}
		C := \frac{X_0+D_0Q}{Y_0-N_0Q},\qquad
		Q := \frac{\phi^{-1} \circ g - T_1}{T_2},
		\end{equation*}
		where the $\mathcal{RH}^{\infty}$ functions 
		$N_0$, $D_0$, $X_0$, $Y_0$, $T_1$, and $T_2$ are defined by
		\begin{gather}
		N_{0}(z) := \frac{(\lambda-1)z}{z-c},\qquad
		D_{0}(z) := \frac{1-\lambda z}{z-c} \notag \\
		X_0(z) := \frac{1 -c \lambda}{\lambda - 1},\qquad
		Y_0(z) := -c \label{eq:XY_def} \\
		T_1 (z):=
		\lambda\frac{1-c \lambda}{\lambda-1}
		\frac{z(z-1)}{z-c},~
		T_2 (z):= 
		\lambda \frac{z(z-1)(1-\lambda z)}
		{(z-c)^2}. \notag 
		\end{gather}
		for any arbitrarily fixed $c \in \mathbb{C}$ with $|c| > 1$, and
		any rational function 
		$g:~\bar{\mathbb{D}} \to \mathbb{D}$ that satisfies
		the interpolation conditions $g(0) = \phi(0)$,
		$g(1) = \phi(0)$, and $g(1/\lambda) = \phi(-1)$.
		Such a function $g$ always exists if \eqref{eq:M1M2cond} holds.
	}
\end{theorem}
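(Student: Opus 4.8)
The plan is to run the scalar family \eqref{eq:scalar_TF} through the simultaneous‑stabilization machinery of Theorem~\ref{thm:SS_CM} and Proposition~\ref{prop:DN_cond}, reduce the unimodularity requirement \eqref{eq:SS_Unimodular} to a covering condition ($F_Q$ maps $\bar{\mathbb{D}}$ into the slit plane $\mathbb{G}$), translate that into a three‑point rational interpolation problem on the disc via the conformal map \eqref{eq:conformal_map}, and read off \eqref{eq:M1M2cond} from the solvability of that problem. Concretely, I would fix $\theta_0=0$ and use the scalar doubly coprime factorization of $P_0$ given by $N_0,D_0,X_0,Y_0$ in \eqref{eq:XY_def}, for which $X_0N_0+Y_0D_0=1$ and $N_0/D_0=P_0$. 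By Proposition~\ref{prop:DN_cond}, after rescaling by the unit $1/(z-c)$, the left denominator is $\theta$‑independent, $\tilde D_\theta=D_0$, and $\tilde N_\theta-\tilde N_0=-\lambda\theta\,z(z-1)/(z-c)$; substituting into \eqref{eq:UVdef} as in the proof of Theorem~\ref{thm:suff} gives
\[
U_\theta+V_\theta Q=1+(\tilde N_\theta-\tilde N_0)(X_0+D_0Q)=1-\theta\,F_Q(z),\qquad F_Q:=T_1+T_2Q,
\]
with $T_1,T_2$ exactly as in the statement. By Theorem~\ref{thm:SS_CM} a controller stabilizing the whole family exists iff some $Q\in\mathcal{RH}^\infty$ makes $1-\theta F_Q$ a unit of $\mathcal{RH}^\infty$, i.e.\ $F_Q(z)\neq1/\theta$ on $\bar{\mathbb{D}}$, for every $\theta\in[\underline\theta,\overline\theta]$; since $\{1/\theta:\theta\in[\underline\theta,\overline\theta]\setminus\{0\}\}=(-\infty,1/\underline\theta]\cup[1/\overline\theta,\infty)$, this is precisely $F_Q(\bar{\mathbb{D}})\subseteq\mathbb{G}$.

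Next I would change the free parameter. The map $Q\mapsto F_Q=T_1+T_2Q$ is a bijection from $\mathcal{RH}^\infty$ onto $\{F\in\mathcal{RH}^\infty:F(0)=F(1)=0,\ F(1/\lambda)=-1\}$, because $T_2$ has simple zeros at $0,1,1/\lambda$ and no poles in $\bar{\mathbb{D}}$, while $T_1$ equals $0,0,-1$ there. Composing with the conformal bijection $\phi:\mathbb{G}\to\mathbb{D}$ of \eqref{eq:conformal_map}, whose inverse $\phi^{-1}(w)=(1-\mu)/(\overline\theta-\mu\underline\theta)$ with $\mu=\bigl((1-w)/(1+w)\bigr)^2$ is \emph{real‑rational} of degree two and has no poles over $w\in\mathbb{D}$ (its only poles occur at $\mu=\overline\theta/\underline\theta<0$, whereas $\mu$ is the square of a right‑half‑plane point when $w\in\mathbb{D}$), I would convert the existence of such an $F$ into the existence of a rational $g:\bar{\mathbb{D}}\to\mathbb{D}$ with $g(0)=g(1)=\phi(0)$ and $g(1/\lambda)=\phi(-1)$; here $\phi(0)=0$, and $0,-1\in\mathbb{G}$ because $\underline\theta<0<\overline\theta$ and $\underline\theta>e^{-ah}-1>-1$.

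It remains to characterize solvability of this interpolation problem, which I expect to be the main obstacle, owing to the \emph{boundary} node $z=1$. Writing $g=z\,h$ (legitimate since $g(0)=0$), the problem becomes: find rational $h$, holomorphic on $\bar{\mathbb{D}}$, with $h(1)=0$, $h(1/\lambda)=\lambda\phi(-1)$ and $\|h\|_\infty<1$. On the one hand $\|h\|_\infty\le1$, which holds automatically for $h=g/z$ by the Schwarz lemma whether or not $g$ is rational, gives $|\lambda\phi(-1)|=|h(1/\lambda)|\le1$, and $|\lambda\phi(-1)|=1$ would force $h$ to be a unimodular constant, contradicting $h(1)=0$; hence solvability requires $|\phi(-1)|<1/\lambda$. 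On the other hand, when $|\phi(-1)|<1/\lambda$ the one‑node formula $h=(\lambda\phi(-1)+B_{1/\lambda}\sigma)/(1+\lambda\phi(-1)B_{1/\lambda}\sigma)$ with $B_{1/\lambda}(z)=(z-1/\lambda)/(1-z/\lambda)$ produces a valid $h$ for any Schur $\sigma$ with $\|\sigma\|_\infty<1$ and $\sigma(1)=-\lambda\phi(-1)$ (e.g.\ $\sigma(z)=\delta(z-1)-\lambda\phi(-1)$, $\delta>0$ small), the boundary node merely pinning down $\sigma(1)$, which is harmless since $|\sigma(1)|<1$. So the whole chain of equivalences collapses to $|\phi(-1)|<1/\lambda$. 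A short computation gives $\phi(-1)=(1-\rho)/(1+\rho)$ with $\rho=\sqrt{(1+\overline\theta)/(1+\underline\theta)}>1$, whence $|\phi(-1)|<1/\lambda\iff\rho<(\lambda+1)/(\lambda-1)\iff(\lambda-1)^2(1+\overline\theta)<(\lambda+1)^2(1+\underline\theta)$, which rearranges to \eqref{eq:M1M2cond}; specializing $-\underline\theta=\overline\theta$ gives \eqref{eq:LTI_bound_M}. For the construction I would take the explicit real‑rational $g$ just described, set $Q=(\phi^{-1}\circ g-T_1)/T_2\in\mathcal{RH}^\infty$ (well defined because $\phi^{-1}\circ g$ inherits the values $0,0,-1$ at $0,1,1/\lambda$ and $\phi^{-1}$ has no poles over $\mathbb{D}$), and recover the finite‑dimensional $C$ from \eqref{eq:SScontroller}.

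Besides the boundary‑node interpolation, the points needing care are: that $\phi^{-1}$ really is rational (so $C$ is finite‑dimensional) and that the square‑root branch in \eqref{eq:conformal_map} is the one making $\phi^{-1}$ the genuine inverse on $\mathbb{D}$; and that the interpolant must map into the \emph{open} disc throughout, so that $1-\theta F_Q$ has no zeros on $\mathbb{T}$ (not merely in $\mathbb{D}$) and is therefore an actual unit of $\mathcal{RH}^\infty$. The algebraic reduction of $|\phi(-1)|<1/\lambda$ to \eqref{eq:M1M2cond} is routine.
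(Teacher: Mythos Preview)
Your proposal is correct and follows the same architecture as the paper: reduce \eqref{eq:SS_Unimodular} to $F_Q(\bar{\mathbb D})\subseteq\mathbb G$ with the three interpolation constraints (the paper's Lemma~\ref{thm:SStoInt}), then push through the conformal map $\phi$ to a Nevanlinna--Pick problem on the disc (the paper's Lemma~\ref{thm:ReductionNPInt}), and finally read off the scalar condition $|\phi(-1)|<1/\lambda$, which rearranges to \eqref{eq:M1M2cond}. The one genuine difference is how the boundary Nevanlinna--Pick problem is settled: the paper invokes the Pick--matrix criterion for boundary interpolation (citing \cite{luxemburg2010,wakaiki2012,foias1996}) and checks positive definiteness of the associated $2\times2$ matrix, whereas you bypass that machinery with a direct Schwarz--lemma argument for necessity and an explicit Schur parameterization $h=(\lambda\phi(-1)+B_{1/\lambda}\sigma)/(1+\lambda\phi(-1)B_{1/\lambda}\sigma)$ for sufficiency. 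Your route is more elementary and self--contained (no external boundary--NP theory needed), at the cost of being tailored to this three--node configuration; the paper's route generalizes more readily but relies on cited results. Both land on the identical condition, and your verification that $\phi^{-1}$ is real--rational with no poles over $\mathbb D$ matches the paper's formula \eqref{eq:phi_inv}.
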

\begin{proof}
	See Section 4.2.
\end{proof}

\begin{remark}
	The rational function $g$ in Theorem \ref{thm:scalar_LTIcontroller}
	can be obtained from the Schur-Nevanlinna algorithm; see, e.g., 
	\cite{luxemburg2010, wakaiki2012}.
\end{remark}	

\begin{remark}
	\label{rem:related_work}
	Since the inverse mapping $\phi^{-1}$ is given by
	the following rational function:
	\begin{equation}
	\label{eq:phi_inv}
	\phi^{-1}(s) = \frac{4s}{\overline \theta (s+1)^2 
		- \underline \theta (s-1)^2},
	\end{equation}
	the stabilizing controller $C$ is finite dimensional for 
	a rational function $g$.
\end{remark}

%From \eqref{eq:M1M2cond} in Theorem \ref{thm:scalar_LTIcontroller}
If we change the offset variable from $\theta = e^{-a\Delta} - 1$ to $\Delta$,
then \eqref{eq:Delta_theta} and \eqref{eq:M1M2cond} give
the maximum 
length of the offset interval $[\underline \Delta, \overline \Delta]$ allowed 
by an LTI controller.
\begin{corollary}
	\label{coro:maximum_length}
	Assume $-h < \underline \Delta < 0 < \overline \Delta < h$.
	There exists a controller that stabilizes 
	the extended system \eqref{eq:scalar_sys}
	for all $\Delta \in [\underline \Delta, \overline \Delta]$ if and only if 
	%\begin{equation*}
	%\overline \Delta - \underline \Delta < 
	%\frac{a}{2} 
	%\log \left( \frac{p+1}{p-1}\right).
	%%\left( \log(p+1) - \log(p-1)
	%%\right).
	%\end{equation*}
	\begin{equation}
	\label{eq:Delt_bound_scalar}
	\overline \Delta - \underline \Delta < 
	\frac{2	\left( \log(e^{ah}+1) - \log(e^{ah}-1) \right)}{a}.
	\end{equation}
\end{corollary}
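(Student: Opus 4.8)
The plan is to deduce Corollary~\ref{coro:maximum_length} from Theorem~\ref{thm:scalar_LTIcontroller} through the substitution $\theta = e^{-a\Delta}-1$, keeping careful track of the orientation reversal this map induces. First I would note that $\theta$ is a strictly decreasing function of $\Delta$, so by \eqref{eq:Delta_theta} the interval $[\underline\Delta,\overline\Delta]$ corresponds to $S=[\underline\theta,\overline\theta]$ with $\underline\theta = e^{-a\overline\Delta}-1$ and $\overline\theta = e^{-a\underline\Delta}-1$; that is, the endpoints of $\Delta$ and of $\theta$ are \emph{swapped}. The standing assumption $-h<\underline\Delta<0<\overline\Delta<h$ gives $e^{-a\overline\Delta}<1<e^{-a\underline\Delta}$, hence $\underline\theta<0<\overline\theta$, and also $e^{-ah}<e^{-a\overline\Delta}$ and $e^{-a\underline\Delta}<e^{ah}$, hence $S\subset S_{\max}$ with $S_{\max}$ as in \eqref{eq:Imax_def}. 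In particular $\underline\theta>-1$, so $\theta=-1$ never occurs in $S$, and (as noted after \eqref{eq:scalar_sys}) the extended system is stabilizable and detectable for every $\theta\in S$; consequently stabilizing the extended system \eqref{eq:scalar_sys} for all $\Delta\in[\underline\Delta,\overline\Delta]$ is equivalent, in the sense of coprime-factorization stability, to stabilizing $P_\theta$ in \eqref{eq:scalar_TF} for all $\theta\in S$, which is precisely the situation Theorem~\ref{thm:scalar_LTIcontroller} characterizes by \eqref{eq:M1M2cond}.

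Second, I would substitute $\overline\theta = e^{-a\underline\Delta}-1$ and $\underline\theta = e^{-a\overline\Delta}-1$ into \eqref{eq:M1M2cond}. After expanding, the constant contributions combine as $(\lambda+1)^2-(\lambda-1)^2=4\lambda$, which exactly cancels the right-hand side; what remains is
\begin{equation*}
(\lambda-1)^2 e^{-a\underline\Delta} < (\lambda+1)^2 e^{-a\overline\Delta}.
\end{equation*}
Both sides are positive and $\lambda=e^{ah}>1$ since $a>0$, so dividing through gives $e^{a(\overline\Delta-\underline\Delta)} < \bigl((\lambda+1)/(\lambda-1)\bigr)^2$. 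Taking logarithms and dividing by $a>0$ yields exactly \eqref{eq:Delt_bound_scalar}.

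I do not anticipate a substantive obstacle: the argument is algebraic once the change of variables is set up. The single delicate point is the orientation reversal---that $\underline\Delta$ maps to $\overline\theta$ and $\overline\Delta$ to $\underline\theta$---together with checking that $-h<\underline\Delta<0<\overline\Delta<h$ is exactly the hypothesis needed so that $S$ sits inside $S_{\max}$ with $\underline\theta<0<\overline\theta$, so that Theorem~\ref{thm:scalar_LTIcontroller} is applicable. As a consistency check, in the symmetric case $\underline\Delta=-\overline\Delta$ the bound \eqref{eq:Delt_bound_scalar} reads $\overline\Delta<\tfrac{1}{a}\log\frac{e^{ah}+1}{e^{ah}-1}$, which is what \eqref{eq:M1M2cond} gives directly (and is \emph{not} obtained from the symmetric-$\theta$ formula \eqref{eq:LTI_bound_M}, since $\underline\Delta=-\overline\Delta$ does not imply $\underline\theta=-\overline\theta$).
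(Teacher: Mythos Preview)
Your proposal is correct and follows essentially the same route as the paper: substitute $\underline\theta=e^{-a\overline\Delta}-1$, $\overline\theta=e^{-a\underline\Delta}-1$ into the inequality of Theorem~\ref{thm:scalar_LTIcontroller}, simplify to $(\lambda-1)^2 e^{-a\underline\Delta}<(\lambda+1)^2 e^{-a\overline\Delta}$, and take logarithms. The only difference is that the paper substitutes into the equivalent intermediate form \eqref{eq:1_theta} (so the cancellation you describe is already built in), whereas you substitute directly into \eqref{eq:M1M2cond}; your more explicit verification that the hypotheses of Theorem~\ref{thm:scalar_LTIcontroller} are met is a welcome addition.
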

\begin{proof}
	See Section 4.2.
\end{proof}

\begin{remark}	
	\label{rem:a_equal_zero}	
	In the case $a=0$, the extended system $P_{\Delta}$ is given by
	\begin{equation*}
	P_{\Delta}(z) = 
	\frac{hz}{1-z} - \Delta z.
	\end{equation*}
	Similarly to the case $a>0$, one can show that
	there exists a controller stabilizing $P_{\Delta}$
	for all $\Delta \in (-h,h)$.
	This result is consistent with that in the case when
	$a \to 0$ in
	Corollary \ref{coro:maximum_length}, but
	we omit the proof for brevity.
\end{remark}

\begin{example}
	Consider a scalar plant with unstable pole $a = 1$. 
	In Fig.~\ref{fig:h_dependency}, 
	we plot the maximum length of the offset interval $(
	\underline \Delta, \overline \Delta)$
	versus the ZOH-update period $h$. 
	The solid line is the maximum length $\overline \Delta - \underline \Delta$
	and 
	the vertical dotted lines indicate $h = h_0 := (\log(1+\sqrt{2}))/a$.
	If $h < h_0$, then the restriction $-h < \underline \Delta < \overline \Delta
	< h$ arising from Assumption \ref{ass:sample} gives the bound
	$\overline \Delta - \underline \Delta < 2h$.
	On the other hand, if $h \geq h_0$, then
	$\overline \Delta - \underline \Delta$ is bounded by \eqref{eq:Delt_bound_scalar}.
	The maximum offset length $\overline \Delta - \underline \Delta$
	exponentially decreases as $h \geq h_0$ becomes larger. 
	\begin{figure}[bt]
		\centering
		\includegraphics[width = 9.5cm]{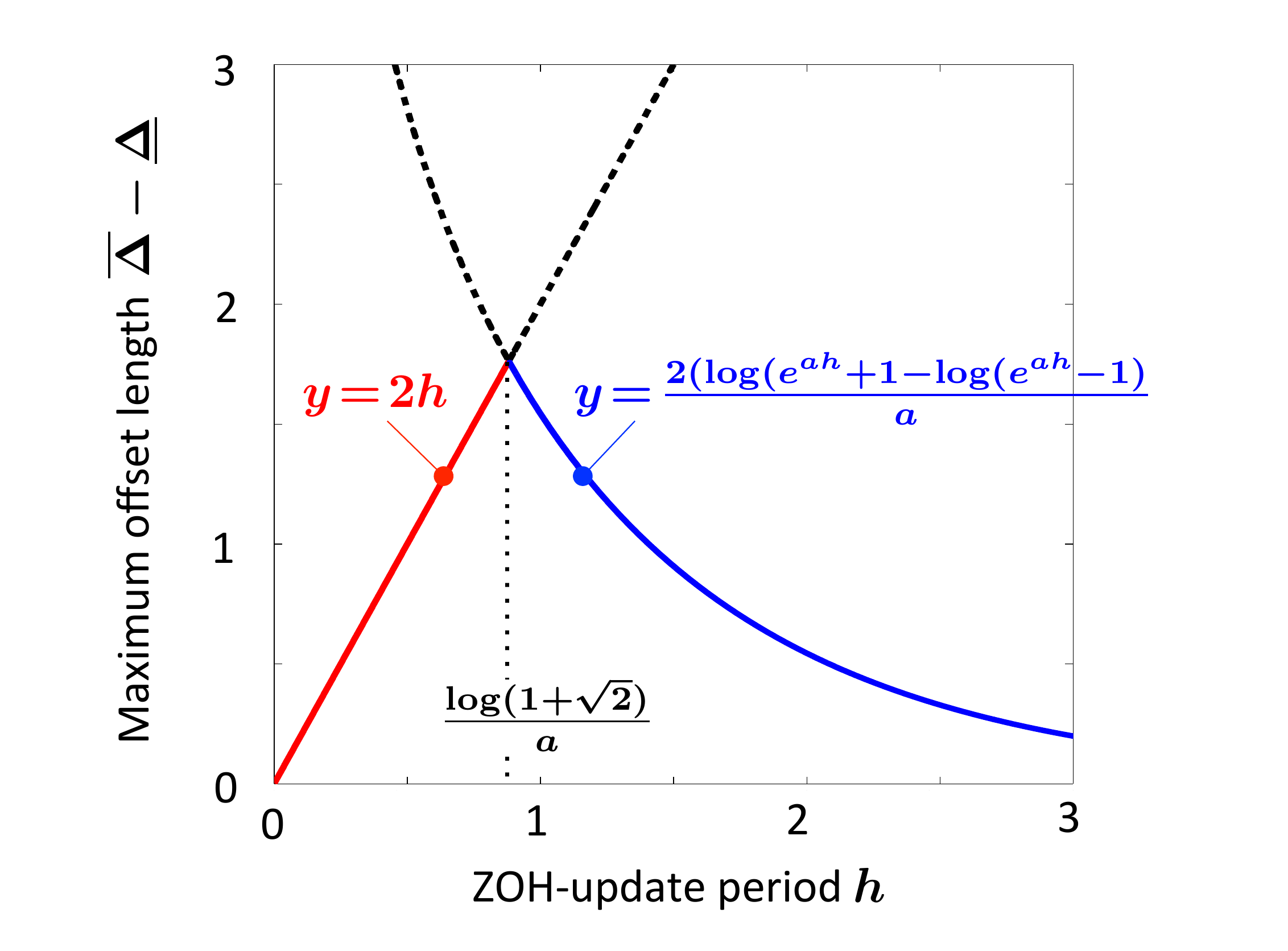}
		\caption{Maximum offset length $\overline \Delta - 
			\underline \Delta$ 
			versus ZOH-update period $h$ (unstable pole $a=1$):
			The solid line is 
			the maximum offset length. The vertical dotted line
			indicates $h= (\log(1+\sqrt{2}))/a$.}
		\label{fig:h_dependency}
	\end{figure}
\end{example}

\subsection{Proofs of Theorem \ref{thm:scalar_LTIcontroller}
	and Corollary \ref{coro:maximum_length}}
We prove Theorem \ref{thm:scalar_LTIcontroller}
by reducing the stabilization problem 
to a
Nevanlinna-Pick interpolation problem.
This reduction relies on results stated in 
Lemmas \ref{thm:SStoInt} and \ref{thm:ReductionNPInt}
have appeared in \cite{Ghosh1988, Olbrot1994}, 
but we give new proofs of these results 
based on Theorem \ref{thm:SS_CM}.

First we show that
the stabilization problem is equivalent
to an interpolation problem with a specified codomain:
\begin{lemma}
	\label{thm:SStoInt}
	{\it 
		Let $\underline \theta < 0 < \overline \theta$ and 
		$S := [\underline \theta, \overline \theta] \subset S_{\max}$.
		There exists a controller that stabilizes
		$P_{\theta}$ in \eqref{eq:scalar_TF} 
		for all $\theta \in S$, that is,
		there exists $Q \in \mathcal{RH}^{\infty}$ satisfying
		\eqref{eq:SS_Unimodular} for all $\theta \in S$
		if and only if
		there exists $f \in \mathcal{RH}^{\infty}$ such that 
		$f$ is a map from $\bar{\mathbb{D}}$ to
		$\mathbb{G} := \mathbb{C} \setminus
		\{(-\infty, 1/\underline \theta] \cup [1/\overline \theta,\infty) \}$ and
		satisfies
		the interpolation conditions $f(0) = 0$, $f(1) = 0$, 
		and $f(1/\lambda) = -1$. 
	}
\end{lemma}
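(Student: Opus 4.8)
The plan is to apply Theorem \ref{thm:SS_CM} with the one-parameter family $\{P_\theta : \theta \in S\}$, taking $\theta_0 = 0$ as the reference plant and using the explicit coprime data displayed in \eqref{eq:XY_def}. First I would verify that $(N_0, D_0)$ is a coprime factorization of $P_0(z) = (\lambda-1)z/(1-\lambda z)$ over $\mathcal{RH}^\infty$, together with the Bezout pair $(X_0, Y_0)$; the single unstable pole is at $z = 1/\lambda \in \mathbb{D}$ since $\lambda > 1$, and the choice of the auxiliary pole at $c$ with $|c| > 1$ keeps everything in $\mathcal{RH}^\infty$. Next, using Proposition \ref{prop:DN_cond} specialized to the scalar case — which gives $\tilde D_\theta = \tilde D_0 = 1 - \lambda z$ and $\tilde N_\theta - \tilde N_0 = \theta\, W(z)\, \bar\Theta\Lambda B$ with $W(z) = z(z-1)$ — I would compute $U_\theta + V_\theta Q$ exactly as in the proof of Theorem \ref{thm:suff}, obtaining
\[
U_\theta + V_\theta Q = 1 + \theta\, W(z)\,(X_0 + D_0 Q)
\]
up to the scalar constant absorbed into the definitions of $T_1, T_2$. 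Writing $f(z) := W(z)(X_0(z) + D_0(z)Q(z))$, condition \eqref{eq:SS_Unimodular} becomes: $1 + \theta f(z) \neq 0$ for all $z \in \bar{\mathbb{D}}$ and all $\theta \in S = [\underline\theta, \overline\theta]$, i.e. $f(z)$ must avoid the set $\{-1/\theta : \theta \in [\underline\theta,\overline\theta]\}$, which (since $\underline\theta < 0 < \overline\theta$) is exactly the slit set $\mathbb{R} \setminus \mathbb{G} = (-\infty, 1/\underline\theta] \cup [1/\overline\theta, \infty)$. So $1 + \theta f$ is a unit for all $\theta \in S$ iff $f : \bar{\mathbb{D}} \to \mathbb{G}$.

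It then remains to translate the structural constraints on $f$ coming from $f = W\cdot(X_0 + D_0 Q)$ into interpolation conditions. The factor $W(z) = z(z-1)$ forces $f(0) = f(1) = 0$ regardless of $Q$. The constraint at $z = 1/\lambda$ comes from the fact that $D_0(1/\lambda) = (1 - \lambda\cdot(1/\lambda))/((1/\lambda) - c) = 0$, so $f(1/\lambda) = W(1/\lambda) X_0(1/\lambda) = (1/\lambda)(1/\lambda - 1)\cdot(1 - c\lambda)/(\lambda - 1)$; a direct simplification of this product should give exactly $-1$ (this is the reason $X_0$ is defined with that particular constant). Conversely, given any $f \in \mathcal{RH}^\infty$ with $f : \bar{\mathbb{D}} \to \mathbb{G}$ satisfying $f(0) = f(1) = 0$ and $f(1/\lambda) = -1$, I would solve for $Q$: since $W$ has zeros only at $0, 1$ and $D_0$ has its only zero at $1/\lambda$, and $f$ vanishes at $0, 1$ while $f + W X_0$ vanishes at $1/\lambda$ (because $f(1/\lambda) = -1 = W(1/\lambda)X_0(1/\lambda)$ would need checking, but more precisely $(f/W - X_0)$ has a removable singularity there matched by the zero of $D_0$), the quotient $Q = (f/W - X_0)/D_0$ lies in $\mathcal{RH}^\infty$. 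This is the direction that needs the most care — I must confirm that no spurious poles of $Q$ appear on $\bar{\mathbb{D}}$, which amounts to checking pole–zero cancellation at $z = 0, 1, 1/\lambda$ and that $f/W$ is holomorphic on $\bar{\mathbb{D}}$.

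The main obstacle I anticipate is precisely this bookkeeping of removable singularities and the verification that the specific constant in $X_0$ makes $f(1/\lambda) = -1$ come out cleanly — i.e. confirming that the interpolation data is exactly $\{(0,0), (1,0), (1/\lambda, -1)\}$ and not something with an extra derivative condition or a different value. A secondary point is making sure the equivalence is stated at the level of existence of $Q \in \mathcal{RH}^\infty$ satisfying \eqref{eq:SS_Unimodular} \emph{for all} $\theta \in S$ simultaneously, which is automatic here because $f$ does not depend on $\theta$ and the avoidance region $\mathbb{G}$ already encodes the whole interval $[\underline\theta, \overline\theta]$. Once the lemma is in hand, the conformal map $\phi : \mathbb{G} \to \mathbb{D}$ in \eqref{eq:conformal_map} reduces the codomain-constrained interpolation to a classical Nevanlinna–Pick problem for $g = \phi \circ f : \bar{\mathbb{D}} \to \mathbb{D}$ with data $g(0) = \phi(0)$, $g(1) = \phi(0)$, $g(1/\lambda) = \phi(-1)$, which is the content of the next lemma and leads to the Pick-matrix inequality \eqref{eq:M1M2cond}.
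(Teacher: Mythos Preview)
Your approach is essentially the paper's: reduce simultaneous stabilizability via Theorem~\ref{thm:SS_CM} to the condition that $U_\theta + V_\theta Q$ be a unit for all $\theta$, rewrite this as a codomain constraint on a single $\mathcal{RH}^\infty$ function, and read off the interpolation data from the unstable zeros. The converse direction (recovering $Q$ from $f$) is also argued the same way, via cancellation at the simple zeros $0,1,1/\lambda$.

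There is, however, a concrete error in your identification of $f$. Using the coprime data in \eqref{eq:XY_def} one has $N_\theta - N_0 = -\theta\,\dfrac{\lambda z(z-1)}{z-c}$, so
\[
U_\theta + V_\theta Q \;=\; 1 - \theta\bigl(T_1 + T_2 Q\bigr),
\qquad T_1 = \frac{\lambda z(z-1)}{z-c}\,X_0,\quad T_2 = \frac{\lambda z(z-1)}{z-c}\,D_0 .
\]
The factor you describe as ``the scalar constant absorbed into the definitions of $T_1,T_2$'' is not a constant: it is $\lambda/(z-c)$, a nonconstant unit in $\mathcal{RH}^\infty$. If you set $f = W(X_0 + D_0 Q)$ with $W(z)=z(z-1)$ as you propose, then $f(1/\lambda) = W(1/\lambda)\,X_0 = \dfrac{1-\lambda}{\lambda^2}\cdot\dfrac{1-c\lambda}{\lambda-1} = \dfrac{c\lambda-1}{\lambda^2}$, which depends on the arbitrary pole $c$ and is not $-1$. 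The correct choice is $f := T_1 + T_2 Q$, for which $f(1/\lambda)=T_1(1/\lambda)=-1$ exactly. (Relatedly, your sign is off: the forbidden set for $1-\theta f=0$ is $\{1/\theta:\theta\in S\} = (-\infty,1/\underline\theta]\cup[1/\overline\theta,\infty)$; with your $1+\theta f$ you would get the reflected set.) Once $f$ is taken as $T_1+T_2Q$, your bookkeeping for the converse goes through verbatim, and your anticipated ``main obstacle'' disappears.
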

\begin{proof}
We obtain an $\mathcal{RH}^{\infty}$ coprime factorization 
$P_{\theta} = N_{\theta}/D_{\theta}$ with
\begin{equation}
\label{eq:NDdef}
N_{\theta}(z) := \frac{(\lambda-1)z}{z-c} - \theta \frac{\lambda z(z-1)}{z-c},
\quad
D_{\theta}(z) := \frac{1-\lambda z}{z-c},
\end{equation}
where $c$ is a fixed complex number with $|c| > 1$. 
If we define $X_0$ and $Y_0$ as in \eqref{eq:XY_def},
%$
%X_0(z) := (1 -\alpha p)/(p - 1)
%$,
%$
%Y_0(z) := -\alpha,
%$
then the Bezout identity $N_0X_0 + D_0 Y_0 = 1$ holds.
%Hence $U_{\theta}$ and $V_{\theta}$ in \eqref{eq:UVdef} are given by
%\begin{equation*}
%U_{\theta}(z) :=
%1 - \theta p\frac{1\!-\! \alpha p}{p \!-\!1}\frac{z(z\!-\!1)}{z-\alpha},~
%V_{\theta}(z) := -\theta p \frac{z(z\!-\!1)(1\!-\!p z)}{(z\!-\alpha)^2}.
%\end{equation*}
Hence defining $T_1$ and $T_2$ by \eqref{eq:XY_def},
we see that
$U_{\theta} + V_{\theta}Q$ in \eqref{eq:SS_Unimodular} satisfies
\begin{equation}
\label{eq:separation}
U_{\theta} + V_{\theta}Q
=
1 - \theta 
\left(
T_1
+
T_2Q
\right)
\quad (Q \in \mathcal{RH}^{\infty}).
\end{equation}

Theorem \ref{thm:SS_CM} and \eqref{eq:separation} show that
the plant $P_{\theta}$
is simultaneously stabilizable by a single LTI controller
if and only if there exists $Q \in \mathcal{RH}^{\infty}$ such that
\begin{equation}
\label{eq:thetaT1T2_inv}
(1 - \theta (T_1 + T_2Q))^{-1} \in \mathcal{RH}^{\infty}\qquad
(\theta \in S).
\end{equation}
We have \eqref{eq:thetaT1T2_inv} if and only if 
$1 - \theta (T_1 + T_2Q)$ has no zero in $\bar{\mathbb{D}}$ 
for all $\theta \in S$, that is,
$
T_1(z) + T_2(z)Q(z) \in \mathbb{G}
$
for all $z \in \bar{\mathbb{D}}$.

It is now enough to show that
$G \in \mathcal{RH}^{\infty}$ if and only if
$f := T_1 +T_2Q$ satisfies $f \in \mathcal{RH}^{\infty}$ 
and the interpolation conditions in the lemma.

Suppose that $Q\in \mathcal{RH}^{\infty}$. Since
$T_1, T_2 \in \mathcal{RH}^{\infty}$, we have
$
f = T_1 + T_2Q \in \mathcal{RH}^{\infty}.
$
Moreover,
since the unstable zeros of $T_2$ are $0$, $1$, and $1/\lambda$
and since $Q$ has no unstable poles,
it follows that $f(0) = T_1(0) = 0$, $f(1) = T_1(1) = 0$, and
$f(1/\lambda) = T_1(1/\lambda) = -1$.

Conversely, let $f\in \mathcal{RH}^{\infty}$ satisfy 
$f(0) = 0$, $f(1) = 0$, and $f(1/\lambda) = -1$.
If we define 
\begin{equation}
\label{eq:U_0andF0}
Q := \frac{f- T_1}{T_2},
\end{equation}
then $Q$ belongs to $\mathcal{RH}^{\infty}$.
Assume, to get a contradiction, 
that $Q \not\in \mathcal{RH}^{\infty}$. Since
\begin{equation}
\label{eq:QT2}
T_2Q = f - T_1\in \mathcal{RH}^{\infty},
\end{equation} 
it follows that
$Q$ has some
unstable poles that are zeros of $T_2$ in $\bar{\mathbb{D}}$. 
Let $p_0$ be one of the poles.
Since $T_2$ has only simple zeros in $\bar{\mathbb{D}}$, it follows that
$(T_2Q)(p_0) \not= 0$. The interpolation conditions of $f$ lead to
$f(p_0) - T_1(p_0) = 0$, which contradicts the equality in \eqref{eq:QT2}.
\end{proof}

%Similarly to the gain margin problem\cite{Khargoneker_scalar1985}, 
%\cite[Section 4.1]{foias1996},
From Lemma \ref{thm:SStoInt},
it suffices to study the following interpolation problem
for stabilizability:
\begin{problem}
	\label{prob:interpolation}
	{\it
		Let $z_1,\dots, z_n$ be distinct points in $\bar{\mathbb{D}}$ and
		let $w_1,\dots,w_n$ belong to
		$\mathbb{G} := \mathbb{C} \setminus
		\{(-\infty, 1/\underline \theta] \cup [1/\overline \theta,\infty) \}$.
		Find a function $f \in \mathcal{RH}^{\infty}$ such that
		\begin{align} \label{eq:F_int_cond}
		f:~\bar{\mathbb{D}} \to \mathbb{G} \text{~~~and~~~}
		%\item $F$ is holomorphic in $\mathbb{D}$ and continuous in $\bar{\mathbb{D}}$, and \vspace{-10pt}\\
		f(z_i) = w_i \quad (i=1,\dots,n).
		\end{align} 
		%\begin{align*}
		%&F:~\bar{\mathbb{D}} \to \mathbb{G} 
		%\\
		%%\label{eq:F_domain} \\
		%&\text{$F$ is holomorphic in $\mathbb{D}$ and continuous in $\bar{\mathbb{D}}$} 
		%\\
		%%\label{eq:F_cont} \\
		%&F(z_i) = w_i \qquad (i=1,\dots,n). 
		%%\label{eq:F_int} 
		%\end{align*}
	}
\end{problem}

We solve Problem \ref{prob:interpolation}
by reducing it to the Nevanlinna-Pick interpolation.
To this effect,
we need a conformal map from 
%$\mathbb{G}$
%defined by
%\begin{equation*}
$
\mathbb{G}
% :=
%\mathbb{C}~\backslash~\{(-\infty, 1/\underline \theta] \cup [1/\overline \theta,\infty)\}
$
%\end{equation*}
to $\mathbb{D}$.
In \cite{Olbrot1994}, 
\cite[Section 4.1]{foias1996},
such a conformal map $\phi$ is given in
\eqref{eq:conformal_map}.

Using
the conformal map defined in \eqref{eq:conformal_map},
we see that
Problem \ref{prob:interpolation} can be reduced to
the Nevanlinna-Pick interpolation problem.
\begin{lemma}
	\label{thm:ReductionNPInt}
	{\it
		Problem \ref{prob:interpolation} is solvable if and only if
		the following Nevanlinna-Pick interpolatoin problem is solvable:
		Find a function $g \in \mathcal{RH}^{\infty}$ 
		such that
		\begin{align} \label{eq:G_int_cond}
		g:~\bar{\mathbb{D}} \to \mathbb{D} \text{~~~and~~~}
		%\item $G$ is holomorphic in $\mathbb{D}$ and continuous in $\bar{\mathbb{D}}$, and
		%\vspace{-10pt}\\
		g(z_i) = \phi(w_i) \quad (i=1,\dots,n).
		\end{align}
	}
\end{lemma}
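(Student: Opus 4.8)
The plan is to exploit that $\phi$ in \eqref{eq:conformal_map} is a biholomorphic bijection from $\mathbb{G}$ onto $\mathbb{D}$, so that precomposition with $\phi$, respectively $\phi^{-1}$, should turn a solution of Problem \ref{prob:interpolation} into a solution of the Nevanlinna--Pick problem \eqref{eq:G_int_cond} and back. The two implications are not symmetric, however: $\phi^{-1}$ is the rational function in \eqref{eq:phi_inv}, whereas $\phi$ itself is only algebraic (it contains a square root), so one direction is a pure substitution while the other needs an extra rationalization step.

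For the ``if'' part I would take $g\in\mathcal{RH}^{\infty}$ solving \eqref{eq:G_int_cond} and put $f:=\phi^{-1}\circ g$. Since $g$ maps $\bar{\mathbb{D}}$ into the \emph{open} disc, $g(\bar{\mathbb{D}})$ is a compact subset of $\mathbb{D}$ bounded away from $\mathbb{T}$; as $\phi^{-1}$ is holomorphic on $\mathbb{D}$ and its only singularities are poles on $\mathbb{T}$, it follows that $f$ has no poles in $\bar{\mathbb{D}}$. Rationality of $f$ is immediate from that of $\phi^{-1}$ and $g$, so $f\in\mathcal{RH}^{\infty}$; moreover $f(\bar{\mathbb{D}})=\phi^{-1}\bigl(g(\bar{\mathbb{D}})\bigr)\subset\mathbb{G}$ and $f(z_i)=\phi^{-1}(\phi(w_i))=w_i$, so $f$ solves Problem \ref{prob:interpolation}.

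For the ``only if'' part I would take $f\in\mathcal{RH}^{\infty}$ solving Problem \ref{prob:interpolation} and form $g_0:=\phi\circ f$. Because $f(\bar{\mathbb{D}})\subset\mathbb{G}$ is compact and $\phi$ is holomorphic on $\mathbb{G}$, the function $g_0$ is holomorphic on $\mathbb{D}$, continuous on $\bar{\mathbb{D}}$, maps $\bar{\mathbb{D}}$ into $\mathbb{D}$ with $\sup_{\bar{\mathbb{D}}}|g_0|<1$, and satisfies $g_0(z_i)=\phi(w_i)$. The one thing to fix is that $g_0$ need not be rational, since the square root in \eqref{eq:conformal_map} generally does not collapse to a rational function of $z$. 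I would remove this defect by invoking classical Nevanlinna--Pick theory: a finite interpolation problem that admits a bounded analytic solution with supremum norm strictly below $1$ also admits a rational strictly contractive solution $g\in\mathcal{RH}^{\infty}$, for instance one generated by the Schur--Nevanlinna algorithm, and this $g$ solves \eqref{eq:G_int_cond}.

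I expect this rationalization to be the main obstacle, since it is the only part that is not a formal change of variables; related care is needed to handle interpolation nodes lying on $\mathbb{T}$ (such as $z=1$ in the application to Lemma \ref{thm:SStoInt}), which is harmless here only because the prescribed values $\phi(w_i)$ lie strictly inside $\mathbb{D}$. One should also double-check in both directions that the compact images $f(\bar{\mathbb{D}})$ and $g(\bar{\mathbb{D}})$ stay clear of the slits bounding $\mathbb{G}$ and of the circle $\mathbb{T}$, respectively, so that the branch cut of $\phi$ and the poles of $\phi^{-1}$ are never encountered on $\bar{\mathbb{D}}$.
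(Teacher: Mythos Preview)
Your proposal is correct and follows essentially the same approach as the paper: both directions use the conformal bijection $\phi$, the ``if'' part substitutes $f=\phi^{-1}\circ g$ and uses the rationality of $\phi^{-1}$ from \eqref{eq:phi_inv}, and the ``only if'' part forms $g_0=\phi\circ f$, observes it may fail to be rational, and then invokes the Schur--Nevanlinna algorithm to replace it by a rational solution. Your treatment is in fact more explicit than the paper's about why the compositions stay analytic on $\bar{\mathbb{D}}$ (the compactness arguments) and about the boundary interpolation nodes; these are genuine points the paper glosses over.
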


\begin{proof}
Let $f$ be a solution to Problem \ref{prob:interpolation}, and
set $g=\phi \circ f$. Then we derive
the equivalence between \eqref{eq:F_int_cond} and \eqref{eq:G_int_cond}.
Since $\phi$ is a conformal map,
we see that $f$ is holomorphic in $\bar {\mathbb{D}}$ if and only if
$g$ is so.

Regarding the rationality of solutions, 
since $\phi^{-1}$ is given by
a rational function in \eqref{eq:phi_inv}
%$\phi^{-1}(s) = 4s/(\overline \theta (s+1)^2 - \underline \theta (s-1)^2)$
and since $f = \phi^{-1} \circ g$, 
it follows that $f$ is rational for every rational solution $g$.

Conversely, $\phi \circ f$ may not be rational for a
rational function $f$. 
However, $\phi \circ f$ is holomorpic in $\bar {\mathbb{D}}$, 
and hence it is an irrational
solution of
the Nevanlinna-Pick interpolation problem.
If the Nevanlinna-Pick interpolation problem is solvable, 
then there exists a rational solution,
which can be obtained from the Schur-Nevanlinna algorithm, e.g., in
\cite{luxemburg2010, wakaiki2012}
and the explicit formula of the solutions in 
\cite[Sec. 2.11]{foias1996}.
We therefore have the desired rational function $g$.
\end{proof}

Interpolating functions and a conformal map in Lemma \ref{thm:ReductionNPInt} 
are illustrated by the commutative diagram in Fig.~\ref{fig:diagram}.

\begin{figure}[tb]
	\centering
	\includegraphics[width = 6cm,clip]{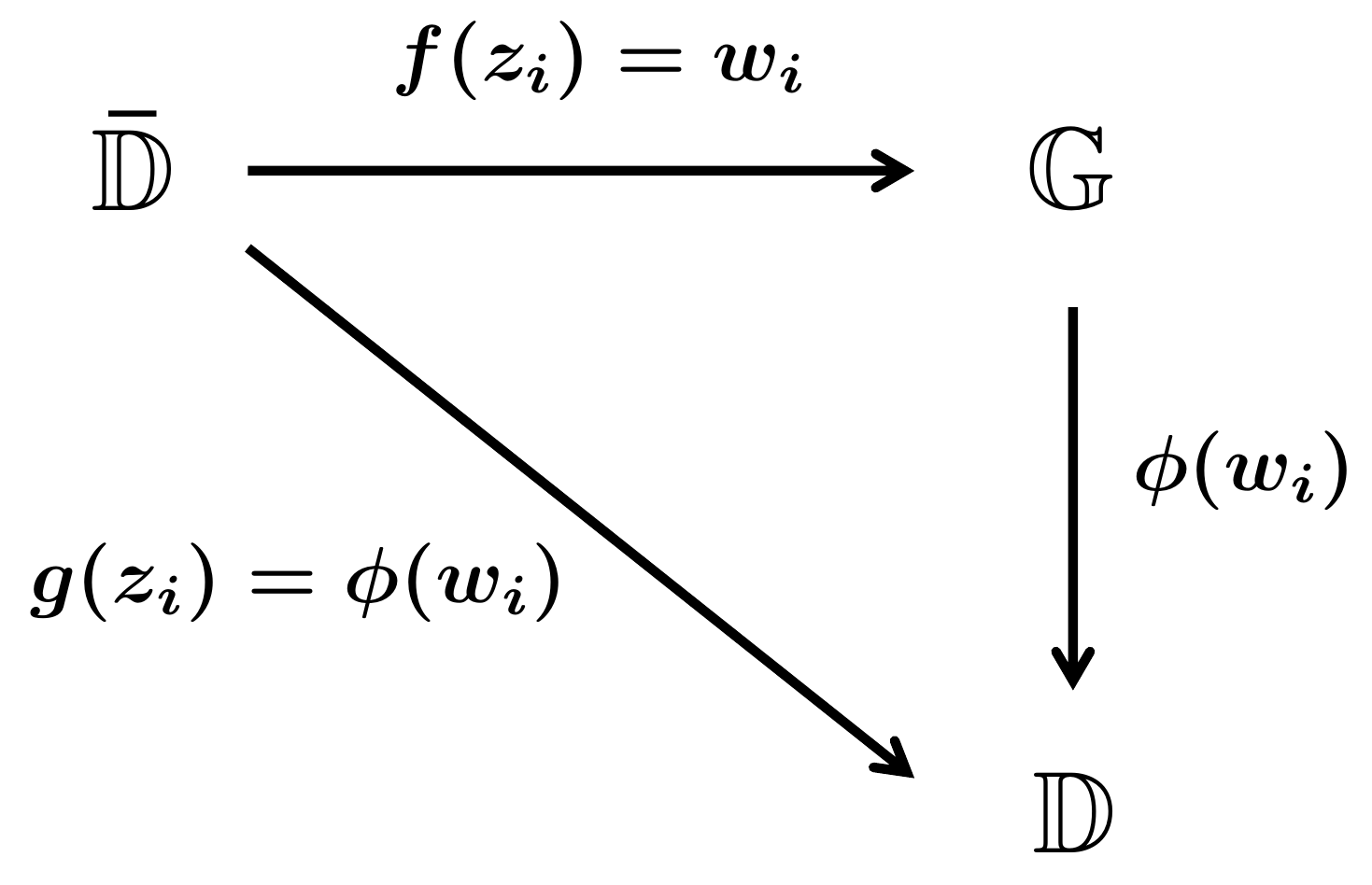}
	\caption{ Interpolating functions $f,g$ and a conformal map $\phi$.}
	\label{fig:diagram}
\end{figure}

Finally we obtain the proof of Theorem \ref{thm:scalar_LTIcontroller}.

\noindent\hspace{1em}{\itshape 
	{\bf Proof of Theorem \ref{thm:scalar_LTIcontroller}:} }
Lemmas \ref{thm:SStoInt} and \ref{thm:ReductionNPInt} show that
the stabilization problem for systems with clock offsets
can be reduced to 
the Nevanlinna-Pick interpolation problem 
with a boundary condition; see, e.g., 
\cite[Sec. 2.11]{foias1996} for the interpolation problem.
We therefore obtain a necessary and sufficient condition 
based on the positive definiteness of
the associated Pick matrix:
\begin{equation}
\label{eq:PickMatrix}
\begin{bmatrix}
1 & 1 \\
1 & \frac{1 -  | \phi(-1)|^2}{1 - 1/\lambda^2}
\end{bmatrix} > 0.
\end{equation}
%\begin{equation*}
%\kappa =
%\frac{1 - \sqrt{ \frac{1+\overline \theta s}{1+\underline \theta s} }}
%{1 + \sqrt{ \frac{1+\overline \theta s}{1+\underline \theta s} }}.
%\end{equation*}
From the Schur complement formula, \eqref{eq:PickMatrix} is equivalent to
\begin{equation}
\label{eq:kappa_p_1}
\frac{1 - |\phi(-1)|^2}{1 -1/\lambda^2} > 1.
\end{equation}
We see that \eqref{eq:kappa_p_1} is 
\begin{equation}
\label{eq:p_sqrt}
\lambda^2 
\left|
1 - \sqrt{ \frac{1+\overline \theta }{1+\underline \theta } }
\right|^2 < 
\left|
1 + \sqrt{ \frac{1+\overline \theta }{1+\underline \theta }}
\right|^2.
\end{equation}
Since $-1 < \underline \theta < \overline \theta$, 
it follows that \eqref{eq:p_sqrt} is equivalent to
%\begin{equation*}
%p\left(\sqrt{1+\overline \theta } - \sqrt{1+\underline \theta }
%\right) 
%< \sqrt{1+\overline \theta } + \sqrt{1+\underline \theta }
%\end{equation*}
%and hence to
\begin{equation}
\label{eq:1_theta}
(p-1)^2 (1+\overline \theta ) < (p+1)^2 (1+\underline \theta ).
\end{equation}
After rearranging this, we derive \eqref{eq:M1M2cond}.
\hspace*{\fill} $\blacksquare$

\noindent\hspace{1em}{\itshape {\bf Proof of Corollary 
		\ref{coro:maximum_length}:} }
Substituting $\underline \theta = e^{-a\overline{\Delta}} - 1$ and
$\overline \theta  = e^{-a\underline{\Delta}} - 1$ into \eqref{eq:1_theta}, 
we obtain
\begin{equation*}
(\lambda-1)^2 e^{-a\underline{\Delta}} < (\lambda+1)^2 e^{-a\overline{\Delta}},
\end{equation*}
and hence
\begin{equation*}
e^{a(\overline{\Delta} - \underline{\Delta})} < 
\left( \frac{\lambda+1}{\lambda-1} \right) ^2.
\end{equation*}
Taking the logarithm function of both sides gives the desired conclusion.
\hspace*{\fill} $\blacksquare$

\subsection{Comparison with time-invariant/2-periodic 
	static controllers}
The proposition below gives the exact bound
on the clock offset that could be obtained using a static
stabilizer for a scalar plant.
\begin{proposition}
	\label{prop:static}
	{\it
		Consider the extended system \eqref{eq:scalar_sys}.
		Define $\lambda := e^{ah}$, $\theta := e^{-a\Delta} - 1$, and
		$S_{\max}$ as in \eqref{eq:Imax_def}.
		There exists  a static output feedback controller
		$u_k = -Ky_k$ that achieves $\lim_{k \to \infty}\xi_k = 0$
		for every $\theta \in S \subset S_{\max}$ 
		if and only if
		\begin{equation}
		\label{eq:static_bound}
		S \subset \left(-\frac{1}{\lambda},~\frac{1}{\lambda}\right).
		\end{equation}
	}
\end{proposition}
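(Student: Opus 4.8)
The plan is to reduce the claim to a Schur-stability test for an explicit $2\times2$ matrix. Closing the loop in \eqref{eq:scalar_sys} with the static output feedback $u_k=-Ky_k$ (so $y_k=[\,0\ \ 1\,]\xi_k$) gives the autonomous recursion $\xi_{k+1}=M_\theta(K)\xi_k$, where $M_\theta(K)$ is the corresponding closed-loop matrix; since this is linear and time invariant, $\xi_k\to0$ for every initial state if and only if $M_\theta(K)$ is Schur stable. I would compute $\operatorname{tr}M_\theta(K)=\lambda+K\bigl(1-\lambda(1+\theta)\bigr)$ and $\det M_\theta(K)=-\lambda\theta K$ (using $b/a=1$, which is harmless since rescaling this ratio only rescales $K$), so the characteristic polynomial is $p(z)=z^2+a_1z+a_0$ with $a_1=-\lambda-K\bigl(1-\lambda(1+\theta)\bigr)$ and $a_0=-\lambda\theta K$.

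Next I would apply the elementary criterion that both roots of $z^2+a_1z+a_0$ lie in the open unit disc if and only if $|a_0|<1$ and $|a_1|<1+a_0$. Rewriting the latter as the pair of conditions $p(1)>0$, $p(-1)>0$ and using the cancellation $\bigl(1-\lambda(1+\theta)\bigr)+\lambda\theta=1-\lambda$, one finds $p(1)=(1-\lambda)(1-K)$ and $p(-1)=(1+\lambda)-K(2\lambda\theta+\lambda-1)$. Hence $M_\theta(K)$ is Schur stable if and only if the three scalar inequalities
\[
\lambda|\theta|\,|K|<1,\qquad K>1,\qquad K(2\lambda\theta+\lambda-1)<1+\lambda
\]
hold; here $K>1$ comes from $\lambda=e^{ah}>1$, and the third inequality is automatic whenever $2\lambda\theta+\lambda-1\le0$.

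Both implications are then read off from these inequalities. For necessity, if a single $K$ makes $M_\theta(K)$ Schur stable for all $\theta\in S$ then $K>1$, and then the first inequality forces $|\theta|<1/(\lambda K)<1/\lambda$ for every $\theta\in S$, i.e. $S\subset(-1/\lambda,1/\lambda)$. For sufficiency, take $S=[\underline\theta,\overline\theta]$ to be the closed interval of \eqref{eq:Delta_theta} with $S\subset(-1/\lambda,1/\lambda)$: the first inequality is tightest at the endpoint of largest modulus $\rho:=\max_{\theta\in S}|\theta|<1/\lambda$ and requires $K<1/(\lambda\rho)$; since the left side of the third inequality increases in $\theta$, it is tightest at $\theta=\overline\theta$ and, when $2\lambda\overline\theta+\lambda-1>0$, requires $K<(1+\lambda)/(2\lambda\overline\theta+\lambda-1)$. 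Both upper bounds strictly exceed $1$ exactly because $\rho<1/\lambda$ and $\overline\theta<1/\lambda$, so the interval $1<K<\min\bigl\{1/(\lambda\rho),\,(1+\lambda)/(2\lambda\overline\theta+\lambda-1)\bigr\}$ (discarding a bound that is not active, if any) is nonempty, and any such $K$ stabilizes $M_\theta(K)$ simultaneously for all $\theta\in S$.

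I do not anticipate a deep obstacle; the part needing care is the bookkeeping in the sufficiency direction --- checking that the $K$-constraint intervals, whose shape depends on the signs of $\theta$ and of $2\lambda\theta+\lambda-1$, overlap above $K=1$, and disposing of the degenerate subcases ($\theta=0$, $S$ lying entirely on one side of the origin, or $2\lambda\overline\theta+\lambda-1\le0$). It is also worth recording that $\theta=-1$, where \eqref{eq:scalar_sys} loses detectability, never occurs since $S\subset S_{\max}\subset(-1,\infty)$, so that edge case needs no separate treatment.
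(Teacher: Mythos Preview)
Your proposal is correct and follows essentially the same approach as the paper: close the loop, compute the characteristic polynomial of the resulting $2\times2$ matrix, and apply the Jury/Schur criterion to obtain the three scalar inequalities, from which both directions are read off. The only cosmetic difference is that the paper uses $a_0<1$ together with $p(-1)>0$ to bound $\theta$ from below and above separately, whereas you use $|a_0|<1$ directly to get the two-sided bound $|\theta|<1/(\lambda K)$; your explicit construction of an admissible $K$-interval in the sufficiency direction is slightly more detailed than the paper's limit argument $K\to1^+$, but the substance is the same.
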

\begin{proof}
	Without loss of generality, we assume that $b/a = 1$.
	Introducing the static controller
	\[
	u_k=-K
	\begin{bmatrix}
	0 & 1
	\end{bmatrix}
	\xi_k
	\]
	into the extended system \eqref{eq:scalar_sys}, we have
	\begin{align}
	\xi_{k+1}=\left[
	\begin{array}{cc}
	-\lambda \theta & - \lambda\theta(1-K)\\
	\lambda (1+\theta) &  \lambda (1+\theta)(1-K)+K
	\end{array}\right]\xi_k.\label{extsys_static}
	\end{align}
	From the Jury stability criterion, the above system is stable if and only
	if the following three inequalities hold:
	\begin{align}
	(\lambda-1)(K-1) &>0\label{staticJury-1}\\
	\lambda\theta K+1 &>0\label{staticJury-2}\\
	2\lambda\theta K+(\lambda-1)K-\lambda-1 &<0\label{staticJury-3}.
	\end{align}
	
	% We first aim to show the necessity.
	From (\ref{staticJury-1}) and $\lambda>1$, we have
	\begin{align}
	K>1.\label{staticJury-1-1}
	\end{align}
	Therefore \eqref{staticJury-2}
	and \eqref{staticJury-3} give a lower and upper bound on $\theta$,
	respectively:
	\begin{align}
	\label{staticJury-2-1}
	\theta>-\frac{1}{\lambda K},\quad
	\theta<\frac{p+1}{2\lambda K}-\frac{\lambda-1}{2\lambda}.
	\end{align}
	Notice that the lower (upper)
	bound in (\ref{staticJury-2-1}) is increasing (decreasing) with
	respect to $K$.
	% The infimum and the supremum of the bounds under (\ref{staticJury-1-1})
	% are $-1/p$ and $1/p$, respectively.
	Hence these bounds take the inifimum and the supremum under
	(\ref{staticJury-1-1}) when $K \to 1$, and
	\begin{align}
	\lim_{K \to 1}\left(-\frac{1}{\lambda K}\right)
	=-\frac{1}{\lambda},~~
	\lim_{K \to 1}\left(\frac{p+1}{2\lambda K}-
	\frac{\lambda-1}{2\lambda}\right)
	=\frac{1}{\lambda}.\notag
	\end{align}
	Thus,
	there exists a
	static controller $K$ that stabilizes 
	the extended system \eqref{eq:scalar_sys}
	for all
	$\theta\in S$ if and only if $S$ satisfies \eqref{eq:static_bound}.
	%the range $S$ of $\theta$ must be inside of 
	%$\left(-1/\lambda,1/\lambda\right)$.
	%Conversely, if $S\subset\left(-1/\lambda,1/\lambda\right)$, 
	%then there exists a
	%gain $K$ such that (\ref{staticJury-1-1}), (\ref{staticJury-2-1}) hold 
	%for all
	%$\theta\in S$.
\end{proof}

The next result provides a sufficient condition on the offset range
$(\underline \theta, \overline \theta)$
for the existence of time-varying
2-periodic controllers that 
stabilize the extended system \eqref{eq:scalar_sys}.
\begin{proposition}
	\label{prop:static_periodic}
	{\it
		Consider the extended system \eqref{eq:scalar_sys}.
		Define $\lambda := e^{ah}$, $\theta := e^{-A\Delta} - 1$,
		and $S_{\max}$ as in \eqref{eq:Imax_def}.
		There exists a 2-periodic static controller
		\begin{equation}
		\label{eq:class_periodic}
		\begin{bmatrix}
		u_k \\
		u_{k+1}
		\end{bmatrix}
		=
		-
		\begin{bmatrix}
		K_1 & 0 \\
		0 & K_2
		\end{bmatrix}
		\begin{bmatrix}
		y_k \\
		y_{k+1}
		\end{bmatrix}
		\end{equation}
		that achieves $\lim_{k \to \infty}\xi_k = 0$
		for every $\theta \in S \subset S_{\max}$
		if
		\begin{equation}
		\label{eq:2peridic_bound}
		S \subset 
		\left(
		-\frac{1}{\lambda\kappa},~
		\frac{1}{\lambda\kappa}
		\right),
		\end{equation}
		where 
		\begin{equation*}
		\label{def:kappa}
		\kappa := 
		\frac{\sqrt{\lambda^2 + 1} \sqrt{\lambda^2 - 4\sqrt{2} +5} - 2
			\left(\sqrt{2} - 1
			\right)\lambda}{\lambda^2 - 1} < 1.
		\end{equation*}
	}
\end{proposition}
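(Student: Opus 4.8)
The plan is to follow the same route as in the proof of Proposition~\ref{prop:static}: substitute the 2-periodic controller~\eqref{eq:class_periodic} into the extended system~\eqref{eq:scalar_sys}, lift the dynamics to a single step over two ZOH periods, and then impose Schur stability of the resulting monodromy matrix. Concretely, writing $\xi_{k+2} = M(\theta, K_1, K_2)\,\xi_k$, where $M$ is the product of the two one-step closed-loop matrices obtained as in \eqref{extsys_static} with gains $K_1$ and $K_2$ respectively, stabilization for all $\theta \in S$ reduces to finding $(K_1, K_2)$ such that $M(\theta, K_1, K_2)$ is Schur stable for every $\theta \in S$. Because the system has state dimension two and loses detectability only at $\theta = -1$, I expect $M$ to have one eigenvalue identically zero (reflecting the deadbeat-type structure already visible in $\bar F_\Delta - L_\Delta \bar H_\Delta$ having a nilpotent form), so that Schur stability of $M$ is governed by the single nonzero eigenvalue, i.e.\ by the trace of $M$ (or a suitable scalar quantity). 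This collapses the problem to a scalar inequality in $\theta$, $K_1$, $K_2$.

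The next step is to optimize over the gains. For each fixed $\theta$, the admissible range of $(K_1, K_2)$ keeping that scalar eigenvalue inside the unit disc is an interval; but since we must use \emph{one} pair $(K_1, K_2)$ uniformly over $S = [\underline\theta, \overline\theta]$, I would parametrize the set of $\theta$ stabilized by a given $(K_1, K_2)$ and maximize its length (or, by the symmetry $-\underline\theta = \overline\theta$ that the claimed bound \eqref{eq:2peridic_bound} suggests, maximize the symmetric radius). A natural ansatz, paralleling the $K \to 1$ limiting argument in Proposition~\ref{prop:static}, is to look for the gains that make the nonzero eigenvalue of $M$ vanish identically in $\theta$ to first order, or that equalize the eigenvalue magnitude at the two endpoints $\pm r$ of the interval; the extremal $r$ is then exactly $1/(\lambda\kappa)$. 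The quantity $\kappa$ in the statement, with its $\sqrt{2}$-dependence, strongly hints that the optimal gains satisfy an algebraic relation whose discriminant produces $\sqrt{\lambda^2 - 4\sqrt 2 + 5}$; I would reverse-engineer the gains from the target bound, verify $\kappa < 1$ directly (so that the 2-periodic bound strictly improves on the static bound $1/\lambda$ from Proposition~\ref{prop:static}), and then confirm sufficiency by exhibiting this explicit $(K_1, K_2)$ and checking the scalar Schur condition holds for all $\theta$ with $|\theta| < 1/(\lambda\kappa)$.

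The main obstacle I anticipate is the algebra of the two-step monodromy matrix $M$: the one-step matrices in \eqref{extsys_static} are already mildly unpleasant, and their product introduces cross terms coupling $K_1$, $K_2$, and $\theta$ quadratically. Extracting the nonzero eigenvalue cleanly---and confirming that the other eigenvalue is identically zero (or at least harmlessly small)---requires exploiting the rank-one-like structure of the correction terms $-\lambda\theta(1-K)$ and $\lambda(1+\theta)(1-K)+K$ in the second column. If the zero-eigenvalue structure does not hold exactly, the problem becomes a genuine $2\times 2$ Schur test via the Jury conditions on $M$, which is still finite but considerably messier; I would then fall back on the Jury inequalities for $\operatorname{tr} M$ and $\det M$ and optimize numerically-guided-but-exact over $(K_1, K_2)$ to land on the closed form for $\kappa$. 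Since the proposition only claims a \emph{sufficient} condition (unlike the exact bounds elsewhere), it suffices to produce one good gain pair, which makes the verification direction the only part that must be pushed through in full, while the derivation of $\kappa$ can be presented as the motivated choice behind that pair.
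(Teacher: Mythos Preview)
Your primary route rests on an assumption that does not hold: the two-step monodromy matrix $M(\theta,K_1,K_2)$ does \emph{not} have an identically zero eigenvalue. A one-line check shows that the one-step closed-loop matrix in \eqref{extsys_static} has determinant $-\lambda\theta K$, so $\det M = \lambda^2\theta^2 K_1 K_2$, which is nonzero whenever $\theta\neq 0$ and $K_1K_2\neq 0$. The nilpotent structure you cite from $\bar F_\Delta - L_\Delta \bar H_\Delta$ belongs to the observer-error dynamics, not to the static-output-feedback closed loop, and does not transfer. Consequently the problem does not collapse to a single scalar eigenvalue condition; you are forced into your fallback, the full $2\times 2$ Jury test on $M$.

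The fallback is the right framework, and it is what the paper does, but your description of it (``optimize numerically-guided-but-exact over $(K_1,K_2)$'') misses the specific mechanism that produces $\kappa$. Writing $\zeta:=K_1K_2$ and $\eta:=K_1+K_2$, the characteristic polynomial of $M$ has $\alpha_0=\zeta\lambda^2\theta^2$ and $\alpha_1$ affine in $\theta$. The three Jury inequalities then read: (i) $|\theta|<1/(\lambda\sqrt{\zeta})$; (ii) an affine bound on $\theta$; (iii) a quadratic-in-$\theta$ inequality. The paper's key move---and the step your proposal does not anticipate---is to restrict to gains with $\zeta>0$, $2\zeta-\eta<0$, and then impose two structural constraints: first, choose $\eta$ in terms of $\zeta$ so that the lower bound from (ii) coincides with the lower bound $-1/(\lambda\sqrt{\zeta})$ from (i); second, require the discriminant of the quadratic in (iii) to be negative, so that (iii) holds for \emph{all} real $\theta$ and drops out entirely. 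These two constraints together reduce the admissible $\zeta$ to the interval $(\kappa^2,1)$, and letting $\zeta\downarrow\kappa^2$ in (i) gives the claimed radius $1/(\lambda\kappa)$. Neither the ``equalize eigenvalue magnitude at endpoints'' heuristic nor a first-order vanishing ansatz leads to this; the $\sqrt{2}$ in $\kappa$ comes specifically from the discriminant condition on (iii).
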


\begin{proof}
This is also based on the Jury stability criterion.

Without loss of generality, we assume that $b/a = 1$.
With the 2-periodic controller \eqref{eq:class_periodic}, 
the extended system \eqref{eq:scalar_sys} can be written as
\begin{align}
\xi_{k+2}=
\left(F^2
-\left[
\begin{array}{cc}
F G & G\\
\end{array}\right]
\left[
\begin{array}{cc}
1 & 0\\
-K_2 H G & 1
\end{array}\right]
\left[
\begin{array}{cc}
K_1 & 0\\
0 & K_2
\end{array}\right]
\left[
\begin{array}{c}
H\\
H F
\end{array}\right]
\right)\xi_k,\label{extsys_2p}
\end{align}
where
\begin{align}
F:=\left[
\begin{array}{cc}
-\lambda\theta & -\lambda\theta\\
\lambda(1+\theta) & \lambda(1+\theta)
\end{array}\right],\quad
G:=\left[
\begin{array}{c}
-\lambda\theta \\
\lambda(1+\theta)-1
\end{array}\right],\quad
H:=\left[
\begin{array}{cc}
0 & 1\\
\end{array}\right].\notag
\end{align}
% See the ACC paper for the definitions of $F_\Delta, G_\Delta$, and $H_\Delta$.

Denote the characteristic polynomial $\rho(\lambda)$ of the matrix in
(\ref{extsys_2p}) by $\rho(\lambda)=\lambda^2+\alpha_1\lambda+\alpha_0$.
The coefficients $\alpha_0, \alpha_1$ are given as
\begin{align}
&\alpha_0:=\zeta p^2\theta^2,\notag\\
&\alpha_1:=
-\zeta \lambda^2 \theta^2
+(2\zeta-\eta)(1-\lambda)\lambda\theta
-(\zeta-\eta+1)\lambda^2+(2\zeta-\eta)\lambda-\zeta,\notag
\end{align}
where $\zeta:=K_1K_2$ and $\eta:=K_1+K_2$.
From the Jury stability test, we have that stability of (\ref{extsys_2p}) is
equivalent to the following three inequalities i)--iii):
i) The first condition is given by
\begin{align}
&1-\alpha_0
% -2\zeta p^2\theta^2 + (1-p)\theta\\
=1-\zeta \lambda^2\theta^2>0.
% &\Leftrightarrow\
% -\frac{1}{\sqrt{|\zeta|}p}<\theta<\frac{1}{\sqrt{|\zeta|}p}.
\label{2pJury-1}
\end{align}

ii) Furthermore,
\begin{align}
1+\alpha_1+\alpha_0
% &1 + (2\zeta-\eta)(1-p)p\theta - (p^2+1)\zeta + (2\zeta-\eta)p
% +(\eta-1)p^2\\
% &(2\zeta-\eta)(1-p)p\theta-(p^2+1)\zeta + (2\zeta-\eta)p
% +(\eta-1)p^2+1\notag\\
&=(2\zeta-\eta)(1-\lambda)\lambda\theta
-(\zeta-\eta-1)\lambda^2+(2\zeta-\eta)\lambda-\zeta+1\notag\\
&=(1-\lambda)\left\{
(2\zeta-\eta)\lambda\theta
+(\zeta-\eta+1)\lambda-\zeta+1
\right\}>0.\notag
\end{align}
This inequality is equivalent to the following inequalities:
\begin{align}
\begin{cases}
\theta <  -\frac{ (\zeta-\eta+1)\lambda-\zeta+1 }{ (2\zeta-\eta)\lambda }
& \text{if }2\zeta-\eta>0\\
\zeta > 1
& \text{if }2\zeta-\eta=0\\
\theta >  -\frac{ (\zeta-\eta+1)\lambda-\zeta+1 }{ (2\zeta-\eta)\lambda }
& \text{if }2\zeta-\eta<0.
\end{cases}
\label{2pJury-2}
\end{align}

iii) Finally,
\begin{align}
1-\alpha_1+\alpha_0
=2\zeta \lambda^2\theta^2
+\lambda(\lambda-1)(2\zeta-\eta)\theta
+(\zeta-\eta+1)\lambda^2-(2\zeta-\eta)\lambda+\zeta+1 
>0.
\label{2pJury-3}
\end{align}
% Find $(\zeta,\eta)$ such that the lower bounds on $theta$ from (\ref{)

In what follows, we fix a controller, or $(\zeta,\eta)$, and then evaluate the
range of permissible $\theta$ with the controller.
Suppose that $\zeta>0$ and $2\zeta-\eta<0$.
For such parameters $(\zeta,\eta)$, (\ref{2pJury-1}) and (\ref{2pJury-2}) are
reduced to
\begin{align}
&-\frac{1}{\sqrt{\zeta}\lambda} <\theta< \frac{1}{\sqrt{\zeta}\lambda}
\label{2pJury-1-1}
\end{align}
and
\begin{align} 
\theta > -\frac{ (\zeta-\eta+1)\lambda-\zeta+1 }
{ (2\zeta-\eta)\lambda },
\label{2pJury-2-1}
\end{align}
respectively.
Select the parameters $(\zeta,\eta)$ so that the lower bounds on $\theta$ in
(\ref{2pJury-1-1}) and (\ref{2pJury-2-1}) coincide with each other.
That is, $\zeta$ and $\eta$ are chosen to satisfy the following relation:
\begin{align}
\frac{1}{\sqrt{\zeta}}=\frac{ (\zeta-\eta+1)\lambda-\zeta+1 }{2\zeta-\eta}.
\label{LBcond}
\end{align}
Moreover, we select $(\zeta,\eta)$ so that (\ref{2pJury-3}) holds for any
$\theta \in \mathbb{R}$.
This implies that
\begin{align}
(\lambda-1)^2(2\zeta-\eta)^2
-8\zeta \left\{(\zeta-\eta+1)p^2-(2\zeta-\eta)\lambda+\zeta+1\right\}<0.
\label{2pJury3-1}
\end{align}
With the above class of controllers,
where $\zeta$ and $\eta$ satisfy $\zeta>0$, $2\zeta-\eta<0$, (\ref{LBcond}),
and (\ref{2pJury3-1}), the conditions (\ref{2pJury-1})--(\ref{2pJury-3}) for
stability hold if and only if
(\ref{2pJury-1-1}) follows.
We will show that $\zeta$ and $\eta$ satisfy $\zeta>0$, $2\zeta-\eta<0$, (\ref{LBcond}),
and (\ref{2pJury3-1}) if and only if
\begin{align}
&\zeta\in(\kappa^2,1),\label{class_gamma}\\
&\eta=
\frac{\sqrt{\zeta}\left\{(\lambda-1)\zeta-2\sqrt{\zeta}+\lambda+1\right\}}
{\sqrt{\zeta}\lambda-1},\label{class_eta}
\end{align}
where $\kappa$ is defined in \eqref{def:kappa}.
We then analyze the bounds on $\theta$ followed by (\ref{2pJury-1-1}) when the
controller belongs the class characterized by (\ref{class_gamma}) and
(\ref{class_eta}).

The equation (\ref{class_eta}) is obtained from (\ref{LBcond}) with
the fact that $\sqrt{\zeta}\lambda-1\neq0$.

We now aim to show (\ref{class_gamma}).
Substituting (\ref{LBcond}) into (\ref{2pJury3-1}) and using $2\zeta-\eta<0$,
we have that (\ref{2pJury3-1}) is satisfied if and only if
\begin{align}
\eta<2\left(\zeta+2\sqrt{\zeta}\frac{\sqrt{2}-1}{\lambda-1}\right).
\label{2pJury3-2}
\end{align}

From (\ref{class_eta}) and (\ref{2pJury3-2}), $\zeta$ satisfies
\begin{align}
\frac{(\sqrt{\zeta}-\kappa)(\sqrt{\zeta}-\kappa')}
{\sqrt{\zeta}\lambda-1}>0,
\label{2pJury-3-3}
\end{align}
where $\kappa'$ is given as
\begin{align}
% \kappa_+:=\frac{\sqrt{p^2+1}\sqrt{p^2-4\sqrt{2}+5}-2(\sqrt{2}-1)p}{p^2-1},\quad
\kappa':=\frac{-\sqrt{\lambda^2+1}\sqrt{\lambda^2-4\sqrt{2}+5}-
	2(\sqrt{2}-1)\lambda}{\lambda^2-1}.
\notag
\end{align}
% Thus, (\ref{2pJury-3-3}) is equivalent to
Note that $\sqrt{\zeta}-\kappa'>0$ from $\lambda > 1$.
Thus, from (\ref{2pJury-3-3}), $\zeta$ satisfies one of the following two cases:
i) $\sqrt{\zeta}>1/\lambda$ and $\sqrt{\zeta}>\kappa$, or
ii) $\sqrt{\zeta}<1/\lambda$ and $\sqrt{\zeta}<\kappa$.
% \begin{itemize}
% 	\item[i)] $\sqrt{\zeta}>1/p$ and $\sqrt{\zeta}>\kappa$,
% 	\item[ii)] $\sqrt{\zeta}<1/p$ and $\sqrt{\zeta}<\kappa$.
% \end{itemize}
A routine calculation shows that $\kappa>1/\lambda$.
Thus, i) is reduced to $\sqrt{\zeta}>\kappa$ and ii) is 
to $\sqrt{\zeta}<1/\lambda$.
On the other hand, from $2\zeta-\eta<0$ and (\ref{class_eta}), it follows that
$\sqrt{\zeta}>1/\lambda$ and $\zeta<1$.
Therefore, we arrive at (\ref{class_gamma}).

Conversely, (\ref{class_gamma}) implies that $\zeta>0$.
Moreover, from (\ref{class_eta}), we have (\ref{LBcond}) and
\begin{align}
2\zeta-\eta
=\frac{\sqrt{\zeta}(\zeta-1)(\lambda+1)}
{\sqrt{\zeta}\lambda-1}.\notag
\end{align}
The right-hand side is negative by (\ref{class_gamma}) and the fact
$\kappa>1/\lambda$.
Note also that (\ref{2pJury3-1}) holds if $\zeta$ and $\eta$ are taken as
(\ref{class_gamma}) and (\ref{class_eta}).

Finally, taking the supremum and the infimum of the upper bound and the lower
bound on $\theta$ in (\ref{2pJury-1-1}) over (\ref{class_gamma}), we conclude
the proof.
\end{proof}

We are now in a position to compare the bounds
\eqref{eq:LTI_bound_M}, \eqref{eq:static_bound}, and
\eqref{eq:2peridic_bound}.
For all $\lambda = e^{ah} > 1$, a routine calculation shows that 
\begin{equation}
\label{eq:comparison_static_periodic}
\frac{1}{\lambda} < \frac{1}{\lambda\kappa} < \frac{2\lambda}{\lambda^2+1}.
\end{equation}
As expected, the offset condition \eqref{eq:static_bound}
for time-invariant static controllers 
results in the smallest range for values of $\theta = 
e^{-a\Delta} - 1$
because the set of all time-invariant static controllers
is a subset of the class of LTI controllers and 
that of 2-periodic static controllers in \eqref{eq:class_periodic}.
On the other hand,
2-periodic static stabilizers do not belong to the class of LTI controllers,
and vice versa.
The second inequality in \eqref{eq:comparison_static_periodic}
always holds for all $\lambda  > 1$, but the bound \eqref{eq:comparison_static_periodic}
is a sufficient condition.
In order to compare 
the ability to robustly stabilize the closed loop of
2-periodic static controllers versus
LTI controllers, we need to do a brute-force computation for 
the exact bound on clock offset that would be allowed by
a 2-periodic static controller. 

\begin{example}
		Consider a scalar plant with ZOH-update period $h = 1$. 
		Fig.~\ref{fig:comparison} shows the maximum offset lengths 
		$\overline \Delta - \underline \Delta$ allowed by
		LTI stabilizers and static ones, which are
		obtained by 
		Theorem \ref{thm:scalar_LTIcontroller}
		and
		Proposition \ref{prop:static}, respectively. 
		The figure also 
		gives a lower bound on the maximum offset length obtained
		using 2-periodic static stabilizers,
		which is derived from Proposition \ref{prop:static_periodic}.
		All lines decreases exponentially as the unstable pole 
		$a$ grows to $\infty$.
		If the unstable pole $a$ is smaller than 0.9, then
		Assumption \ref{ass:sample} gives the bound 
		$\overline{\Delta} - \underline{\Delta} < 2h = 2$.
		We also observe that LTI controllers double the robustness 
		with respect to that
		achieved by time-invariant static controllers.
		For example, for the unstable pole $a = 1$,
		the maximum offset length by LTI controllers
		is $\overline \Delta - \underline \Delta = 1.544$,
		whereas that by time-invariant static controllers is 
		$\overline \Delta - \underline \Delta = 0.7719$. 
		
		\begin{figure}[bt]
			\centering
			\includegraphics[width = 9.5cm]{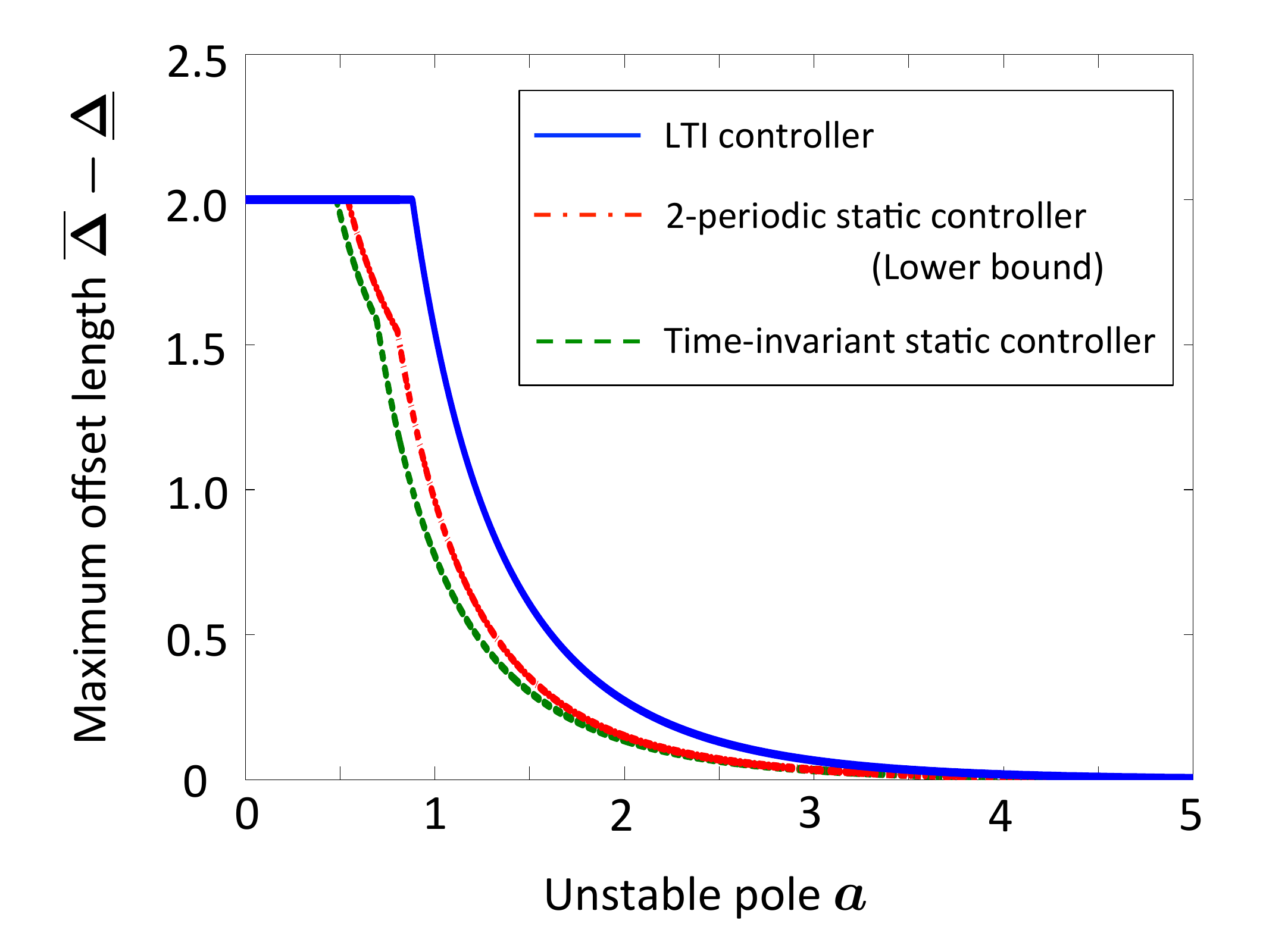}
			\caption{Comparison among the maximum offset length 
				$\overline \Delta - \underline \Delta$ obtained using 
				each class of controllers ($h=1$): 
				The solid (dashed) line is the exact
				bound on the maximum offset length allowed by LTI 
				(time-invariant static) controllers.
				The dashed-dotted line is a lower bound on the maximum offset length
				by 2-periodic static controllers.}
			\label{fig:comparison}
		\end{figure}
\end{example}

\subsection{Conservativeness due to the small gain theorem}
The sufficient condition in Theorem \ref{thm:suff} with
\[
L(z) := \frac{\lambda z(z-1)}{z-c},\qquad R(\theta) := \theta
\qquad
(c\in \mathbb{C},~|c| > 1)
\] in
\eqref{eq:N_cond}, shows that 
there exists a controller stabilizing $P_{\theta}$
for all $\theta \in (-1/\lambda, 1/\lambda)$.
This bound of $\theta$ 
is the same as that was obtained in
\eqref{eq:static_bound} for a static controller,
which shows that 
the use of the small gain theorem is as conservative as
the restriction of controllers to static gains.

This conservativeness arises from the codomain of
interpolating functions.
For simplicity, let the offset bound 
$[\underline \theta, \overline \theta] = [-1,1]$.
In Theorem \ref{thm:suff},
the stabilization problem we consider is equivalent to
finding $f:\bar{\mathbb{D}} \to \mathbb{D}$ satisfying $f(0)=0$,
$f(1)=0$, $f(1/\lambda)=-1$.
Recall that a necessary and sufficient condition for stabilization in 
Lemma \ref{thm:SStoInt}
is the existence of a function 
$f:\bar{\mathbb{D}} \to \mathbb{G}
=
\mathbb{C} \setminus
\{(-\infty, 1] \cup [1,\infty) \}
$ satisfying the same interpolation
conditions.
The difference between the codomains $\mathbb{D}$ and
$\mathbb{G}$ leads to
conservativeness in the stabilization analysis.

\subsection{Regarding a clock offset as an 
	additive uncertainty}
The transfer function $P_{\theta}$ 
in \eqref{eq:scalar_TF} can be viewed
as a perturbation of the nominal transfer function
$P_0 = (\lambda-1)z/(1-\lambda z)$
by the following additive uncertainty:
%\begin{equation*}
%P_{\theta}(z) - P_0(z)
%%=\theta \frac{pz(z-1)}{1-pz} 
%= m(z)W_{\theta }(z),
%\end{equation*}
\[
P_{\theta}(z) - P_0(z) = m(z)r_{\theta }(z)
\]
where
\[
m(z) :=  \frac{z(\lambda-z)}{1-\lambda z},\quad
r_{\theta }(z) := 
\frac{\theta \lambda(z-1)}{\lambda-z}.
\]
Since 
$|m(e^{j\omega})| = 1$,
%\begin{equation*}
%|P_{\theta}(e^{j\omega}) - P_0(e^{j\omega})| \leq |W_{\hat \theta }(e^{j\omega})|
%\qquad (e^{j \omega} \in \mathbb{T})
%\end{equation*}
it follows that
\[
|P_{\theta}(e^{j\omega}) - P_0(e^{j\omega})| \leq 
|W_{\overline \theta }(e^{j\omega})|
\]
for all $\omega \in [0,2\pi]$ and $\theta \in [-\overline \theta, \overline \theta]$
with $\overline \theta > 0$.
Hence, 
as shown in \cite[Sec. 3.5]{foias1996},
there exists a controller stabilizing $P_{\theta}$
for all $\theta \in [-\overline \theta, \overline \theta]$ 
if $\overline \theta$ satisfies
\begin{equation}
\label{eq:robut_one_block}
\left\|
W_{\overline \theta}(X_0 + Q D_0)
\right\|_{\infty} < 1
\end{equation}
for some $Q \in \mathcal{RH}^{\infty}$,
where the $\mathcal{RH}^{\infty}$ functions
$X_0$ and $D_0$ are part of the coprime factorization 
$P_0 = N_0/D_0$ 
satisfying the Bezout identity $N_0X_0+D_0Y_0 = 1$.
Reducing the problem of finding $Q \in \mathcal{RH}^{\infty}$ with
\eqref{eq:robut_one_block} to the Nevanlina-Pick interpolation problem
as in \cite[Sec. 4.3]{foias1996}, we see that 
$\overline \theta$ satisfies
\eqref{eq:robut_one_block} for some $Q \in \mathcal{RH}^{\infty}$
if and only if
$
\overline \theta <  1/\lambda.
$

This bound derived from the classical approach of $\mathcal{H}^{\infty}$ 
robust control turns out to also coincide with the one
in \eqref{eq:static_bound} obtained for a static controller.
This shows that
the classical $\mathcal{H}^{\infty}$ robust control approach uses
an unnecessarily large class of parametric uncertainty. 
In conjunction with the observation in Section 5.2, 
this discussion implies that
the use of the small gain theorem, the over-approximation of a
offset uncertainty
by an $\mathcal{H}^{\infty}$-additive uncertainty, and
the restriction of controllers to static gains have
the same level of conservativeness for scalar plants.

\section{Concluding Remarks}
We studied the problem of stabilizing systems
in which the sensor and the controller have a constant
clock offset. 
We formulated the problem as the stabilization problem for systems with
parametric uncertainty.
For multi-input systems, we derived 
a sufficient condition that 
is numerically testable,
based on the results of simultaneous stabilization.
For first-order systems,
we obtained
the maximum offset length
that can be allowed by an LTI controller.
However, a full investigation of the problem 
for general-order systems 
and systems with model uncertainty  is still 
an open area for future research.

\end{document}